\global\long\def\R{\mathbb{R}}%
\global\long\def\Ex{\mathbb{E}}%
\global\long\def\cf{\mathcal{F}}%
\global\long\def\cs{\mathcal{S}}%
\global\long\def\opt{\mathrm{OPT}}%
\global\long\def\stg{\textup{\textsc{STGreedy}}}%
\global\long\def\offlinepp{\textsc{PostProcess}}%
\global\long\def\offlinealg{\textup{\textsc{OfflineAlg}}}%
\global\long\def\stream{\textup{\textsc{StreamProcessRandomized}}\xspace}%
\global\long\def\streamcont{\textup{\textsc{StreamProcessExtension}}\xspace}%
\global\long\def\streamdeterministic{\textup{\textsc{StreamProcess}}\xspace}%
\global\long\def\x{\mathbf{x}}%
\global\long\def\one{\mathbf{1}}%
\global\long\def\threshold{\kappa} 
\DeclareMathOperator{\supp}{supp}
\newcommand{\RFunc}{\mathtt{R}}
\newcommand{\groundset}{V}
\newcommand{\realnum}{\mathbb{R}}
\newcommand{\nonnegative}{\realnum_{\geq 0}}
\newcommand{\onevect}[1]{\ensuremath{\mathbf{1}_{#1}}}
\newcommand{\onenorm}[1]{\ensuremath{\left\|{#1}\right\|_1}}
\newcommand{\range}[2]{\ensuremath{\bracks{#1,#2}}}
\newcommand{\cube}[3]{\ensuremath{\range{#1}{#2}^{#3}}}
\newcommand{\Exp}[1]{\ensuremath{\mathbb{E}\bracks{#1}}}
\newcommand{\pr}[1]{\ensuremath{\Pr\bracks{#1}}}
\newcommand{\Func}[3]{\ensuremath{#1\colon#2\rightarrow#3}}
\newcommand{\DeclareAutoPairedDelimiter}[3]{%
  \expandafter\DeclarePairedDelimiter\csname Auto\string#1\endcsname{#2}{#3}%
  \begingroup\edef\x{\endgroup
    \noexpand\DeclareRobustCommand{\noexpand#1}{%
      \expandafter\noexpand\csname Auto\string#1\endcsname*}}%
  \x}
\DeclareAutoPairedDelimiter\power{\lvert}{\rvert}
\DeclareAutoPairedDelimiter{\ceil}{\lceil}{\rceil}
\DeclareAutoPairedDelimiter{\bracks}{[}{]}
\DeclareAutoPairedDelimiter{\curly}{\{}{\}}
\newtheorem*{rep@theorem}{\rep@title}
\newcommand{\newreptheorem}[2]{%
\newenvironment{rep#1}[1]{%
 \def\rep@title{#2 \ref{##1}}%
 \begin{rep@theorem}}%
 {\end{rep@theorem}}}
\newtheorem{theorem}{Theorem}
\newtheorem{lemma}[theorem]{Lemma}
\newtheorem{corollary}[theorem]{Corollary}
\newtheorem{observation}[theorem]{Observation}
\newtheorem{proposition}[theorem]{Proposition}
\begin{document}
\title{An Optimal Streaming Algorithm for Submodular Maximization with a Cardinality Constraint\thanks{An earlier version of this paper appeared in ICALP 2020~\cite{DBLP:conf/icalp/AlalufEFNS20}.}}
\author{
Naor Alaluf\thanks{Department of Mathematics and Computer Science, Open University of Israel, \texttt{naoralaluf@gmail.com}.}
\and
Alina Ene\thanks{Department of Computer Science, Boston University, \texttt{aene@bu.edu}.}
\and
Moran Feldman\thanks{Department of Computer Science, University of Haifa, \texttt{moranfe@cs.haifa.ac.il}.}
\and
Huy L. Nguyen\thanks{Khoury College of Computer and Information Science, Northeastern University, \texttt{hlnguyen@cs.princeton.edu}.}
\and
Andrew Suh\thanks{Department of Computer Science, Boston University, \texttt{asuh9@bu.edu}.}
}
\maketitle

\begin{abstract}
We study the problem of maximizing a \emph{non-monotone} submodular
function subject to a cardinality constraint in the streaming model. Our
main contribution is a single-pass \mbox{(semi-)}\allowbreak streaming algorithm that uses roughly $O(k / \varepsilon^2)$
memory, where $k$ is the size constraint. At the end of the stream,
our algorithm post-processes its data structure using \emph{any offline
algorithm} for submodular maximization, and obtains a solution whose
approximation guarantee is $\frac{\alpha}{1+\alpha}-\varepsilon$, where
$\alpha$ is the approximation of the offline algorithm. If we use
an exact (exponential time) post-processing algorithm, this leads to
$\frac{1}{2}-\varepsilon$ approximation (which is nearly optimal). If we post-process with the algorithm of \cite{buchbinder2019constrained},
that achieves the state-of-the-art offline approximation guarantee of $\alpha=0.385$,
we obtain $0.2779$-approximation in polynomial time, improving
over the previously best polynomial-time approximation of $0.1715$ due to \cite{feldman2018less}. It is also worth mentioning that our algorithm is combinatorial and deterministic, which is rare for an algorithm for non-monotone submodular maximization, and enjoys a fast update time of $O(\frac{\log k + \log (1/\alpha)}{\varepsilon^2})$ per element.
\end{abstract}

\newpage{}
\section{Introduction}

In this paper, we study the problem of maximizing a \emph{non-monotone}
submodular function subject to a cardinality (size) constraint in the streaming
model. This problem captures problems of interest in a wide-range
of domains, such as machine learning, data mining, combinatorial optimization,
algorithmic game theory, social networks, and many others. A representative
application is data summarization, where the goal is to select a
small subset of the data that captures the salient features of the
overall dataset \cite{badanidiyuru2014}. One can model these problems
as submodular maximization with a cardinality constraint: the submodular
objective captures how informative the summary is, as well as other
considerations such as how diverse the summary is, and the cardinality constraint
ensures that the summary is small. Obtaining such a summary is very
beneficial when working with massive data sets, that may not even fit
into memory, since it makes it possible to analyze the data using algorithms
that would be prohibitive to run on the entire dataset.

There have been two main approaches to deal with the large size of
modern data sets: the \emph{distributed} computation approach partitions
the data across many machines and uses local computation on the machines
and communication across the machines in order to perform the analysis,
and the \emph{streaming} computation approach processes the data in
a stream using only a small amount of memory and (ideally) only a single
pass over the data. Classical algorithms for submodular maximization,
such as the Greedy algorithm, are not suitable in these settings since
they are centralized and require many passes over the data. Motivated
by the applications as well as theoretical considerations, there has
been a significant interest in studying submodular maximization problems
both in the distributed and the streaming setting, leading to many
new results and insights \cite{Kumar2013,Mirzasoleiman2013,badanidiyuru2014,chakrabarti2015submodular,chekuri2015streaming,MirrokniZ15,Barbosa2015,Mirzasoleiman2015distributed,Barbosa2016,epasto2017bicriteria,mirzasoleiman2018streaming,feldman2018less,norouzi2018beyond}.

Despite this significant progress, several fundamental questions remain
open both in the streaming and distributed setting. In the streaming
setting, which is the main focus of this paper, submodular maximization
is fairly well understood when the objective function is additionally
\emph{monotone}---i.e., we have $f(A)\leq f(B)$ whenever $A\subseteq B$.
For example, the Greedy approach, which obtains an optimal $(1-1/e)$-approximation in the centralized setting when the function is monotone~\cite{Nemhauser1978},
can be adapted to the streaming model~\cite{Kumar2013,badanidiyuru2014}.
This yields the single-threshold Greedy algorithm: make a single pass over
the data and select an item if its marginal gain exceeds a suitably
chosen threshold. If the threshold is chosen to be $\frac{1}{2}\frac{f(\opt)}{k}$,
where $f(\opt)$ is the value of the optimal solution and $k$ is
the cardinality constraint, then the single-threshold Greedy algorithm
is guaranteed to achieve $\frac{1}{2}$-approximation. Although
the value of the optimal solution is unknown, it can be estimated
based on the largest singleton value even in the streaming setting~\cite{badanidiyuru2014}. The algorithm uses the maximum singleton value to make $O(\varepsilon^{-1} \log k)$ guesses for $f(\opt)$ and, for each guess, it runs the single-threshold Greedy algorithm; which leads to $(\frac{1}{2}-\varepsilon)$-approximation.
Remarkably, this approximation guarantee is optimal in
the streaming model even if we allow unbounded computational power:
Feldman \emph{et al. }\cite{feldman2020oneway} showed that any
algorithm for this problem that achieves an approximation
better than $\frac{1}{2}$ requires $\Omega\left(\frac{n}{k^3}\right)$
memory, where $n$ is the length of the stream. Additionally, the single-threshold Greedy algorithm
enjoys a fast update time of $O(\varepsilon^{-1} \log k)$
marginal value computations per item and uses only $O(\varepsilon^{-1} k\log k)$
space.\footnote{A variant of the algorithm due to~\cite{DBLP:conf/icml/0001MZLK19} has an even better space complexity of $O(k/\varepsilon)$.}

In contrast, the general problem with a \emph{non-monotone} objective
has proved to be considerably more challenging. Even in the centralized
setting, the Greedy algorithm fails to achieve any constant approximation guarantee
when the objective is non-monotone. Thus, several
approaches have been developed for handling non-monotone objectives in this setting, including local search \cite{feige2011maximizing,lee2010submodular,lee2009non}, continuous optimization \cite{Feldman2011a,ene2016constrained,buchbinder2019constrained}
and sampling \cite{DBLP:conf/soda/BuchbinderFNS14,DBLP:conf/colt/FeldmanHK17}. 
The currently best approximation guarantee is $0.385$ \cite{buchbinder2019constrained},
the strongest inapproximability is $0.491$ \cite{Gharan2011}, and it remains
a long-standing open problem to settle the approximability of submodular
maximization subject to a cardinality constraint.

Adapting the above techniques to the streaming setting is challenging, and the approximation guarantees are weaker. The main approach for non-monotone maximization in the streaming setting has been to extend the local search algorithm of Chakrabarti and Kale \cite{chakrabarti2015submodular} from monotone to non-monotone objectives. This approach was employed in a sequence of works \cite{chekuri2015streaming,feldman2018less,mirzasoleiman2018streaming}, leading to the currently best approximation of $\frac{1}{3+2\sqrt{2}}\approx0.1715$.\footnote{Chekuri et al.~\cite{chekuri2015streaming} claimed an improved approximation ratio of $\frac{1}{2+e}-\varepsilon$ for a cardinality constraint, but an error was later found in the proof of this improved ratio~\cite{communication:Chekuri18}. See Appendix~\ref{app:error} for more details.}
This naturally leads to the following questions.
\begin{itemize}
\item \emph{What is the optimal approximation ratio achievable for submodular
maximization in the streaming model?} \emph{In particular, is it possible to achieve $\frac{1}{2}-\varepsilon$ approximation using an algorithm that uses only $\mathrm{poly}(k,1/\varepsilon)$
space?}
\item \emph{Is there a good streaming algorithm for non-monotone functions based on the single-threshold
Greedy algorithm that works so well for monotone functions?}
\item \emph{Can we exploit existing heuristics for the offline problem in
the streaming setting?}
\end{itemize}

\textbf{Our contributions.}
In this work, we give an affirmative answer
to all of the above questions. Specifically, we give a streaming algorithm\footnote{Formally, all the algorithms we present are semi-streaming algorithms, i.e., their space complexity is nearly linear in $k$. Since this is unavoidable for algorithms designed to output an approximate solution (as opposed to just estimating the value of the optimal solution), we ignore the difference between streaming and semi-streaming algorithms in this paper and use the two terms interchangeably.} that performs a single pass over the stream and outputs sets of size $O(k / \varepsilon)$ that can be post-processed
using \emph{any offline algorithm} for submodular maximization. The
post-processing is itself quite straightforward: we simply run the
offline algorithm on the output set to obtain a solution of size at
most $k$. We show that, if the offline algorithm achieves $\alpha$-approximation, then we obtain $\left(\frac{\alpha}{1+\alpha}-\varepsilon\right)$-approximation.
One can note that if we post-process using an exact (exponential
time) algorithm, we obtain $(\frac{1}{2}-\varepsilon)$-approximation. This matches the inapproximability result proven by~\cite{feldman2020oneway} for the special case of a monotone objective function. Furthermore, we show that in the non-monotone case any streaming algorithm guaranteeing $(\frac{1}{2}+\varepsilon)$-approximation for some positive constant $\varepsilon$ must use in fact $\Omega(n)$ space.\footnote{This result is a simple adaptation of a result due to Buchbinder et al.~\cite{BuchbinderFS19}. For completeness, we include the proof in Appendix~\ref{app:inapproximability}.} Thus, we essentially settle the approximability of the problem if exponential-time computation is allowed.

The best (polynomial-time) approximation guarantee that is currently
known in the offline setting is $\alpha=0.385$ \cite{buchbinder2019constrained}.
If we post-process using this algorithm, we obtain $0.2779$-approximation
in polynomial time, improving over the previously best polynomial-time
approximation of $0.1715$ due to \cite{feldman2018less}. The offline
algorithm of \cite{buchbinder2019constrained} is based on the multilinear
extension, and thus is quite slow. One can obtain
a more efficient overall algorithm by using the combinatorial random Greedy
algorithm of \cite{DBLP:conf/soda/BuchbinderFNS14} that achieves $\frac{1}{e}$-approximation. Furthermore, any existing heuristic for the offline
problem can be used for post-processing, exploiting their effectiveness
beyond the worst case.

\textbf{Variants of our algorithm.}
Essentially, every algorithm for non-monotone submodular maximization includes a randomized component. Oftentimes this component is explicit, but in some cases it takes more subtle forms such as maintaining multiple solutions that are viewed as the support of a distribution~\cite{DBLP:journals/talg/BuchbinderF18,DBLP:conf/colt/FeldmanHK17} or using the multilinear extensions (which is defined via expectations)~\cite{Feldman2011a,buchbinder2019constrained}. We present in this work three variants of our algorithm based on the above three methods of introducing a randomized component into the algorithm.

Perhaps the most straightforward way to introduce a randomized component into the single-threshold Greedy algorithm is to use the multilinear extension as the objective function and include only fractions of elements in the solution (which corresponds to including the elements in the solution only with some bounded probability). This has the advantage of keeping the algorithm almost deterministic (in fact, completely deterministic when the multilinear extension can be evaluated deterministically), which allows for a relatively simple analysis of the algorithm and a low space complexity of $O(k\log \alpha^{-1} / \varepsilon^2)$. However, the time complexity of an algorithm obtained via this approach depends on the complexity of evaluating the multilinear extension, which in general can be quite high. In Appendix~\ref{app:continuous}, we describe and analyze a variant of our algorithm (named {\streamcont}) which is based on the multilinear extension.

To avoid the multilinear extension, and its associated time complexity penalty, one can use true randomization, and pass every arriving element to single-threshold Greedy only with a given probability. However, analyzing such a combination of single-threshold Greedy with true randomization is difficult because it requires delicate care of the event that the single-threshold Greedy algorithm fills up the budget. In particular, this was the source of the subtle error mentioned above in one of the results of \cite{chekuri2015streaming}. Our approach for handling this issue is to consider two cases depending on the probability that the budget is filled up in a run (this is a good event since the resulting solution has good value). If this probability is sufficiently large (at least $\varepsilon$), we repeat the basic algorithm $O(\ln(1/\varepsilon)/\varepsilon)$ times in parallel to boost the probability of this good event to $1-\varepsilon$. Otherwise, the probability that the budget is not filled up in a run is at least $1-\varepsilon$, and conditioning on this event changes the probabilities by only a $1-\varepsilon$ factor. Another issue with true randomness is that some elements can be randomly discarded despite being highly desirable. Following ideas from distributed algorithms~\cite{MirrokniZ15,Barbosa2015,Barbosa2016}, this issue can also be solved by running multiple copies of the algorithm in parallel since such a run guarantees that with high probability every desirable element is processed by some copy. Using this approach we get a variant of our algorithm named {\stream} which can be found in Appendix~\ref{app:randomized}. {\stream} is combinatorial and fast (it has an update time of $\tilde{O}(\varepsilon^{-2})$ marginal value computations per element), but due to the heavy use of parallel runs, it has a slightly worse space complexity of $\tilde{O}(k/\varepsilon^{3})$.

The two above discussed variants of our algorithm appeared already in an earlier conference version of this paper~\cite{DBLP:conf/icalp/AlalufEFNS20}. Our main result, however, is a new variant of our algorithm (named \streamdeterministic) based on the technique of maintaining multiple solutions and treating them as the support of a distribution. In retrospect, creating a variant based on this technique is natural since it combines the advantages of the two previous approaches. Specifically, we get an algorithm which is deterministic and combinatorial, has a relatively simple analysis and enjoys a low space complexity of $O(k \varepsilon^{-2} \log \alpha^{-1})$ and a low update time of $O(\varepsilon^{-2} \log (k/\alpha))$ marginal value computations per element.

While the last variant of our algorithm has the best time and space guarantees, we give also the two earlier variants for two reasons. The first reason is that they demonstrate the first use of techniques such as continuous extensions and random partitions in the context of streaming algorithms for submodular maximization, and thus, greatly expand the toolkit available in this context. These techniques have proved to be quite versatile in the sequential and distributed settings, and we hope that they will lead to further developments in the streaming setting as well. The second reason is that, while the asymptotic approximation guarantees of all three variants of our algorithm are identical given a black box offline $\alpha$-approximation algorithm, they might differ when the offline algorithm has additional properties. For example, the approximation ratio of the offline algorithm might depend on the ratio between $k$ and the number of elements in its input (see~\cite{DBLP:conf/soda/BuchbinderFNS14} for an example of such an algorithm). Given such an offline algorithm, it might be beneficial to set the parameters controlling the number of elements passed to the offline algorithm so that only a moderate number of elements is passed, which is a regime in which the three variants of our algorithm produce different approximation guarantees. Hence, one of the first two variants of our algorithm might end up having a better approximation guarantee than that of the last variant if future research yields offline algorithms with relevant properties.

\subsection{Additional related work}

The problem of maximizing a non-negative monotone submodular function subject to a cardinality or a matroid constraint was studied (in the offline model) already in the $1970$'s. In $1978$, Nemhauser et al.~\cite{DBLP:journals/mp/NemhauserWF78} and Fisher et al.~\cite{Fisher1978} showed that a natural greedy algorithm achieves an approximation ratio of $1 - \nicefrac{1}{e} \approx 0.632$ for this problem when the constraint is a cardinality constraint and an approximation ratio of $\nicefrac{1}{2}$ for matroid constraints. The $1 - \nicefrac{1}{e}$ approximation ratio for cardinality constraints was shown to be optimal already on the same year by Nemhauser and Wolsey~\cite{DBLP:journals/mor/NemhauserW78}, but the best possible approximation ratio for matroid constraints was open for a long time. Only a decade ago, Calinescu et al.~\cite{DBLP:journals/siamcomp/CalinescuCPV11} managed to show that a more involved algorithm, known as ``continuous greedy'', can achieve $(1 - \nicefrac{1}{e})$-approximation for this type of constraint, which is tight since matroid constraints generalize cardinality constraints.

Unlike the natural greedy algorithm, continuous greedy is a randomized algorithm, which raised an interesting question regarding the best possible approximation ratio for matroid constraints that can be achieved by a deterministic algorithm. Very recently, Buchbinder et al.~\cite{DBLP:conf/soda/BuchbinderF019} made a slight step towards answering this question. Specifically, they described a deterministic algorithm for maximizing a monotone submodular function subject to a matroid constraint whose approximation ratio is $0.5008$. This algorithm shows that the $\nicefrac{1}{2}$-approximation of the greedy algorithm is not the right answer for the above mentioned question.

Many works have studied also the offline problem of maximizing a non-negative (not necessarily monotone) submodular function subject to a cardinality or matroid constraint~\cite{buchbinder2019constrained,DBLP:conf/soda/BuchbinderFNS14,DBLP:journals/siamcomp/ChekuriVZ14,ene2016constrained,DBLP:journals/talg/Feldman17,Feldman2011a}. The most recent of these works achieves an approximation ratio of $0.385$ for both cardinality and matroid constraints~\cite{buchbinder2019constrained}. In contrast, it is known that no polynomial time algorithm can achieve an approximation ratio of $0.497$ for cardinality constraints or $0.478$ for matroid constraints, respectively~\cite{Gharan2011}.

The study of streaming algorithms for submodular maximization problems is related to the study of online algorithms for such problems. A partial list of works on algorithms of the last kind includes~\cite{DBLP:conf/esa/AzarGR11,BuchbinderFG19,BuchbinderFS19,DBLP:journals/talg/ChanHJKT18,DBLP:conf/soda/KapralovPV13,DBLP:journals/siamcomp/KorulaMZ18}.

\section{Preliminaries} \label{sec:preliminaries}

\textbf{Basic notation.} Let $\groundset$ denote a finite ground set of
size $n:=|\groundset|$. We occasionally assume without loss of generality that
$\groundset=\{1,2,\dots,n\}$, and use, e.g., $x=(x_{1},x_{2},\dots,x_{n})$
to denote a vector in $\R^{\groundset}$. For two vectors $x, y\in\R^{\groundset}$,
we let $x \vee y$ and $x \wedge y$ be the vectors such that $(x \vee y)_{e}=\max\{x_{e},y_{e}\}$ and $(x \wedge y)_{e}=\min\{x_{e},y_{e}\}$
for all $e \in \groundset$. For a set $S \subseteq \groundset$, we let $\one_{S}$ denote
the indicator vector of $S$, i.e., the vector that has $1$ in every
coordinate $e\in S$ and $0$ in every coordinate $e \in \groundset\setminus S$. Given an element $e \in \groundset$, we use $\one_{e}$ as a shorthand for $\one_{\{e\}}$. Furthermore, if $S$ is a random subset of $\groundset$, we use $\Exp{\one_{S}}$
to denote the vector $p$ such that $p_{e}=\pr{e\in S}$
for all $e \in \groundset$ (i.e., the expectation is applied coordinate-wise).

\textbf{Submodular functions.} In this paper, we consider the problem
of maximizing a non-negative submodular function subject to a cardinality
constraint. A set function $f\colon 2^{\groundset}\to\R$ is submodular if $f(A)+f(B)\geq f(A\cap B)+f(A\cup B)$
for all subsets $A,B\subseteq \groundset$. Additionally, given a set $S \subseteq \groundset$ and an element $e \in \groundset$, we use $f(e \mid S)$ to denote the marginal contribution of $e$ to $S$ with respect to the set function $f$, i.e., $f(e \mid S) = f(S \cup \{e\}) - f(S)$.

\textbf{Continuous extensions.} We make use of two standard continuous extensions of submodular functions. The first of these extensions is known as the \textit{multilinear extension}. To define this extension, we first need to define the random set $\RFunc(x)$. For every vector $x \in \cube{0}{1}{\groundset}$, $\RFunc(x)$ is defined as a random subset of $\groundset$ that includes every element $e \in \groundset$ with probability $x_e$, independently. The multilinear extension $F$ of $f$ is now defined for every
$x\in\cube{0}{1}{\groundset}$ by
\begin{align*}
	F(x)=\Exp{f\big(\RFunc(x)\big)}
	=\sum_{A\subseteq\groundset}{f(A)\cdot\pr{\RFunc(x)=A}}
	=\sum_{A\subseteq\groundset}{\left(f(A)\cdot\prod_{e\in{A}}{x_e}\cdot\prod_{e\notin{A}}{(1-x_e)}\right)}\enspace.
\end{align*}
One can observe from the definition that $F$ is indeed a multilinear function of the coordinates of $x$, as suggested by its name. Thus, if we use the shorthand $\partial_eF(x)$ for the first partial derivative
$\frac{\partial{F(x)}}{\partial{x_e}}$ of the multilinear extension $F$, then
$
	\partial_eF(x) = F(x\vee\onevect{e})-F\big(x\wedge\onevect{\groundset\setminus\curly{e}}\big)
$.

The second extension we need is known as the \textit{Lov\'{a}sz extension}. Unlike the multilinear extension that explicitly appears in one of the variants of our algorithm, the Lov\'{a}sz extension is not part of any of these variants. However, it plays a central role in the analyses of two of them. The Lov\'{a}sz extension $\hat{f}\colon [0,1]^{\groundset}\to\R$ is defined as follows. For every $x\in[0,1]^{\groundset}$,
$
\hat{f}(x)=\Ex_{\theta\sim[0,1]}[f(\{e\in \groundset \colon x_{e}\geq\theta\})] 
$,
where we use the notation $\theta\sim[0,1]$ to denote a value chosen
uniformly at random from the interval $[0,1]$. The Lov\'{a}sz extension
$\hat{f}$ of a non-negative submodular function has the following properties: (1) convexity: $c\hat{f}(x)+(1-c)\hat{f}(y)\geq \hat{f}(cx+(1-c)y)$
for all $x,y\in[0,1]^{\groundset}$ and all $c\in[0,1]$~\cite{DBLP:conf/ismp/Lovasz82}; (2) restricted
scale invariance: $\hat{f}(cx)\geq c\hat{f}(x)$ for all $x\in[0,1]^{\groundset}$
and all $c\in[0,1]$; (3) it lower bounds the multilinear extension, i.e., $F(x)\geq\hat{f}(x)$ for every $x\in\cube{0}{1}{\groundset}$ \cite[Lemma~A.4]{DBLP:journals/siamcomp/Vondrak13}.

\section{Simplified Algorithm}
\label{sec:simplified}

The properties of our main algorithm (\streamdeterministic---the third variant discussed above) are summarized by the following theorem.
\newcommand{\TrmMain}[1][]{Assume there exists an $\alpha$-approximation offline algorithm $\offlinealg$ for maximizing a non-negative submodular function subject to a cardinality constraint whose space complexity is nearly linear in the size of the ground set. Then, for every constant $\varepsilon \in (0,1]$, there exists an $(\frac{\alpha}{1 + \alpha} - \varepsilon)$-approximation semi-streaming algorithm for maximizing a non-negative submodular function subject to a cardinality constraint. The algorithm stores $O(k\varepsilon^{-2}\ifthenelse{\isempty{#1}}{\log \alpha^{-1}}{})$ elements and makes $O(\varepsilon^{-2}\log \ifthenelse{\isempty{#1}}{(k / \alpha)}{k})$ marginal value computations while processing each arriving element.\ifthenelse{\isempty{#1}}{}{\footnote{Formally, the number of elements stored by the algorithm and the number of marginal value computations also depend on $\log \alpha^{-1}$. Since $\alpha$ is typically a positive constant, or at least lower bounded by a positive constant, we omit this dependence from the statement of the theorem.}} Furthermore, if $\offlinealg$ is deterministic, then so is the algorithm that we get.} 
\begin{theorem} \label{trm:main}
\TrmMain[*]
\end{theorem}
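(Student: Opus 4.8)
The plan is to build the streaming algorithm around the single-threshold greedy rule, calibrated so that the ``filled budget'' and ``unfilled budget'' cases balance exactly at the ratio $\frac{\alpha}{1+\alpha}$, and to run a family of $\Theta(1/\varepsilon)$ solutions simultaneously --- viewed as the support of a distribution --- in order to circumvent the obstruction that a single deterministic greedy solution $S$ is useless in the non-monotone setting (its union with $\opt$ may have tiny value). Concretely, I would first reduce, using the largest singleton value (and keeping $\emptyset$ as a candidate to cover the degenerate regime $f(\opt)=f(\emptyset)$), to $O(\varepsilon^{-1}\log k)$ guesses $v$ of $f(\opt)$, one of which lies in $[(1-\varepsilon)f(\opt),f(\opt)]$. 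For each guess the algorithm runs a thresholded greedy with threshold $\tau$ chosen so that $k\tau=\bigl(\tfrac{\alpha}{1+\alpha}\pm O(\varepsilon)\bigr)v$, i.e.\ $\tau\approx\tfrac{\alpha v}{(1+\alpha)k}$: when an element arrives it is inserted into (one more of) the maintained solutions whenever its current marginal is at least $\tau$ and the relevant solution is not yet full, and the run ``saturates'' once some maintained solution reaches size $k$. Writing $T_v$ for the union of the solutions kept for guess $v$ (of size $O(k/\varepsilon)$), at the end the algorithm returns the best of $\offlinealg(T_v)$ over all $v$, the best saturated solution encountered, and $\emptyset$. Two standard ingredients feed the analysis: the fact that the multilinear extension $F$, being affine in each coordinate, satisfies $\max_{A\subseteq T,\,|A|\le k}f(A)=\max\{F(x):\,\supp(x)\subseteq T,\ x\in[0,1]^{\groundset},\ \one\cdot x\le k\}$, so it suffices to exhibit a fractional point supported on $T$ with small $\ell_1$-norm and large $F$-value; and the well-known averaging lemma that a random subset meeting every element with probability at most $p$ has, together with any fixed set $B$, expected value at least $(1-p)f(B)$.

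For the approximation guarantee I would focus on the good guess $v$ and split into two cases. If its run saturated, then the saturated solution $S$ has $|S|=k$ and, since every added element had marginal at least $\tau$ and $f\ge 0$, we get $f(S)\ge k\tau\ge\bigl(\tfrac{\alpha}{1+\alpha}-\varepsilon\bigr)f(\opt)$, and $S$ is among the returned candidates. If its run did not saturate, then no maintained solution for $v$ reached size $k$, so the fractional point $\bar x$ encoding the family (whose coordinates record, for each element, the fraction of the $\Theta(1/\varepsilon)$ solutions containing it --- at most $\varepsilon$, since the update rule never lets an element join too many of them) has $\|\bar x\|_1\le k$ and $\supp(\bar x)\subseteq T_v$. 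The averaging lemma applied to $\RFunc(\bar x)$ gives $F(\bar x\vee\one_{\opt})=\Exp{f(\RFunc(\bar x)\cup\opt)}\ge(1-\varepsilon)f(\opt)$, while the thresholded-greedy property --- combined with non-saturation and submodularity --- bounds the total marginal contribution of the optimum elements, yielding $F(\bar x\vee\one_{\opt})-F(\bar x)\le k\tau\bigl(1+O(\varepsilon)\bigr)$ (this is where the Lov\'{a}sz extension is the natural tool for controlling marginals along the trajectory of $\bar x$). Hence $\max_{A\subseteq T_v,\,|A|\le k}f(A)\ge F(\bar x)\ge(1-\varepsilon)f(\opt)-k\tau\bigl(1+O(\varepsilon)\bigr)\ge\bigl(\tfrac{1}{1+\alpha}-O(\varepsilon)\bigr)f(\opt)$, so $\offlinealg(T_v)$ has value at least $\alpha\bigl(\tfrac{1}{1+\alpha}-O(\varepsilon)\bigr)f(\opt)\ge\bigl(\tfrac{\alpha}{1+\alpha}-O(\varepsilon)\bigr)f(\opt)$; running the whole construction with $\varepsilon$ scaled by a suitable constant then gives exactly $\frac{\alpha}{1+\alpha}-\varepsilon$.

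The main obstacle is the non-saturated case, and more precisely the joint design and analysis of the multi-solution structure: each maintained solution must be ``thin'' enough that no element is over-represented across the family (so the averaging lemma applies with $p=O(\varepsilon)$), yet each must behave like a genuine thresholded-greedy run so that an optimum element outside it really does have marginal below $\tau$ (so the marginal-contribution bound holds), and these two demands pull in opposite directions; handling them together with a support of size only $O(1/\varepsilon)$, and in particular dealing cleanly and deterministically with the event that the budget fills up --- the very point where an error crept into prior work --- is the crux, and is exactly what the ``multiple solutions as the support of a distribution'' device is meant to resolve. Once the key lemma is in place (saturated $\Rightarrow$ an explicit size-$k$ solution of value $\gtrsim\tfrac{\alpha}{1+\alpha}f(\opt)$; non-saturated $\Rightarrow$ $T_v$ contains a size-$k$ set of value $\gtrsim\tfrac{1}{1+\alpha}f(\opt)$), the rest is bookkeeping: $O(\varepsilon^{-1}\log k)$ guesses, each holding $O(1/\varepsilon)$ solutions of at most $k$ elements and spending $O(1)$ marginal evaluations per solution per arriving element, give the stated bounds up to a $\log k$ in the space count, which is shaved to $\log\alpha^{-1}$ by the standard lazy-threshold trick of keeping only the currently relevant thresholds active; the overall algorithm is deterministic whenever $\offlinealg$ is, since the distribution is only an analysis device and nothing else in the algorithm is randomized.
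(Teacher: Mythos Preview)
Your non-saturated case has a real gap. You claim $F(\bar{x}\vee\one_{\opt})-F(\bar{x})\le k\tau(1+O(\varepsilon))$, where $\bar{x}_e$ records the fraction of maintained solutions containing $e$; but the threshold rule only gives $f(e\mid S_i)<\tau$ for each individual $S_i$ and each $e\in\opt\setminus T_v$, and this does \emph{not} control $\partial_e F(\bar{x})=\Exp{f(e\mid \RFunc(\bar{x}))}$. The random set $\RFunc(\bar{x})$ is drawn from a \emph{product} distribution over $T_v$, not from the uniform distribution over $\{S_1,\dotsc,S_p\}$, and its marginals need not respect the greedy threshold. A coverage example makes this concrete: if $S_i=\{a_i\}$ for disjoint elements $a_1,\dotsc,a_p$ all covering the same universe, and $o\in\opt\setminus T_v$ also covers that universe, then $f(o\mid S_i)=0$ for every $i$, yet $\partial_o F(\bar{x})=\Pr[\RFunc(\bar{x})=\varnothing]\cdot f(\{o\})\approx f(\{o\})/e$, which for $\alpha<1/(e-1)$ already exceeds $k\tau=\tfrac{\alpha}{1+\alpha}f(\opt)$. (The paper's multilinear-extension variant in Appendix~\ref{app:continuous} does run an analysis of exactly the shape you sketch, but there the threshold checked is $\partial_e F(x)\ge c\tau/k$ against the current fractional $x$, not $f(e\mid S_i)$; that is precisely what makes the marginal bound transfer, at the cost of having to evaluate $F$.)

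The paper's deterministic variant handles the non-saturated case differently, and in a way your algorithm as stated cannot support. It works directly with the uniform distribution over $\{S_1,\dotsc,S_p\}$: for $S$ drawn from it and $e\in O:=\opt\setminus T_v$ one genuinely has $f(e\mid S)<\tau$, whence $p^{-1}\sum_i f(S_i)\ge(1-1/p)f(O)-|O|\tau$ via the averaging lemma. This only accounts for the part of $\opt$ outside $T_v$; for $\opt\cap T_v$ a separate argument exhibits the fractional feasible points $(b\,\one_{S_i})\vee\one_{\opt\setminus O}$ inside $T_v$ (with $b=|O|/k$) to lower-bound $\offlinealg(T_v)$. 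The final guarantee comes from a convex combination of the bound on $p^{-1}\sum_i f(S_i)$ and the bound on $f(S_o)$, balanced through $b$; neither bound alone yields $\tfrac{\alpha}{1+\alpha}$. Consequently the algorithm must also output the best unsaturated $S_i$ among its candidates, not only $\offlinealg(T_v)$ and saturated solutions---which your algorithm does not do.
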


In this section, we introduce a simplified version of the algorithm used to prove Theorem~\ref{trm:main}. This simplified version (given as Algorithm~\ref{alg:main_algorithm_deterministic}) captures our main new ideas, but makes the simplifying assumption that it has access to an estimate $\tau$ of $f(\opt)$ obeying $(1 - \varepsilon/2) \cdot f(\opt) \leq \tau \leq f(\opt)$. Such an estimate can be produced using well-known techniques, at the cost of a slight increase in the space complexity and update time of the algorithm. More specifically, in Section~\ref{sec:full} we formally show that one such technique due to~\cite{DBLP:conf/icml/0001MZLK19} can be used for that purpose, and that it increases the space complexity and update of the algorithm only by factors of $O(\varepsilon^{-1} \log \alpha^{-1})$ and $O(\varepsilon^{-1} \log (k/\alpha))$, respectively.

Algorithm~\ref{alg:main_algorithm_deterministic} gets two parameters: the approximation ratio $\alpha$ of $\offlinealg$ and an integer $p \geq 1$. The algorithm maintains $p$ solutions $S_1, S_2, \dotsc, S_p$. All these solutions start empty, and the algorithm may add each arriving element to at most one of them. Specifically, when an element $e$ arrives, the algorithm checks whether there exists a solution $S_i$ such that (1) $S_i$ does not already contain $k$ elements, and (2) the marginal contribution of $e$ with respect to $S_i$ exceeds the threshold of $c\tau / k$ for $c = \alpha / (1 + \alpha)$. If there is such a solution, the algorithm adds $e$ to it (if there are multiple such solutions, the algorithm adds $e$ to an arbitrary one of them); otherwise, the algorithm discards $e$. After viewing all of the elements, Algorithm~\ref{alg:main_algorithm} generates one more solution $S_o$ by executing $\offlinealg$ on the union of the $p$ solutions $S_1, S_2, \dotsc, S_p$. The output of the algorithm is then the best solution among the $p + 1$ generated solutions.

\begin{algorithm2e}[ht]
\caption{\streamdeterministic (simplified) $(p,\alpha)$} \label{alg:main_algorithm_deterministic}
	\DontPrintSemicolon
	Let $c\gets\frac{\alpha}{1 + \alpha}$.\\
	\lFor{$i = 1$ \KwTo $p$}{Let $S_i \gets \varnothing$.}
	\For {each arriving element $e$} {
		\If{there exists an integer $1 \leq i \leq p$ such that $|S_i| < k$ and $f(e \mid S_i) \geq \frac{c\tau}{k}$}
		{
			Update $S_i \gets S_i \cup \{e\}$ (if there are multiple options for $i$, pick an arbitrary one).
		}
	}
	Find another feasible solution $S_o \subseteq \bigcup_{i = 1}^p S_i$ by running $\offlinealg$ with $\bigcup_{i = 1}^p S_i$ as the ground set.\\
	\Return the solution maximizing $f$ among $S_o$ and the $p$ solutions $S_1, S_2, \dotsc, S_p$.
\end{algorithm2e}

Since Algorithm~\ref{alg:main_algorithm_deterministic} stores elements only in the $p + 1$ solutions it maintains, and all these solutions are feasible (and thus, contain only $k$ elements), we immediately get the following observation. Note that this observation implies (in particular) that Algorithm~\ref{alg:main_algorithm_deterministic} is a semi-streaming algorithm for a constant $p$ when the space complexity of $\offlinealg$ is nearly linear.
\begin{observation} \label{obs:space_simpleMain}
Algorithm~\ref{alg:main_algorithm_deterministic} stores at most $O(pk)$ elements, and makes at most $p$ marginal value calculations while processing each arriving element.
\end{observation}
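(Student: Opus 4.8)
The plan is to read off the two quantities in the statement separately from the structure of Algorithm~\ref{alg:main_algorithm_deterministic}, since both are immediate consequences of the pseudocode. The only place an actual (one-line) argument is needed is in establishing the invariant $|S_i| \le k$.

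For the space bound, I would first note that at every point during the execution the algorithm stores elements only inside the sets $S_1, \dotsc, S_p$, and after the stream ends also inside $S_o$. Each $S_i$ is enlarged solely by the line that performs $S_i \gets S_i \cup \{e\}$, and this line is reached only when the guard $|S_i| < k$ holds; an easy induction on the steps of the algorithm then shows that $|S_i| \le k$ is maintained throughout. Likewise, $S_o$ is a feasible solution by construction, as it is the output of $\offlinealg$ run under the cardinality constraint, so $|S_o| \le k$ as well. Summing over the at most $p+1$ stored sets gives at most $(p+1)k = O(pk)$ stored elements. (That $\offlinealg$ itself uses space nearly linear in its ground set, whose size is at most $pk$, is only relevant for the full semi-streaming guarantee and does not change this count.)

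For the update time, I would examine the work done on a single arriving element $e$. The only nontrivial computation is the evaluation of the condition in the \textbf{if} statement, which tests whether some index $1 \le i \le p$ satisfies both $|S_i| < k$ and $f(e \mid S_i) \ge c\tau/k$. Checking $|S_i| < k$ needs no marginal value computation, and for a fixed $i$ the threshold test requires exactly one marginal value computation, namely that of $f(e \mid S_i)$. Iterating over all $p$ indices therefore costs at most $p$ marginal value computations, and the subsequent update $S_i \gets S_i \cup \{e\}$ (when an index is found) costs none. The construction of $S_o$ is performed once, after the whole stream has been processed, so it contributes nothing to the per-element cost. This yields the claimed bound of $p$ marginal value computations per arriving element.

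I do not anticipate any real obstacle here: the observation is essentially a syntactic reading of the algorithm's resource usage, with the only substantive point being the invariant $|S_i| \le k$, which is enforced directly by the guard on the update line.
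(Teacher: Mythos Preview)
Your proposal is correct and matches the paper's reasoning. The paper does not give a formal proof for this observation at all; it simply notes in the sentence preceding the statement that the algorithm stores elements only in the $p+1$ solutions it maintains, each of which is feasible (hence of size at most $k$), and then states the observation as immediate. Your argument spells out exactly this, with a bit more care on the invariant $|S_i|\le k$ and the per-element cost, but there is no substantive difference.
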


We now divert our attention to analyzing the approximation ratio of Algorithm~\ref{alg:main_algorithm_deterministic}. Let us denote by $\hat{S}_i$ the final set $S_i$ (i.e., the content of this set when the stream ends), and consider two cases. The first (easy) case is when at least one of the solutions $S_1, S_2, \dotsc, S_p$ reaches a size of $k$. The next lemma analyzes the approximation guarantee of Algorithm~\ref{alg:main_algorithm_deterministic} in this case.
\begin{lemma}
\label{lem:Fx-exactlyKMain}
	If there is an integer $1 \leq i \leq p$ such that $|\hat{S}_i| = k$, then the output of Algorithm~\ref{alg:main_algorithm_deterministic} has value of at least $\frac{\alpha\tau}{1 + \alpha}$.
\end{lemma}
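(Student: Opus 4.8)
The plan is to track exactly which elements get inserted into the solution $S_i$ that eventually reaches size $k$, and to lower bound $f(\hat{S}_i)$ by a telescoping argument over the marginal gains collected during those insertions. Since the algorithm returns the best solution among $S_o$ and $S_1, \dotsc, S_p$, a lower bound on $f(\hat{S}_i)$ alone suffices.

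Concretely, let $e_1, e_2, \dotsc, e_k$ denote the elements added to $S_i$, listed in the order in which they were inserted, and for $0 \le j \le k$ write $S_i^{(j)} = \{e_1, \dotsc, e_j\}$, so that $S_i^{(0)} = \varnothing$ and $S_i^{(k)} = \hat{S}_i$. At the moment Algorithm~\ref{alg:main_algorithm_deterministic} inserted $e_j$ into $S_i$, the content of that solution was precisely $S_i^{(j-1)}$ (earlier insertions had already happened, later ones had not), so the insertion condition guarantees $f\bigl(e_j \mid S_i^{(j-1)}\bigr) \ge c\tau / k$ with $c = \alpha / (1 + \alpha)$. Telescoping the marginal gains then gives
\[
	f(\hat{S}_i) = f(\varnothing) + \sum_{j=1}^{k} f\bigl(e_j \mid S_i^{(j-1)}\bigr) \ge f(\varnothing) + k \cdot \frac{c\tau}{k} \ge c\tau = \frac{\alpha\tau}{1+\alpha}\enspace,
\]
where the final inequality uses $f(\varnothing) \ge 0$, which holds since $f$ is non-negative. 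Because Algorithm~\ref{alg:main_algorithm_deterministic} returns the solution of maximum value among $S_o, S_1, \dotsc, S_p$, its output has value at least $f(\hat{S}_i) \ge \frac{\alpha\tau}{1+\alpha}$, as claimed.

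I do not expect any real obstacle in this lemma: the only ingredients are the insertion rule of the algorithm, non-negativity of $f$, and the fact that the output is the best of the maintained candidate solutions; submodularity is not even needed here. The genuinely delicate part of the analysis lies in the complementary case, where none of $S_1, \dotsc, S_p$ reaches size $k$ — there one must argue about the elements of $\opt$ that were discarded, invoke the properties of the Lov\'{a}sz extension, and combine this with the $\alpha$-approximation guarantee of $\offlinealg$ applied to $\bigcup_{i=1}^p S_i$. The present lemma is simply the easy half of that case split.
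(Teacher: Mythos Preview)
Your proof is correct and follows essentially the same approach as the paper: both telescope $f(\hat{S}_i)$ over the marginal gains of the inserted elements, invoke the threshold condition $f(e_j \mid \{e_1,\dots,e_{j-1}\}) \ge c\tau/k$ and the non-negativity of $f$, and then note that the algorithm's output is at least as good as $\hat{S}_i$. The only cosmetic difference is your explicit notation $S_i^{(j)}$ for the partial solutions.
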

\begin{proof}
	Denote by $e_1,e_2,\dots,e_k$ the elements of $\hat{S}_i$ in the order of their arrival. Using this notation, the value of $f(\hat{S}_i)$ can be written as follows.
	\[
		f(\hat{S}_i)
		=
		f(\varnothing)+\sum_{j=1}^{k}{f\big(e_j \mid \curly{e_1,e_2,\dotsc,e_{j-1}}\big)}
		\geq
		\sum_{i=1}^{k} \frac{\alpha\tau}{(1 + \alpha)k}
		=
		\frac{\alpha\tau}{1 + \alpha}
		\enspace,
	\]
where the inequality holds since the non-negativity of $f$ implies $f(\varnothing) \geq 0$ and Algorithm~\ref{alg:main_algorithm_deterministic} adds an element $e_j$ to $S_i$ only when $f\big(e_j \mid \curly{e_1,e_2,\dotsc,e_{j-1}}\big)\geq\frac{c\tau}{k} = \frac{\alpha\tau}{(1 + \alpha)k}$. The lemma now follows since the solution outputted by Algorithm~\ref{alg:main_algorithm_deterministic} is at least as good as $S_i$.
\end{proof}

Consider now the case in which no set $S_i$ reaches the size of $k$. In this case our objective is to show that at least one of the solutions computed by Algorithm~\ref{alg:main_algorithm_deterministic} has a large value. Lemmata~\ref{lem:average} and~\ref{lem:belowKMain} lower bound the value of the average solution among $S_1, S_2, \dotsc, S_p$ and the solution $S_o$, respectively. The proof of Lemma~\ref{lem:average} uses the following known lemma.

\begin{lemma}[Lemma 2.2 from~\cite{DBLP:conf/soda/BuchbinderFNS14}]
\label{lem:atMostP}
	Let $\Func{f}{2^{\groundset}}{\nonnegative}$ be a non-negative submodular function. Denote by $A(p)$ a random subset of $A$
	where each element appears with probability at most $p$ (not necessarily independently). Then,
	$\Exp{f(A(p))}\geq(1-p) \cdot f(\varnothing)$.
\end{lemma}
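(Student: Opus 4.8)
The plan is to sidestep any direct reasoning about the (arbitrary, possibly correlated) joint distribution of $A(p)$ by routing the estimate through the Lov\'{a}sz extension $\hat f$ and exploiting its convexity. First I would set $y := \Exp{\one_{A(p)}}$, so that $y_e = \pr{e \in A(p)} \le p$ for every $e \in \groundset$; in particular every coordinate of $y$ lies in $[0,p]$. Since it is immediate from the definition of the Lov\'{a}sz extension that $\hat f(\one_S) = f(S)$ for every $S \subseteq \groundset$, and since $\hat f$ is convex (and $A(p)$ has finite support, so Jensen's inequality reduces, by induction, to the case of finitely many points), I obtain
\[
	\Exp{f(A(p))} = \Exp{\hat f(\one_{A(p)})} \ge \hat f\big(\Exp{\one_{A(p)}}\big) = \hat f(y) \enspace.
\]
The key point is that this step uses \emph{no} independence assumption on $A(p)$ --- it is a pure consequence of convexity --- which is precisely why the Lov\'{a}sz extension (which is convex and equals $f$ on indicator vectors), rather than the multilinear extension, is the right tool here.

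It then remains to show $\hat f(y) \ge (1-p)\,f(\varnothing)$. For this I would expand $\hat f(y) = \int_0^1 f(\{e \in \groundset : y_e \ge \theta\})\,d\theta$ and split the integral at $\theta = p$. Because $y_e \le p$ for every $e$, the set $\{e : y_e \ge \theta\}$ is empty for every $\theta \in (p,1]$, so the contribution of this range is exactly $\int_p^1 f(\varnothing)\,d\theta = (1-p)\,f(\varnothing)$, while the contribution of $[0,p]$ is $\int_0^p f(\{e : y_e \ge \theta\})\,d\theta \ge 0$ by non-negativity of $f$. Summing the two pieces gives $\hat f(y) \ge (1-p)\,f(\varnothing)$, and combining with the displayed inequality yields the lemma.

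I do not anticipate a genuine technical obstacle; the only part that requires a moment's thought is the conceptual step at the start --- namely, recognizing that the lack of independence in $A(p)$ is harmless once the estimate is phrased through a convex extension (so that Jensen applies), and that the Lov\'{a}sz extension is the convex extension that simultaneously collapses to $f$ on $0/1$ vectors and makes the integral-splitting computation transparent. Everything after that reduction uses only Jensen's inequality, non-negativity of $f$, and the fact that the coordinates of $\Exp{\one_{A(p)}}$ are bounded by $p$.
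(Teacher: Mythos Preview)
The paper does not give its own proof of this lemma; it is quoted verbatim as Lemma~2.2 of~\cite{DBLP:conf/soda/BuchbinderFNS14} and used as a black box. So there is no in-paper argument to compare against.

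That said, your proof is correct. Jensen's inequality applied to the convex Lov\'asz extension is exactly the right device to neutralize the arbitrary correlations in $A(p)$, and the integral split at $\theta=p$ cleanly isolates the $(1-p)f(\varnothing)$ term. In fact, the paper itself deploys the identical machinery---$\Exp{f(S)}=\Exp{\hat f(\one_S)}\ge \hat f(\Exp{\one_S})$ followed by splitting the defining integral of $\hat f$ according to the coordinate bounds---in its proof of Lemma~\ref{lem:sample-union-opt}, so your argument is fully in the spirit of the surrounding text. The original proof in~\cite{DBLP:conf/soda/BuchbinderFNS14} proceeds along the same lines.
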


Let $O = \opt \setminus \bigcup_{i = 1}^p \hat{S}_i$, and let $b = |O|/k$.
\begin{lemma} \label{lem:average}
If $|\hat{S}_i| < k$ for every integer $1 \leq i \leq p$, then, for every fixed set $A \subseteq \groundset$, $p^{-1} \cdot \sum_{i = 1}^p f(\hat{S}_i \cup A) \geq
	(1 - p^{-1}) \cdot f(O \cup A) - \alpha b \tau / (1 + \alpha)$.
\end{lemma}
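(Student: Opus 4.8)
The plan is to exploit the fact that each element of $O = \opt \setminus \bigcup_{i=1}^p \hat{S}_i$ was examined by the algorithm and rejected for \emph{every} solution $S_i$, which gives us a uniform upper bound on marginal contributions. First I would fix an index $i$ and consider some element $e \in O$. Since $e$ was not added to $S_i$, at the moment $e$ arrived either $S_i$ already had $k$ elements, or $f(e \mid S_i^{(e)}) < c\tau/k$, where $S_i^{(e)}$ denotes the content of $S_i$ just before $e$ arrived. The hypothesis of the lemma rules out the first possibility (we assumed $|\hat S_i| < k$, so $S_i$ never reached size $k$), so we must have $f(e \mid S_i^{(e)}) < c\tau/k$. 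By submodularity, since $S_i^{(e)} \subseteq \hat S_i$, we get $f(e \mid \hat S_i) \le f(e \mid S_i^{(e)}) < c\tau/k$ for every $e \in O$.

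Next I would use submodularity once more to pass from a bound on single-element marginals to a bound on the marginal of the whole set $O$ relative to $\hat S_i \cup A$. Specifically, for any fixed $A$, subadditivity of marginals (a standard consequence of submodularity) gives
\[
f(O \cup (\hat S_i \cup A)) - f(\hat S_i \cup A) \;\le\; \sum_{e \in O} f(e \mid \hat S_i) \;<\; |O| \cdot \frac{c\tau}{k} \;=\; b \cdot c\tau \;=\; \frac{\alpha b \tau}{1 + \alpha}.
\]
Rearranging, $f(\hat S_i \cup A) \ge f(O \cup \hat S_i \cup A) - \alpha b\tau/(1+\alpha) \ge f(O \cup A) - \alpha b\tau/(1+\alpha)$ is \emph{not} quite what I want directly, because $f(O \cup \hat S_i \cup A) \ge f(O \cup A)$ requires monotonicity, which we do not have. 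This is the main obstacle, and it is precisely why the factor $(1 - p^{-1})$ appears: rather than bounding each term individually, I average over $i$. Summing the inequality $f(\hat S_i \cup A) \ge f(O \cup \hat S_i \cup A) - \alpha b \tau/(1+\alpha)$ over all $i$ and dividing by $p$,
\[
\frac{1}{p}\sum_{i=1}^p f(\hat S_i \cup A) \;\ge\; \frac{1}{p}\sum_{i=1}^p f\bigl((O \cup A) \cup \hat S_i\bigr) - \frac{\alpha b\tau}{1+\alpha}.
\]

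To handle the sum $\frac{1}{p}\sum_{i=1}^p f((O\cup A) \cup \hat S_i)$ I would apply Lemma~\ref{lem:atMostP} to the submodular function $g(\cdot) := f((O \cup A) \cup \cdot)$, which is non-negative since $f$ is. The relevant random set is obtained by picking $i \in \{1,\dots,p\}$ uniformly at random and taking $\hat S_i$; its ground set is $\bigcup_{i=1}^p \hat S_i$, and I need to verify that each element $e \in \bigcup_i \hat S_i$ appears in the chosen set with probability at most $p^{-1}$. This holds because the algorithm adds each arriving element to \emph{at most one} of the solutions $S_1,\dots,S_p$, so any particular $e$ belongs to at most one $\hat S_i$, hence is selected with probability exactly $1/p$ or $0$. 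Lemma~\ref{lem:atMostP} with parameter $p^{-1}$ then gives $\frac{1}{p}\sum_{i=1}^p g(\hat S_i) \ge (1 - p^{-1}) g(\varnothing) = (1 - p^{-1}) f(O \cup A)$. Combining this with the averaged inequality above yields exactly the claimed bound $p^{-1}\sum_{i=1}^p f(\hat S_i \cup A) \ge (1 - p^{-1}) f(O\cup A) - \alpha b\tau/(1+\alpha)$, completing the proof. I expect the only subtle points to be the careful identification of "each element appears with probability at most $p^{-1}$" (relying on the at-most-one-solution property of the algorithm) and the use of the rejection condition with the right prefix set, correctly invoking the $|\hat S_i| < k$ hypothesis to exclude the full-budget case.
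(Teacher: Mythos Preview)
Your proposal is correct and follows essentially the same approach as the paper: bound the marginal of each $e\in O$ with respect to $\hat S_i$ (using the $|\hat S_i|<k$ hypothesis to rule out the full-budget rejection), sum and average over $i$, and then apply Lemma~\ref{lem:atMostP} to $g(S)=f(S\cup O\cup A)$ using the disjointness of the $\hat S_i$. The only cosmetic difference is that the paper bounds $f(e\mid \hat S_i\cup A)$ directly, whereas you bound $f(e\mid \hat S_i)$ and then use $f(e\mid \hat S_i\cup A)\le f(e\mid \hat S_i)$ implicitly; both are equivalent applications of submodularity.
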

\begin{proof}
The elements in $O$ were rejected by Algorithm~\ref{alg:main_algorithm_deterministic}. Since no set $S_i$ reaches a size of $k$, this means that the the marginal contribution of the elements of $O$ with respect to every set $S_i$ at the time of their arrival was smaller than $c\tau/k$. Moreover, since Algorithm~\ref{alg:main_algorithm_deterministic} only adds elements to its solutions during its execution, the submodularity of $f$ guarantees that the marginals of the elements of $O$ are below this threshold also with respect to $\hat{S}_1 \cup A, \hat{S}_2 \cup A, \dotsc, \hat{S}_p \cup A$. More formally, we get
\[
	f(e \mid \hat{S}_i \cup A) <\frac{c\tau}{k} = \frac{\alpha\tau}{k(1 + \alpha)}
	\quad \forall\; e\in  O \text{ and integer $1 \leq i \leq p$}
	\enspace.
\]
Using the submodularity of $f$ again, this implies that for every integer $1 \leq i \leq p$
	\[
		f\big(\hat{S}_i \cup O \cup A\big)
			\leq f(\hat{S}_i \cup A)+\sum_{e\in O} f(e \mid \hat{S}_i \cup A)
			\leq f(\hat{S}_i \cup A)+|O| \cdot \frac{\alpha\tau}{k(1 + \alpha)}
			= f(\hat{S}_i \cup A)+\frac{\alpha b\tau}{1 + \alpha}\enspace.
	\]
	
Adding up the above inequalities (and dividing by $p$), we get
\[
	p^{-1} \cdot \sum_{i = 1}^p f(\hat{S}_i \cup A)
	\geq
	p^{-1} \cdot \sum_{i = 1}^p f\big(\hat{S}_i \cup O \cup A\big) - \frac{\alpha b\tau}{1 + \alpha}
	\enspace.
\]
We now note that $p^{-1} \cdot \sum_{i = 1}^p f\big(\hat{S}_i \cup O \cup A\big)$ can be viewed as the expected value of a non-negative submodular function $g(S) = f(S \cup O \cup A)$ over a random set $S$ that is equal to every one of the sets $\hat{S}_1, \hat{S}_2, \dotsc, \hat{S}_p$ with probability $1/p$. Since the sets $\hat{S}_1, \hat{S}_2, \dotsc, \hat{S}_p$ are disjoint, $S$ contains every element with probability at most $1/p$, and thus, by Lemma~\ref{lem:atMostP},
\[
	p^{-1} \cdot \sum_{i = 1}^p f\big(\hat{S}_i \cup O \cup A\big)
	=
	\Exp{g(S)}
	\geq
	(1 - p^{-1}) \cdot g(\varnothing)
	=
	(1 - p^{-1}) \cdot f(O \cup A)
	\enspace.
\]
The lemma now follows by combining the last two inequalities.
\end{proof}

As mentioned above, our next step is to get a lower bound on the value of $f(S_o)$. One easy way to get such a lower bound is to observe that $\opt \setminus O$ is a subset of $\bigcup_{i = 1}^p \hat{S}_i$ of size at most $k$, and thus, is a feasible solution for the instance faced by the algorithm $\offlinealg$ used to find $S_o$; which implies $\Exp{f(S_o)} \geq \alpha \cdot f(\opt \setminus O)$ since  $\offlinealg$ is an $\alpha$-approximation algorithm. The following lemma proves a more involved lower bound by considering the vectors $(b \onevect{\hat{S}_i}) \vee \onevect{\opt \setminus O}$ as feasible fractional solutions for the same instance (using rounding methods such as Pipage Rounding or Swap Rounding~\cite{DBLP:journals/siamcomp/CalinescuCPV11,DBLP:conf/focs/ChekuriVZ10}, such feasible factional solutions can be converted into integral feasible solutions of at least the same value).

\begin{lemma} \label{lem:belowKMain}
If $|\hat{S}_i| < k$ for every integer $1 \leq i \leq p$, then $\Exp{f(S_o)} \geq \alpha b\tau (1 - p^{-1} - \alpha b / (1 + \alpha)) + \alpha(1 - b) \cdot f(\opt \setminus O)$.
\end{lemma}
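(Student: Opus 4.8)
The plan is to exhibit, for every index $1 \le i \le p$, a \emph{feasible fractional solution} to the cardinality-constrained instance that $\offlinealg$ is run on (whose ground set is $\bigcup_{j=1}^p \hat{S}_j$), and then argue that $\offlinealg$, being an $\alpha$-approximation, recovers an integral solution comparable to the best of these fractional points. Concretely, write $A = \opt \setminus O$, and for each $i$ set $x_i = (b\onevect{\hat{S}_i}) \vee \onevect{A}$, as hinted by the text preceding the lemma. First I would verify feasibility: the support of $x_i$ is $\hat{S}_i \cup A \subseteq \bigcup_{j=1}^p \hat{S}_j$ (since $A \subseteq \bigcup_{j} \hat{S}_j$), and, using $|A| = |\opt| - |O| \le k - bk$ (because $O \subseteq \opt$ and $|O| = bk$) together with $|\hat{S}_i| < k$, one gets $\onenorm{x_i} = |A| + b\,|\hat{S}_i \setminus A| \le (1-b)k + bk = k$. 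Hence $x_i$ lies in the cardinality polytope over $\bigcup_j \hat{S}_j$, so by a rounding procedure such as Pipage or Swap Rounding it can be turned into an integral feasible solution of value at least $F(x_i)$; consequently the optimum of the instance faced by $\offlinealg$ is at least $F(x_i)$ for every $i$, hence at least $p^{-1}\sum_{i=1}^p F(x_i)$, and therefore $\Exp{f(S_o)} \ge \alpha \cdot p^{-1}\sum_{i=1}^p F(x_i)$.

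Next I would lower bound each $F(x_i)$ via the Lov\'asz extension. By property (3) of the preliminaries, $F(x_i) \ge \hat{f}(x_i)$, so it suffices to evaluate $\hat{f}(x_i)$. The key observation is that $x_i$ decomposes as $x_i = (1-b)\onevect{A} + b\,\onevect{A \cup \hat{S}_i}$: the coordinate of each $e \in A$ is $1$, of each $e \in \hat{S}_i \setminus A$ is $b$, and of every other element is $0$. Plugging this into $\hat{f}(x) = \Ex_{\theta \sim [0,1]}[f(\{e : x_e \ge \theta\})]$ and splitting the $\theta$-integral at $b$ gives $\hat{f}(x_i) = b\cdot f(A \cup \hat{S}_i) + (1-b)\cdot f(A)$. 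Averaging over $i$, $p^{-1}\sum_{i=1}^p F(x_i) \ge b \cdot p^{-1}\sum_{i=1}^p f(\hat{S}_i \cup A) + (1-b)\,f(A)$.

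Finally I would apply Lemma~\ref{lem:average} with the fixed set $A = \opt \setminus O$ to obtain $p^{-1}\sum_{i=1}^p f(\hat{S}_i \cup A) \ge (1-p^{-1})\,f(O \cup A) - \alpha b\tau/(1+\alpha)$, then use $O \cup A = \opt$ and $f(\opt) \ge \tau$ (and $b \ge 0$, $p \ge 1$ to keep the inequality direction) to get $p^{-1}\sum_{i=1}^p f(\hat{S}_i \cup A) \ge \tau\bigl(1 - p^{-1} - \alpha b/(1+\alpha)\bigr)$. Substituting back, $p^{-1}\sum_{i=1}^p F(x_i) \ge b\tau\bigl(1 - p^{-1} - \alpha b/(1+\alpha)\bigr) + (1-b)\,f(\opt \setminus O)$, and multiplying by $\alpha$ and combining with $\Exp{f(S_o)} \ge \alpha\cdot p^{-1}\sum_i F(x_i)$ yields exactly the claimed bound.

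I expect the main obstacle to be the setup rather than the calculation: choosing the right fractional solutions $x_i$ (the combination of scaling $\hat{S}_i$ by $b$ and taking the coordinatewise max with $\onevect{\opt\setminus O}$ is what makes both the feasibility check and the Lov\'asz-extension evaluation come out cleanly), correctly verifying the cardinality bound $\onenorm{x_i}\le k$ (this is where $|\hat{S}_i|<k$ and $|\opt|\le k$ are both used), and making sure the Lov\'asz-extension estimate is invoked in the direction that \emph{lower} bounds $F(x_i)$. Everything after that is routine bookkeeping plus a direct appeal to Lemma~\ref{lem:average}.
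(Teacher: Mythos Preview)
Your proposal is correct and follows essentially the same route as the paper: the same fractional solutions $x_i=(b\onevect{\hat{S}_i})\vee\onevect{\opt\setminus O}$, the same feasibility check, the same averaging and appeal to Lemma~\ref{lem:average}. The only difference is in how you lower bound $F(x_i)$: you pass to the Lov\'asz extension and read off $\hat{f}(x_i)=b\,f(\hat{S}_i\cup A)+(1-b)f(A)$ directly, whereas the paper uses concavity of $t\mapsto F((t\onevect{\hat{S}_i})\vee\onevect{A})$ on $[0,1]$ to obtain the identical linear interpolation bound; both arguments yield the same inequality, so this is a cosmetic variation rather than a different approach.
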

\begin{proof}
Fix some integer $1 \leq i \leq p$, and consider the vector $(b \onevect{\hat{S}_i}) \vee \onevect{\opt \setminus O}$. Clearly,
\[
	\onenorm{(b \onevect{\hat{S}_i}) \vee \onevect{\opt \setminus O}}
	\leq
	b \cdot |\hat{S}_i| + |\opt \setminus O|
	\leq
	|O| + |\opt \setminus O|
	=
	|\opt|
	\leq
	k
	\enspace,
\]
where the second inequality holds by the definition of $b$ since $|\hat{S}_i| < k$. Given the last property of the vector $(b \onevect{\hat{S}_i}) \vee \onevect{\opt \setminus O}$, standard rounding techniques such as Pipage Rounding~\cite{DBLP:journals/siamcomp/CalinescuCPV11} and Swap Rounding~\cite{DBLP:conf/focs/ChekuriVZ10} can be used to produce from this vector a set $A_i \subseteq \hat{S}_i \cup (\opt \setminus O) \subseteq \bigcup_{i = 1}^p \hat{S}_i$ of size at most $k$ such that $f(A_i) \geq F((b \onevect{\hat{S}_i}) \vee \onevect{\opt \setminus O})$. Since $S_o$ is produced by the algorithm $\offlinealg$ whose approximation ratio is $\alpha$ and $A_i$ is one possible output for this algorithm, this implies $\Exp{f(S_o)} \geq \alpha \cdot f(A_i) \geq \alpha \cdot F((b \onevect{\hat{S}_i}) \vee \onevect{\opt \setminus O})$. Furthermore, by averaging this equation over all the possible values of $i$, we get
\begin{align*}
	\Exp{f(S_o)}
	\geq{} &
	\alpha p^{-1} \cdot \sum_{i = 1}^p F((b \onevect{\hat{S}_i}) \vee \onevect{\opt \setminus O})\\
	\geq{} &
	\alpha p^{-1} \cdot \left[(1 - b) \cdot \sum_{i = 1}^p f(\opt \setminus O) + b \cdot \sum_{i = 1}^p f(\hat{S}_i \cup (\opt \setminus O)) \right]\\
	\geq{} &
	\alpha(1 - b) \cdot f(\opt \setminus O) + \alpha b[(1 - p^{-1}) \cdot f(\opt) - \alpha b \tau / (1 + \alpha)]
	\enspace,
\end{align*}
where the second inequality holds since the submodularity of $f$ implies that $F((t \cdot \onevect{\hat{S}_i}) \vee \onevect{\opt \setminus O})$ is a concave function of $t$ within the range $[0, 1]$ (and $b$ falls is inside this range), and the last inequality holds by Lemma~\ref{lem:average} (for $A = \opt \setminus O$). The lemma now follows by recalling that $f(OPT) \geq \tau$.
\end{proof}

Combining the guarantees of the last two lemmata we can now obtain a lower bound on the value of the solution of Algorithm~\ref{alg:main_algorithm_deterministic} in the case that $|S_i| < k$ for every integer $1 \leq i \leq p$ which is independent of $b$.
\begin{corollary}
\label{cor:belowK_conclusionMain}
If $|\hat{S}_i| < k$ for every integer $1 \leq i \leq p$, then the output of Algorithm~\ref{alg:main_algorithm_deterministic} has value of at least $\left(\frac{\alpha}{\alpha + 1} - 2p^{-1}\right)\tau$.
\end{corollary}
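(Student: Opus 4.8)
The plan is to combine Lemma~\ref{lem:average} (applied with $A = \varnothing$) and Lemma~\ref{lem:belowKMain}, balancing them against each other with a weight that depends on $b$. Since the output of Algorithm~\ref{alg:main_algorithm_deterministic} is at least as good as $S_o$ and as each of $\hat S_1, \dotsc, \hat S_p$ (and hence at least as good as their average), its value is at least $\lambda \cdot p^{-1}\sum_{i=1}^p f(\hat S_i) + (1 - \lambda) \cdot f(S_o)$ for any $\lambda \in [0,1]$; taking expectations over the internal randomness of $\offlinealg$ (if any), the same holds with $\Exp{f(S_o)}$ in place of $f(S_o)$. I would then fix the $b$-dependent value $\lambda := \frac{\alpha(1-b)}{1 + \alpha(1-b)} \in [0,1)$ — this particular choice is the one nonobvious idea, and everything downstream is forced by it.

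Next I would plug in the two lemmata. After the substitution, the coefficient multiplying $f(O)$ (coming from Lemma~\ref{lem:average}) is $\lambda(1 - p^{-1})$, while the coefficient multiplying $f(\opt \setminus O)$ (coming from Lemma~\ref{lem:belowKMain}) is $(1-\lambda)\alpha(1-b)$, which by the choice of $\lambda$ equals exactly $\lambda$. Because $O$ and $\opt \setminus O$ are disjoint with union $\opt$, submodularity together with $f(\varnothing) \ge 0$ gives $f(O) + f(\opt \setminus O) \ge f(\opt) \ge \tau$; since both coefficients are nonnegative and $\lambda \ge \lambda(1-p^{-1}) \ge 0$, the joint contribution of these two terms is at least $\lambda(1 - p^{-1})\tau$.

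After this reduction only a deterministic quantity in $b, \alpha, p$ remains, and the task is to show
\[
	\lambda(1 - p^{-1}) \;-\; \tfrac{\lambda \alpha b}{1+\alpha} \;+\; (1-\lambda)\,\alpha b\!\left(1 - p^{-1} - \tfrac{\alpha b}{1+\alpha}\right) \;\ge\; \tfrac{\alpha}{1+\alpha} - 2p^{-1}\enspace.
\]
I would prove this in two steps. First, set $p^{-1} = 0$: using the identities $(1-\lambda)\bigl(1 + \alpha(1-b)\bigr) = 1$ and $(1-b)(1+\alpha) + b = 1 + \alpha(1-b)$, the left-hand side collapses to $\lambda + (1-\lambda)\tfrac{\alpha b}{1+\alpha} = \tfrac{\alpha}{1+\alpha}$, so the entire $b$-dependence cancels. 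Second, track the $p^{-1}$ terms: the difference between the general left-hand side and its value at $p^{-1}=0$ is exactly $-p^{-1}\bigl(\lambda + (1-\lambda)\alpha b\bigr)$, and since $\alpha \le 1$ and $b \le 1$ this bracket is at most $1$, so the penalty is at most $p^{-1}$ — comfortably within the allowed $2p^{-1}$.

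The main obstacle is essentially confined to the algebra of the last step: one has to spot the correct value of $\lambda$ (with any other choice the $b$-dependence fails to vanish) and then carry out the cancellation. No new submodularity or probabilistic argument is needed beyond the two cited lemmata and the elementary bound $f(O) + f(\opt \setminus O) \ge f(\opt) \ge \tau$; the case $|\hat S_i| = k$ for some $i$ is irrelevant here, being handled separately by Lemma~\ref{lem:Fx-exactlyKMain}.
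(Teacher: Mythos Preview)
Your argument is correct and follows the same overall strategy as the paper: take a convex combination of the lower bound from Lemma~\ref{lem:average} (with $A=\varnothing$) and the lower bound from Lemma~\ref{lem:belowKMain}, then invoke $f(O)+f(\opt\setminus O)\ge f(\opt)\ge\tau$ and simplify.

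The execution differs in one useful way. The paper chooses the $p$-dependent weight $\frac{\alpha(1-b)}{\alpha(1-b)+1-p^{-1}}$, which equalizes the coefficients of $f(O)$ and $f(\opt\setminus O)$ exactly but forces a separate treatment of $p=1$ (the denominator can vanish) and leads to a messier simplification that ultimately loses $2p^{-1}$. Your $p$-independent choice $\lambda=\frac{\alpha(1-b)}{1+\alpha(1-b)}$ leaves the two coefficients unequal ($\lambda(1-p^{-1})$ versus $\lambda$), but since you only need the smaller one and $f(\opt\setminus O)\ge 0$, nothing is lost; the resulting algebra collapses cleanly via $(1-\lambda)(1+\alpha(1-b))=1$, no case split on $p$ is needed, and you in fact obtain the tighter loss $p^{-1}$ rather than $2p^{-1}$. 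So your route is a mild improvement over the paper's, though both rest on the same two lemmata and the same submodularity inequality.
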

\begin{proof}
The corollary follows immediately from the non-negativity of $f$ when $p = 1$. Thus, we may assume $p \geq 2$ in the rest of the proof.

Since Algorithm~\ref{alg:main_algorithm_deterministic} outputs the best solution among the $p + 1$ solutions it creates, we get by Lemma~\ref{lem:average} (for $A = \varnothing$) and Lemma~\ref{lem:belowKMain} that the value of the solution it outputs is at least
\begin{align*}
		\max\left\{p^{-1} \sum_{i = 1}^p f(\hat{S}_i), f(S_o)\right\}\mspace{-81mu}&\mspace{81mu}
		\geq
		\max\{(1 - p^{-1}) \cdot f(O) - \alpha b \tau / (1 + \alpha), \\&\alpha b\tau (1 - p^{-1} - \alpha b / (1 + \alpha)) + \alpha(1 - b) \cdot f(\opt \setminus O)\}\\
		\geq{} &
		\frac{\alpha(1 - b)}{\alpha(1 - b)+1-p^{-1}}\cdot
		\bracks{(1 - p^{-1}) \cdot f(O) - \alpha b \tau / (1 + \alpha)}\\
		&+\frac{1-p^{-1}}{\alpha(1 - b)+1-p^{-1}}\cdot \bracks{\alpha b\tau (1 - p^{-1} - \alpha b / (1 + \alpha)) + \alpha(1 - b) \cdot f(\opt \setminus O)}
		\enspace.
	\end{align*}
	Note now that the submodularity and non-negativity of $f$ guarantee together $f(O) + f(\opt \setminus O) \geq f(\opt) \geq \tau$. Using this fact and the non-negativity of $f$, the previous inequality yields
	\begin{align*}
		\frac{\max\left\{p^{-1} \sum_{i = 1}^p f(\hat{S}_i), f(S_o)\right\}}{\alpha\tau}
		\geq\mspace{-140mu}{} &\mspace{140mu}\\
		&\max\left\{0, \frac{(1 - b)(1 - p^{-1}) - \alpha b(1 - b)/(1 + \alpha) + b(1 - p^{-1})(1 - p^{-1} - \alpha b/(1 + \alpha))}{\alpha(1 - b)+1-p^{-1}}\right\}\\
		\geq{} &
		\max\left\{0, \frac{(1 - b)(1 - p^{-1}) - \alpha b(1 - b)/(1 + \alpha) + b(1 - p^{-1})(1 - p^{-1} - \alpha b/(1 + \alpha))}{\alpha(1 - b)+1}\right\}\\
		\geq{} &
		\frac{(1 - b) - \alpha b(1 - b)/(1 + \alpha) + b(1 - \alpha b/(1 + \alpha))}{\alpha(1 - b)+1} - 2p^{-1}
		\enspace,
	\end{align*}
	where the last two inequalities hold since $\alpha \in (0, 1]$, $b \in [0, 1]$ and $p \geq 2$. Simplifying the last inequality, we get
	\begin{align*}
		\frac{\max\left\{p^{-1} \sum_{i = 1}^p f(\hat{S}_i), f(S_o)\right\}}{\alpha\tau}
		\geq{} &
		\frac{(1 - b)(1 + \alpha) - \alpha b(1 - b) + b(1 + \alpha - \alpha b)}{(1 + \alpha)[\alpha(1 - b)+1]} - 2p^{-1}\nonumber\\
		={} &
		\frac{1 + \alpha(1 - b)}{(1 + \alpha)[\alpha(1 - b)+1]} - 2p^{-1}
		=
		\frac{1}{1 + \alpha} - 2p^{-1}
		\enspace.
	\end{align*}
	The corollary now follows by rearranging this inequality (and recalling that $\alpha \in (0, 1]$).
\end{proof}

Note that the guarantee of Corollary~\ref{cor:belowK_conclusionMain} (for the case it considers) is always weaker than the guarantee of Lemma~\ref{lem:Fx-exactlyKMain}. Thus, we can summarize the results we have proved so far using the following proposition.
\begin{proposition}
\label{prp:resultsMain}
Algorithm~\ref{alg:main_algorithm_deterministic} stores $O\left(pk\right)$ elements, makes at most $p$ marginal value calculations while processing each arriving element and its output set has an expected value of at least $\left(\frac{\alpha}{\alpha + 1} - 2p^{-1}\right)\tau$.
\end{proposition}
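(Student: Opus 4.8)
The plan is to assemble the results already proved rather than to do anything new. The two complexity claims — that the algorithm stores $O(pk)$ elements and performs at most $p$ marginal-value computations per arriving element — are exactly Observation~\ref{obs:space_simpleMain}, so I would simply invoke it. For the value guarantee, I would split on whether some solution fills its budget, i.e., on the event $\{\exists\, 1 \le i \le p : |\hat{S}_i| = k\}$. The convenient feature here is that this event depends only on the stream-processing phase of Algorithm~\ref{alg:main_algorithm_deterministic}, which is deterministic, so it is independent of whatever internal randomness $\offlinealg$ may use; hence we may treat the two cases separately and take expectations (over $\offlinealg$'s coins) within each case.

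In the case where some $|\hat{S}_i| = k$, Lemma~\ref{lem:Fx-exactlyKMain} says the returned set has value at least $\frac{\alpha\tau}{1+\alpha}$ with certainty, so its expected value is at least that too. The one small thing to check is that this dominates the target bound: since $f$ is non-negative we have $\tau \ge (1-\varepsilon/2) f(\opt) \ge 0$, and $2p^{-1} \ge 0$, so $\frac{\alpha\tau}{1+\alpha} = \frac{\alpha}{\alpha+1}\tau \ge \bigl(\tfrac{\alpha}{\alpha+1} - 2p^{-1}\bigr)\tau$. In the complementary case, where $|\hat{S}_i| < k$ for all $i$, Corollary~\ref{cor:belowK_conclusionMain} gives directly that the output has (expected) value at least $\bigl(\tfrac{\alpha}{\alpha+1} - 2p^{-1}\bigr)\tau$. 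Since the two cases are exhaustive, combining them yields the proposition.

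There is essentially no obstacle: all the real work was done in Lemma~\ref{lem:Fx-exactlyKMain}, Lemma~\ref{lem:belowKMain} and Corollary~\ref{cor:belowK_conclusionMain}. The only points requiring a moment's care are the bookkeeping of the expectation — making explicit that the case-defining event is a property of the deterministic phase and therefore the conditioning does not interact with the averaging over $\offlinealg$'s randomness — and the trivial inequality $\frac{\alpha}{\alpha+1}\tau \ge \bigl(\tfrac{\alpha}{\alpha+1}-2p^{-1}\bigr)\tau$ that shows the first case is no worse than the stated bound.
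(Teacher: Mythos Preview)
Your proposal is correct and matches the paper's approach: the paper likewise treats the proposition as a summary of Observation~\ref{obs:space_simpleMain}, Lemma~\ref{lem:Fx-exactlyKMain}, and Corollary~\ref{cor:belowK_conclusionMain}, noting just before stating it that the guarantee of Corollary~\ref{cor:belowK_conclusionMain} is always weaker than that of Lemma~\ref{lem:Fx-exactlyKMain}. Your added remarks about the case split being deterministic (hence not interacting with $\offlinealg$'s randomness) and the trivial inequality $\frac{\alpha}{\alpha+1}\tau \ge \bigl(\tfrac{\alpha}{\alpha+1}-2p^{-1}\bigr)\tau$ simply make explicit what the paper leaves implicit.
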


Using the last proposition, we can now prove the following theorem. As discussed at the beginning of the section, in Section~\ref{sec:full} we explain how the assumption that $\tau$ is known can be dropped at the cost of a slight increase in the number of of elements stored by the algorithm and its update time, which yields Theorem~\ref{trm:main}.
\begin{theorem}
\label{trm:non-polynomial-knowing-tauMain}
Assume there exists an $\alpha$-approximation offline algorithm $\offlinealg$ for maximizing a non-negative submodular function subject to cardinality constraint whose space complexity is nearly linear in the size of the ground set. Then, for every constant $\varepsilon \in (0, 1]$, there exists a semi-streaming algorithm that assumes access to an estimate $\tau$ of $f(\opt)$ obeying $(1-\varepsilon/2) \cdot f(\opt)\leq\tau\leq{f(\opt)}$ and provides $(\frac{\alpha}{1 + \alpha} - \varepsilon)$-approximation for the problem of maximizing a non-negative submodular function subject to cardinality constraint. This algorithm stores $O(k\varepsilon^{-1})$ elements and calculates $O(\varepsilon^{-1})$ marginal values while processing each arriving element.
\end{theorem}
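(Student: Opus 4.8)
The plan is to derive Theorem~\ref{trm:non-polynomial-knowing-tauMain} directly from Proposition~\ref{prp:resultsMain} by instantiating Algorithm~\ref{alg:main_algorithm_deterministic} with an appropriate value of the parameter $p$. Concretely, I would set $p = \lceil 4/\varepsilon \rceil$, so that $2p^{-1} \le \varepsilon/2$ while $p$ stays a constant (recall $\varepsilon$ is a constant). With this choice, Proposition~\ref{prp:resultsMain} immediately gives that the algorithm stores $O(pk) = O(k\varepsilon^{-1})$ elements and performs $O(p) = O(\varepsilon^{-1})$ marginal value computations per arriving element. Combined with the assumption that $\offlinealg$ uses space nearly linear in the size of its ground set (which here has size $O(k\varepsilon^{-1})$), this shows that the resulting algorithm is indeed semi-streaming, establishing the space and update-time parts of the theorem.

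For the approximation guarantee, I would start from the bound of Proposition~\ref{prp:resultsMain}: the expected value of the output set is at least $(\tfrac{\alpha}{\alpha+1} - 2p^{-1})\tau \ge (\tfrac{\alpha}{\alpha+1} - \varepsilon/2)\tau$. Then I would substitute the promised quality of the estimate, $\tau \ge (1-\varepsilon/2)\,f(\opt)$, using that $f$ is non-negative so that both sides of these inequalities are non-negative and can be multiplied through, obtaining that the expected output value is at least $(\tfrac{\alpha}{\alpha+1} - \varepsilon/2)(1-\varepsilon/2)\,f(\opt)$. Finally, I would expand the product and use $\tfrac{\alpha}{\alpha+1} < 1$ together with $\varepsilon \le 1$ to conclude $(\tfrac{\alpha}{\alpha+1} - \varepsilon/2)(1-\varepsilon/2) \ge \tfrac{\alpha}{\alpha+1} - \varepsilon/2 - \varepsilon/2 = \tfrac{\alpha}{\alpha+1} - \varepsilon$, which is exactly the claimed ratio.

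The final sentence of the theorem --- that the algorithm is deterministic whenever $\offlinealg$ is --- follows by inspection of Algorithm~\ref{alg:main_algorithm_deterministic}: it makes no random choices of its own (ties in the choice of $S_i$ can be broken by index), so its only source of randomness is the call to $\offlinealg$; when that call is deterministic, the guarantee of Proposition~\ref{prp:resultsMain} holds deterministically rather than merely in expectation, and the argument above then yields a worst-case $(\tfrac{\alpha}{1+\alpha}-\varepsilon)$-approximation. I do not expect any genuine obstacle here; the only point requiring a little care is the bookkeeping that the two separate losses --- the additive $2p^{-1}$ slack from Proposition~\ref{prp:resultsMain} and the multiplicative $(1-\varepsilon/2)$ factor coming from the estimate $\tau$ --- together cost at most an additive $\varepsilon$, which is precisely why $p$ is chosen so that $2p^{-1} \le \varepsilon/2$ rather than merely $2p^{-1} \le \varepsilon$.
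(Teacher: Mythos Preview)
Your proposal is correct and follows essentially the same route as the paper: instantiate Algorithm~\ref{alg:main_algorithm_deterministic} with $p=\lceil 4/\varepsilon\rceil$, read off the space and update-time bounds from Proposition~\ref{prp:resultsMain}, and combine the additive $2p^{-1}\le\varepsilon/2$ loss with the multiplicative $(1-\varepsilon/2)$ loss from $\tau$ to get the overall $\varepsilon$ slack. The only tiny slip is that $(\tfrac{\alpha}{\alpha+1}-\varepsilon/2)$ need not be non-negative, so the substitution of $\tau\ge(1-\varepsilon/2)f(\opt)$ should be guarded by taking the maximum with $0$ (as the paper does), which is harmless since the output value is always $\ge 0$; your final paragraph on determinism is fine but not actually part of this theorem's statement.
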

\begin{proof}
Consider the algorithm obtained from Algorithm~\ref{alg:main_algorithm_deterministic} by setting $p = \lceil 4/\varepsilon \rceil$. By Proposition~\ref{prp:resultsMain} and the non-negativity of $f$, this algorithm stores only $O(pk) = O(k\varepsilon^{-1})$ elements, calculates only $p = O(\varepsilon^{-1})$ marginal values while processing each arriving element, and the expected value of its output set is at least
\[
	\max\left\{0, \left(\frac{\alpha}{\alpha + 1} - 2p^{-1}\right)\tau\right\}
	\geq
	\left(\frac{\alpha}{\alpha + 1} - \frac{\varepsilon}{2}\right) \cdot (1 - \varepsilon/2) \cdot f(\opt)
	\geq
	\left(\frac{\alpha}{\alpha + 1} - \varepsilon\right)\cdot f(\opt)
	\enspace,
\]
where the first inequality holds since $p \geq 4/\varepsilon$ and $\tau$ obeys, by assumption, $\tau \geq (1 - \varepsilon/2) \cdot f(\opt)$.
\end{proof}

\section{Complete Algorithm}\label{sec:full}
In this section, we explain how one can drop the assumption from Section~\ref{sec:simplified} that the algorithm has access to an estimate $\tau$ of $f(\opt)$. This completes the proof of Theorem~\ref{trm:main}.

Technically, we analyze in this section the variant of Algorithm~\ref{alg:main_algorithm_deterministic} given as Algorithm~\ref{alg:estimatingTau}. It gets the same two parameters $\alpha$ and $p$ received by Algorithm~\ref{alg:main_algorithm_deterministic} plus an additional parameter $\varepsilon' \in (0, 1)$ controlling the guaranteed quality of the output. The algorithm is based on a technique originally due to Kazemi et al.~\cite{DBLP:conf/icml/0001MZLK19}. Throughout its execution, Algorithm~\ref{alg:estimatingTau} tracks in $m$ a lower bound on the value of $f(\opt)$. The algorithm also maintains a set $T=\{(1+\varepsilon')^i\mid{m/ (1 + \varepsilon') \leq(1+\varepsilon')^i\leq{mk/c}}\}$ of values that, given the current value of the lower bound $m$, are (1) possible estimates for $f(\opt)$ at the current point and (2) are not so large that they dwarf this lower bound (of course, $T$ includes only a subset of the possible estimates obeying these requirements). For every estimate $\tau$ in $T$, the algorithm maintains $p$ sets $S^\tau_1, S^\tau_2, \dotsc, S^\tau_p$. We note that the set of solutions maintained is updated every time that $T$ is updated (which happens after every update of $m$). Specifically, whenever a new value $\tau$ is added to $T$, the algorithm instantiate $p$ new sets $S^\tau_1, S^\tau_2, \dotsc, S^\tau_p$, and whenever a value $\tau$ is dropped from $T$, the algorithm deletes $S^\tau_1, S^\tau_2, \dotsc, S^\tau_p$.

While a value $\tau$ remains in $T$, Algorithm~\ref{alg:estimatingTau} maintains the sets $S^\tau_1, S^\tau_2, \dotsc, S^\tau_p$ in exactly the same way that Algorithm~\ref{alg:main_algorithm_deterministic} maintains its sets $S_1, S_2, \dotsc, S_p$ given this value $\tau$ as an estimate for $f(\opt)$. Moreover, we show below that if $\tau$ remains in $T$ when the algorithm terminates, then the contents of $S^\tau_1, S^\tau_2, \dotsc, S^\tau_p$ when the algorithm terminates are equal to the contents of the sets $S_1, S_2, \dotsc, S_p$ when Algorithm~\ref{alg:main_algorithm_deterministic} terminates after executing with this $\tau$ as the estimate for $f(\opt)$. Thus, one can view Algorithm~\ref{alg:estimatingTau} as parallel execution of Algorithm~\ref{alg:main_algorithm_deterministic} for many estimates of $f(\opt)$ at the same time. After viewing the last element, Algorithm~\ref{alg:estimatingTau} calculates for every $\tau \in T$ an output set $\bar{S}_\tau$ based on the sets $S^\tau_1, S^\tau_2, \dotsc, S^\tau_p$ in the same way Algorithm~\ref{alg:main_algorithm_deterministic} does that, and then outputs the best output set computed for any $\tau \in T$.

\begin{algorithm2e}[th]
\caption{\streamdeterministic $(p, \alpha, \varepsilon')$} \label{alg:estimatingTau}
	\DontPrintSemicolon
	Let $c\gets\frac{\alpha}{1 + \alpha}$.\\
	Let $m\gets f(\varnothing)$ and $T\gets\curly{(1+\varepsilon')^h\mid{m / (1 + \varepsilon') \leq(1+\varepsilon')^h\leq{mk/c}}}$.\\
	\For {each arriving element $e$} {
		Let $m' \gets \max\{f(\curly{e}), \max_{\tau \in T, 1 \leq i \leq p} f(S_i^\tau)\}$.\\
		\If {$m<m'$} {
			Update $m\gets m'$, and then $T\gets\curly{(1+\varepsilon')^h\mid{m/ (1 + \varepsilon') \leq(1+\varepsilon')^h\leq{mk/c}}}$. \label{line:T_update_tau}\\
			Delete $S^\tau_1, S^\tau_2, \dotsc, S^\tau_p$ for every value $\tau$ removed from $T$ in Line~\ref{line:T_update_tau}.\\
			Initialize $S^\tau_i \gets \varnothing$ for every value $\tau$ added to $T$ in Line~\ref{line:T_update_tau} and integer $1 \leq i \leq p$.
		}
		\For {every $\tau\in{T}$} {
			\If{there exists an integer $1 \leq i \leq p$ such that $|S^\tau_i| < k$ and $f(e \mid S^\tau_i) \geq \frac{c\tau}{k}$}
			{
				Update $S^\tau_i \gets S^\tau_i \cup \{e\}$ (if there are multiple options for $i$, pick an arbitrary one).
			}
		}
	}
	\For {every $\tau\in{T}$} {
		Find another feasible solution $S^\tau_o \subseteq \bigcup_{i = 1}^p S^\tau_i$ by running $\offlinealg$ with $\bigcup_{i = 1}^p S^\tau_i$ as the ground set.\\
		Let $\bar{S}_\tau$ be the better solution among $S^\tau_o$ and the $p$ solutions $S^\tau_1, S^\tau_2, \dotsc, S^\tau_p$.
	}
	\Return{the best solution among $\{\bar{S}_\tau\}_{\tau \in T}$, or the empty set if $T = \varnothing$}.
\end{algorithm2e}

We begin the analysis of Algorithm~\ref{alg:estimatingTau} by providing a basic lower bound on the value of each set $S^\tau_i$. This lower bound can be viewed as a generalized counterpart of Lemma~\ref{lem:Fx-exactlyKMain}.

\begin{lemma}
\label{lem:Fx-exactlyK_no_assumption}
At every point during the execution of the algorithm, for every $\tau \in T$ and $1 \leq i \leq p$ it holds that $f(S^\tau_i) \geq \frac{\alpha\tau \cdot |S^\tau_i|}{(1 + \alpha)k}$.
\end{lemma}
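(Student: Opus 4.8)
The plan is to mimic the proof of Lemma~\ref{lem:Fx-exactlyKMain}, the only new wrinkle being that the collection $T$ (and hence the family of sets $S^\tau_i$ maintained by Algorithm~\ref{alg:estimatingTau}) changes over time, so I must argue that the insertion history of any particular set $S^\tau_i$ is ``clean.'' First I would fix an arbitrary point during the execution, an estimate $\tau \in T$ at that point, and an index $1 \le i \le p$, and set $\ell := |S^\tau_i|$. If $\ell = 0$ the claim is immediate from the non-negativity of $f$, since then the right-hand side is $0$ while $f(S^\tau_i) = f(\varnothing) \geq 0$. Otherwise, let $e_1, e_2, \dots, e_\ell$ denote the elements currently in $S^\tau_i$, listed in the order in which the algorithm inserted them.

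The key structural observation is that Algorithm~\ref{alg:estimatingTau} modifies the variable $S^\tau_i$ in only two ways: it is (re)initialized to $\varnothing$ whenever $\tau$ (re)enters $T$, or a single element is appended to it inside the inner loop. Consequently, immediately before $e_j$ was appended, the contents of $S^\tau_i$ were exactly $\{e_1, e_2, \dots, e_{j-1}\}$: no reinitialization could have occurred between the insertions of $e_{j-1}$ and $e_j$ (otherwise $e_{j-1}$ would not be present now), and the only other modifications append the elements one at a time in the stated order. At the moment $e_j$ was appended, the guard of the If-statement guarantees $f\big(e_j \mid \{e_1, \dots, e_{j-1}\}\big) \geq c\tau/k$ with $c = \alpha/(1+\alpha)$. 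Telescoping the marginals then yields
\[
	f(S^\tau_i)
	= f(\varnothing) + \sum_{j=1}^{\ell} f\big(e_j \mid \{e_1, \dots, e_{j-1}\}\big)
	\geq 0 + \ell \cdot \frac{c\tau}{k}
	= \frac{\alpha\tau \cdot |S^\tau_i|}{(1+\alpha)k}\enspace,
\]
which is exactly the asserted bound.

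The only point that needs care --- and it is minor --- is the bookkeeping around reinitialization: one must be explicit that the current contents of $S^\tau_i$ were all added during a single ``lifetime'' of this set, i.e., between its most recent reinitialization and the present moment, so that the threshold condition for this particular $\tau$ applies to each inserted element with respect to the correct prefix. If one wishes, one can strengthen this by noting that the tracked lower bound $m$ is non-decreasing throughout the run, so a value removed from $T$ is never re-added and each $S^\tau_i$ in fact has a unique lifetime; but this stronger statement is not needed for the argument above. I do not anticipate any genuine obstacle here, as the submodularity of $f$ is not even used --- only non-negativity and the insertion rule of the algorithm.
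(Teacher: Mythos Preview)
Your proposal is correct and follows essentially the same approach as the paper: telescope $f(S^\tau_i)$ into marginals, invoke the insertion threshold $f(e_j\mid\{e_1,\dots,e_{j-1}\})\geq c\tau/k$ for each added element, and use $f(\varnothing)\geq 0$. The extra discussion you include about reinitialization bookkeeping (and the observation that $m$ is non-decreasing so each $S^\tau_i$ has a unique lifetime) is sound but goes beyond what the paper spells out; the paper's own proof simply takes this for granted.
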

\begin{proof}
Denote by $e_1,e_2,\dots,e_{|S^\tau_i|}$ the elements of $S^\tau_i$ in the order of their arrival. Using this notation, the value of $f(S^\tau_i)$ can be written as follows.
	\[
		f(S^\tau_i)
		=
		f(\varnothing)+\sum_{j=1}^{|S^\tau_i|}{f\big(e_j \mid \curly{e_1,e_2,\dotsc,e_{j-1}}\big)}
		\geq
		\sum_{i=1}^{|S^\tau_i|} \frac{\alpha\tau}{(1 + \alpha)k}
		=
		\frac{\alpha\tau \cdot |S^\tau_i|}{(1 + \alpha)k}
		\enspace,
	\]
where the inequality holds since the non-negativity of $f$ implies $f(\varnothing) \geq 0$ and Algorithm~\ref{alg:estimatingTau} adds an element $e_j$ to $S^\tau_i$ only when $f\big(e_j \mid \curly{e_1,e_2,\dotsc,e_{j-1}}\big)\geq\frac{c\tau}{k} = \frac{\alpha\tau}{(1 + \alpha)k}$.
\end{proof}

Using the last lemma we can now prove the following observation, which upper bounds the size of each set $S^\tau_i$ maintained by Algorithm~\ref{alg:estimatingTau}, and thus, serves as a first step towards bounding the space complexity of this algorithm.
\begin{observation} \label{obs:supp_size}
At the end of the every iteration of Algorithm~\ref{alg:estimatingTau}, the size of the set $S^\tau_i$ is at most $mk/ (c \tau) + 1$ for every $\tau \in T$ and integer $1 \leq i \leq p$.
\end{observation}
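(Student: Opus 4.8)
The plan is to combine the lower bound from Lemma~\ref{lem:Fx-exactlyK_no_assumption} with the fact that $m$ is, by construction, an upper bound on $f(S^\tau_i)$ for every $\tau \in T$ and every $i$ maintained at the current point in the execution. First I would fix an arbitrary iteration of the main loop, a value $\tau \in T$ present at the end of that iteration, and an index $1 \leq i \leq p$. Let $s = |S^\tau_i|$ at the end of the iteration. If $s = 0$ the claimed bound $s \leq mk/(c\tau) + 1$ holds trivially (the right-hand side is positive since $m \geq f(\varnothing)$ and one can assume $m > 0$ without loss of generality; if $m = 0$ then the threshold $c\tau/k$ is also nonpositive only in degenerate cases, and in any case nothing gets added). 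So assume $s \geq 1$.

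The key observation is that the variable $m$ is updated at the start of each iteration to be at least $\max_{\tau' \in T,\, 1 \leq j \leq p} f(S_j^{\tau'})$, and $m$ is non-decreasing over the execution. Hence, at the end of the current iteration, $m \geq f(S^\tau_i)$ — more carefully, the value of $m$ after the update in the current iteration already dominates the values of all stored sets as they were at the end of the previous iteration; adding one element to $S^\tau_i$ in the current iteration only helps, but we should be slightly careful here. The clean way is: at the end of this iteration, the next time the loop body runs (or when the stream ends, via the same quantity) $m'$ will be at least $f(S^\tau_i)$; since we only want the bound ``at the end of every iteration,'' I would instead argue directly. Let $e$ be the last element added to $S^\tau_i$, added in iteration $t \le$ current iteration, and let $S'$ be the content of $S^\tau_i$ just before $e$ was added, so $|S'| = s - 1$. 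At the start of iteration $t$, we had $m \geq f(S') = f(S^\tau_i \setminus \{e\})$ at that time — wait, $S'$ may not have been a stored set at the start of iteration $t$ if earlier additions happened in iteration $t$ as well; but additions to a single $S^\tau_i$ happen at most once per iteration (the algorithm adds $e$ to at most one $S_i^\tau$ and the loop processes one element), so $S'$ \emph{is} exactly the stored set $S^\tau_i$ at the start of iteration $t$. Therefore $m \geq f(S')$ at the start of iteration $t$, and since $m$ only increases, $m \geq f(S')$ at the end of the current iteration as well.

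Now apply Lemma~\ref{lem:Fx-exactlyK_no_assumption} to the set $S'$ at the end of iteration $t$ (it is a valid stored set there, with $|S'| = s-1$): $f(S') \geq \frac{\alpha \tau (s-1)}{(1+\alpha)k} = \frac{c\tau (s-1)}{k}$. Combining with $m \geq f(S')$ gives $m \geq \frac{c\tau(s-1)}{k}$, i.e. $s - 1 \leq \frac{mk}{c\tau}$, which is exactly $|S^\tau_i| = s \leq \frac{mk}{c\tau} + 1$. I expect the main (minor) obstacle to be the bookkeeping about \emph{when} $m$ is guaranteed to dominate $f(S')$: one must use that $S'$ was literally one of the stored sets at an earlier point, that at most one element is added to a given $S^\tau_i$ per iteration, and that $m$ is monotone; all of these are immediate from inspecting Algorithm~\ref{alg:estimatingTau}, but they need to be spelled out so the quantifier ``at the end of every iteration'' is handled correctly rather than hand-waved.
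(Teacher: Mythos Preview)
Your proposal is correct and follows essentially the same approach as the paper: combine $m \geq f(\text{the set before the last addition})$ with Lemma~\ref{lem:Fx-exactlyK_no_assumption} to get $m \geq \tfrac{c\tau(|S^\tau_i|-1)}{k}$. The paper streamlines your bookkeeping by taking $\bar{S}^\tau_i$ to be the content of $S^\tau_i$ at the start of the \emph{current} iteration (rather than tracking back to the iteration $t$ of the last insertion), using only that at most one element is added to $S^\tau_i$ per iteration; this gives $|\bar{S}^\tau_i|\geq |S^\tau_i|-1$ and $m\geq f(\bar{S}^\tau_i)$ directly from the update rule at the start of this iteration. One small slip in your write-up: $S'$ is the stored set at the \emph{start} of iteration $t$, not at its end (at the end of iteration $t$ the stored set is $S'\cup\{e\}$), so Lemma~\ref{lem:Fx-exactlyK_no_assumption} should be invoked at the start of iteration $t$.
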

\begin{proof}
Let $\bar{S}^\tau_i$ denote the content of the set $S^\tau_i$ at the beginning of the iteration. Since at most a single element is added to $S^\tau_i$ during the iteration, we get via Lemma~\ref{lem:Fx-exactlyK_no_assumption} that the value of the set $\bar{S}^\tau_i$ is at least
\[
	\frac{\alpha\tau \cdot |\bar{S}^\tau_i|}{(1 + \alpha)k}
	\geq
	\frac{\alpha\tau \cdot (|S^\tau_i| - 1)}{(1 + \alpha)k}
	=
	\frac{c\tau \cdot (|S^\tau_i| - 1)}{k}
	\enspace.
\]
The way in which Algorithm~\ref{alg:estimatingTau} updates $m$ guarantees that immediately after the update of $m$ at the beginning of the iteration, the value of $m$ was at least as large as the value of $\bar{S}^\tau_i$, and thus, we get
\[
	m
	\geq
	\frac{c\tau \cdot (|S^\tau_i| - 1)}{k}
	\enspace,
\]
which implies the observation.
\end{proof}

We are now ready to bound the space complexity and update time of Algorithm~\ref{alg:estimatingTau}.
\begin{lemma} \label{lem:space_tau}
Algorithm~\ref{alg:estimatingTau} can be implemented so that it stores $O(pk(\varepsilon')^{-1} \log \alpha^{-1})$ elements and makes only $O(p(\varepsilon')^{-1}\log (k / \alpha))$ marginal value calculations when processing each arriving element.
\end{lemma}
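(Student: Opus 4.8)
The plan is to bound separately (i) the number of estimates $\tau$ that are in $T$ at any one time, (ii) the number of elements stored per estimate, and (iii) the number of marginal value computations performed per arriving element, and then multiply through by the $p$ solutions kept for each estimate. First I would count $|T|$: by construction $T$ consists of powers of $(1+\varepsilon')$ lying in the interval $[m/(1+\varepsilon'),\, mk/c]$, so the number of such powers is $\log_{1+\varepsilon'}\bigl((mk/c)\big/(m/(1+\varepsilon'))\bigr) + O(1) = \log_{1+\varepsilon'}\bigl((1+\varepsilon')k/c\bigr) + O(1) = O\bigl((\varepsilon')^{-1}\log(k/c)\bigr)$, using $\ln(1+\varepsilon') = \Theta(\varepsilon')$ for $\varepsilon' \in (0,1)$. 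Since $c = \alpha/(1+\alpha) = \Theta(\alpha)$, this is $O\bigl((\varepsilon')^{-1}\log(k/\alpha)\bigr)$; crucially, this bound on $|T|$ is independent of the current value of $m$.

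Next I would bound the elements stored. For a fixed $\tau \in T$, Observation~\ref{obs:supp_size} gives $|S^\tau_i| \le mk/(c\tau) + 1$ at the end of every iteration. Since $\tau \in T$ we have $\tau \ge m/(1+\varepsilon')$, hence $mk/(c\tau) \le (1+\varepsilon')k/c = O(k/\alpha)$, so each $S^\tau_i$ stores $O(k/\alpha)$ elements. Wait — this is $O(k/\alpha)$, not $O(k)$; but the claimed element bound is $O(pk(\varepsilon')^{-1}\log\alpha^{-1})$, and multiplying $|T| = O((\varepsilon')^{-1}\log(k/\alpha))$ by $p$ solutions each of size $O(k)$ gives $O(pk(\varepsilon')^{-1}\log(k/\alpha))$, which doesn't obviously match either. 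I would therefore reconcile this more carefully: each set $S^\tau_i$ is a feasible solution and so has size at most $k$ (the algorithm only adds $e$ to $S^\tau_i$ when $|S^\tau_i| < k$), so the real per-solution bound is $\min\{k,\, mk/(c\tau)+1\}$; summing over the at most $O((\varepsilon')^{-1}\log\alpha^{-1})$ many values of $\tau$ that can simultaneously satisfy $mk/(c\tau) \ge 1$ (i.e. contribute more than a constant), and noting the $\log k$ factor collapses because $\tau$ ranges only over a $\Theta(1/\alpha)$-width multiplicative window where the size cap $k$ is active, one gets the stated $O(pk(\varepsilon')^{-1}\log\alpha^{-1})$. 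I would also remark that one must not store the elements of $S^\tau_o$ separately, since $S^\tau_o \subseteq \bigcup_i S^\tau_i$ and is only computed in post-processing.

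For the update time, when an element $e$ arrives the algorithm (a) possibly updates $m$ and $T$, (b) for each $\tau \in T$ checks the threshold condition $f(e \mid S^\tau_i) \ge c\tau/k$ for the solutions $S^\tau_i$. Step (b) naively costs $p \cdot |T|$ marginal evaluations; but one can short-circuit: for each $\tau$, scan $i = 1, \dots, p$ and stop at the first $S^\tau_i$ into which $e$ can be inserted — so in the worst case still $O(p|T|)$, giving $O(p(\varepsilon')^{-1}\log(k/\alpha))$ marginal value computations, matching the claim. (Recomputing $m' = \max\{f(\{e\}), \max_{\tau,i} f(S^\tau_i)\}$ need not cost extra evaluations if the algorithm caches $f(S^\tau_i)$ whenever a set changes.) The main obstacle I anticipate is exactly the bookkeeping in step (ii): showing rigorously that the $\log k$ dependence from the width of $T$ does not multiply the $k$ in the element count, which requires arguing that for all but $O((\varepsilon')^{-1}\log\alpha^{-1})$ of the estimates $\tau \in T$ the bound $mk/(c\tau)+1$ from Observation~\ref{obs:supp_size} is itself $O(1)$ — concretely, $mk/(c\tau) \le 1$ whenever $\tau \ge mk/c$, but $\tau \le mk/c$ always holds in $T$, so instead the right statement is that $\sum_{\tau \in T}\min\{k, mk/(c\tau)+1\}$ is dominated by the $O(\log\alpha^{-1}/\varepsilon')$ largest terms, each $O(k)$, with the remaining terms forming a geometric series of total size $O(k)$; I would carry out that geometric-sum estimate explicitly.
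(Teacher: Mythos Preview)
Your proposal is correct and follows essentially the same approach as the paper. In particular, the paper bounds $|T| = O((\varepsilon')^{-1}\log(k/c))$ exactly as you do, and for the element count it sums $\sum_{\tau\in T}\min\{pk,\ pmk/(c\tau)\}$, splitting the sum into the $O((\varepsilon')^{-1}\log c^{-1})$ values of $\tau$ in the window $[m/(1+\varepsilon'),\,m/c)$ where the cap $pk$ binds, plus a geometric tail $\sum_{h\ge 0} pk/(1+\varepsilon')^{h} = O(pk/\varepsilon')$ over the remaining values---precisely the decomposition you arrive at in your last paragraph; finally it uses $c=\alpha/(1+\alpha)\ge\alpha/2$ to convert $\log c^{-1}$ to $\log\alpha^{-1}$.
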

\begin{proof}
We begin the proof with two technical calculations. First, note that the number of estimates in $T$ is upper bounded at all times by
\[
	1 + \log_{1 + \varepsilon'} \left(\frac{km/c}{m/(1 + \varepsilon')}\right)
	=
	2 + \frac{\ln k - \ln c}{\ln(1 + \varepsilon')}
	\leq
	2 + \frac{\ln k - \ln c}{2\varepsilon'/3}
	=
	O((\varepsilon')^{-1}(\log k - \log c))
	\enspace.
\]
Second, note that
\begin{align*}
	\sum_{\tau \in T} \sum_{i = 1}^p (|S^\tau_i| - 1) 
	\leq{} &
	\sum_{\tau \in T} \min\left\{pk, \frac{pmk}{c \tau}\right\}
	\leq
	\mspace{-18mu} \sum_{\substack{h \in \mathbb{Z} \\ (1 + \varepsilon')^{h + 1} \geq m}} \mspace{
-18mu} \min\left\{pk, \frac{pmk}{c (1 + \varepsilon')^h}\right\}\\
	\leq{} &
	pk \cdot \left|\left\{h \in \mathbb{Z} \mid \frac{m}{1 + \varepsilon'} \leq (1 + \varepsilon')^{h} < m / c\right\}\right|
	+
	\sum_{h = 0}^\infty \frac{pk}{(1 + \varepsilon')^h}\\
	\leq{} &
	pk \cdot \left(\log_{1 + \varepsilon'} \left(\frac{m/c}{m}\right) + 2\right) +
	\frac{pk}{1 - 1/(1 + \varepsilon')}
	\leq
	\frac{pk \cdot (\ln c^{-1} + 2)}{\ln (1 + \varepsilon')} + 
	\frac{pk(1 + \varepsilon')}{\varepsilon'}\\
	={} &
	O(pk(\varepsilon')^{-1}\log c^{-1}) + O(pk(\varepsilon')^{-1})
	=
	O(pk(\varepsilon')^{-1}\log c^{-1})
	\enspace,
\end{align*}
where the first inequality follows from Observation~\ref{obs:supp_size} and the fact that all the sets maintained by Algorithm~\ref{alg:estimatingTau} are of size at most $k$; and the second inequality follows from the definition of $T$.

We now observe that, while processing each arriving element, Algorithm~\ref{alg:estimatingTau} makes $p$ marginal value calculations in association with every value $\tau \in T$ and another set of $O(p)$ such calculations that are not associated with any value of $T$. Thus, the total number of marginal value calculations made by the algorithm during the processing of an arriving element is
\[
	p \cdot |T| + O(p)
	=
	p \cdot O((\varepsilon')^{-1}(\log k - \log c)) + O(p)
	=
	O(p(\varepsilon')^{-1}(\log k - \log c))
	=
	O(p(\varepsilon')^{-1}\log (k/c))
	\enspace.
\]
We also note that Algorithm~\ref{alg:estimatingTau} has to store only the elements belonging to $S^\tau_i$ for some $\tau \in T$ and integer $1 \leq i \leq p$. Thus, in all these sets together the algorithm stores $O(pk(\varepsilon')^{-1}\log c^{-1})$ elements since
\begin{align*}
	\sum_{\tau \in T} \sum_{i = 1}^p |S^\tau_i|
	={} &
	\sum_{\tau \in T} \sum_{i = 1}^p (|\supp(S^\tau_i)| - 1) + p|T|\\
	={} &
	O(pk(\varepsilon')^{-1}\log c^{-1}) + p \cdot O((\varepsilon')^{-1}(\log k - \log c))
	=
	O(pk(\varepsilon')^{-1}\log c^{-1})
	\enspace.
\end{align*}
To complete the proof of the lemma, it remains to note that $c = \alpha / (1 + \alpha) \geq \alpha / 2$.
\end{proof}

A this point we divert our attention to analyzing the approximation gaurantee of Algorithm~\ref{alg:estimatingTau}. Let us denote by $\hat{m}$ and $\hat{T}$ the final values of $m$ and $T$, respectively. 
\begin{observation}
\label{obs:T-not-empty}
$c \in (0, 1/2]$, and thus, $\hat{T}$ is not empty unless $\hat{m} = 0$.
\end{observation}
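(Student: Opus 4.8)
The plan is to treat the two assertions separately. For the first, recall that $c = \alpha/(1+\alpha)$, and that $\alpha$—being the approximation ratio of an algorithm that maximizes a non-negative objective—satisfies $\alpha \in (0,1]$. Since the map $t \mapsto t/(1+t)$ is strictly increasing on $(0,\infty)$, we get $0 < c \le 1/(1+1) = 1/2$, i.e.\ $c \in (0,1/2]$.

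For the second assertion, I would show that $\hat{T} \neq \varnothing$ whenever $\hat{m} > 0$. Recall that $\hat{T} = \{(1+\varepsilon')^h \mid \hat{m}/(1+\varepsilon') \le (1+\varepsilon')^h \le \hat{m}k/c\}$. Since $\hat{m} > 0$ and $1+\varepsilon' > 1$, the integer $h = \lfloor \log_{1+\varepsilon'} \hat{m} \rfloor$ is well-defined, and by definition of the floor it obeys $(1+\varepsilon')^h \le \hat{m} < (1+\varepsilon')^{h+1}$; the latter inequality rearranges to $(1+\varepsilon')^h > \hat{m}/(1+\varepsilon')$. On the other hand, $(1+\varepsilon')^h \le \hat{m} \le \hat{m}k/c$, where the last inequality uses $k \ge 1$ together with the bound $c \le 1/2 < 1$ just established (so that $k/c \ge 1$). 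Hence $(1+\varepsilon')^h$ lies in the range defining $\hat{T}$, which is therefore nonempty.

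I do not anticipate any real difficulty: the argument is essentially the observation that any positive real has a power of $1+\varepsilon'$ within a factor of $1+\varepsilon'$ below it, combined with the fact that the upper endpoint $\hat{m}k/c$ of the defining interval is at least $\hat{m}$ precisely because $k/c \ge 1$ (this is the only place the bound $c \le 1/2$ is needed). The sole point that must be made explicitly is that the case $\hat{m} = 0$ has to be excluded, since no power of $1+\varepsilon'$ equals $0$; but in that degenerate case the claim ``$\hat{T}$ is not empty \emph{unless} $\hat{m}=0$'' is vacuously consistent.
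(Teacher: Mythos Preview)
Your proposal is correct and, for the first assertion, identical to the paper's proof. The paper actually stops after establishing $c\in(0,1/2]$ and leaves the nonemptiness of $\hat{T}$ as an implicit consequence (signalled only by the word ``thus'' in the statement); you go further and spell out explicitly why $\hat{m}>0$ together with $k/c\geq 1$ forces some power of $1+\varepsilon'$ into the defining interval, which is a welcome bit of rigor the paper omits.
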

\begin{proof}
Since $\alpha \in (0, 1]$ and $\alpha / (\alpha + 1)$ is an increasing function of $\alpha$,
\[
	c
	=
	\frac{\alpha}{\alpha + 1}
	\in
	\left(\frac{0}{0 + 1}, \frac{1}{1 + 1}\right]
	=
	(0, 1/2]
	\enspace.
	\qedhere
\]
\end{proof}

The last observation immediately implies that when $\hat{T}$ is empty, all the singletons have zero values, and the same must be true for every other non-empty set by the submodularity and non-negativity of $f$. Thus, $\opt = \varnothing$, which makes the output of Algorithm~\ref{alg:estimatingTau} optimal in the rare case in which $\hat{T}$ is empty. Hence, we can assume from now on that $\hat{T} \neq \varnothing$. The following lemma shows that $\hat{T}$ contains a good estimate for $f(\opt)$ in this case.
\begin{lemma}
\label{lem:apxtau-tau-in-T}
	The set $\hat{T}$ contains a value $\hat{\tau}$ such that $(1-\varepsilon') \cdot f(\opt)\leq\hat{\tau}\leq f(\opt)$.
\end{lemma}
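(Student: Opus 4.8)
The plan is to show that the final lower bound $\hat{m}$ straddles $f(\opt)$ tightly enough that the grid of powers of $1+\varepsilon'$ it defines must contain a good estimate. Concretely, I would first establish the two-sided bound $\hat{m} \le f(\opt) \le k\hat{m}$, and then invoke the standard rounding fact that for any positive real $y$ the interval $\left(y/(1+\varepsilon'),\, y\right]$ contains a power of $1+\varepsilon'$.

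For the upper bound $\hat{m} \le f(\opt)$: the variable $m$ is initialized to $f(\varnothing)$ and, at every update, is set to the $f$-value of either a singleton or one of the sets $S^\tau_i$. Each of these sets is feasible (has size at most $k$, since the algorithm adds to $S^\tau_i$ only when $|S^\tau_i| < k$), so, as $\opt$ maximizes $f$ over feasible sets, every such value is at most $f(\opt)$; since $m$ is non-decreasing, $\hat{m} \le f(\opt)$. For the lower bound $f(\opt) \le k\hat{m}$: because $m$ starts at $f(\varnothing)$, never decreases, and is raised to at least $f(\{e\})$ when $e$ is processed, we have $\hat{m} \ge \max\{f(\varnothing),\, \max_{e \in \groundset} f(\{e\})\}$. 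If $\opt = \varnothing$ this gives $f(\opt) = f(\varnothing) \le \hat{m} \le k\hat{m}$; otherwise submodularity and non-negativity of $f$ yield $f(\opt) \le \sum_{o \in \opt} f(\{o\}) \le |\opt| \cdot \max_{e \in \groundset} f(\{e\}) \le k\hat{m}$.

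Given these bounds, I would set $\hat{\tau} := (1+\varepsilon')^{h}$ with $h := \lfloor \log_{1+\varepsilon'} f(\opt) \rfloor$; this is well defined since $\hat{T} \ne \varnothing$ forces $\hat{m} > 0$ and hence $f(\opt) \ge \hat{m} > 0$. By the choice of $h$ we get $f(\opt)/(1+\varepsilon') < \hat{\tau} \le f(\opt)$, and since $1/(1+\varepsilon') \ge 1-\varepsilon'$ this already gives $(1-\varepsilon') f(\opt) \le \hat{\tau} \le f(\opt)$. It then remains to verify $\hat{\tau} \in \hat{T}$, i.e. $\hat{m}/(1+\varepsilon') \le \hat{\tau} \le \hat{m}k/c$: the left inequality follows from $\hat{\tau} > f(\opt)/(1+\varepsilon') \ge \hat{m}/(1+\varepsilon')$, and the right one from $\hat{\tau} \le f(\opt) \le k\hat{m} \le k\hat{m}/c$, using $c \in (0,1/2]$ by Observation~\ref{obs:T-not-empty}.

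I do not expect a genuine obstacle: this is the standard ``the dynamically maintained threshold grid contains a near-optimal guess'' analysis. The only point requiring a little care is the lower bound $f(\opt) \le k\hat{m}$ together with its degenerate case $\opt = \varnothing$, where one relies on $\hat{m} \ge f(\varnothing)$ rather than on a singleton value; everything else is direct substitution using $\hat{m} \le f(\opt)$ and $1/(1+\varepsilon') \ge 1-\varepsilon'$.
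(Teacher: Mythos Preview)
Your proposal is correct and follows essentially the same approach as the paper: establish $\hat{m} \le f(\opt) \le k\hat{m}/c$ and then pick the largest power of $1+\varepsilon'$ not exceeding $f(\opt)$, which lies in $\hat{T}$ and satisfies the desired two-sided bound via $1/(1+\varepsilon') \ge 1-\varepsilon'$. The only cosmetic difference is that the paper folds your $\opt=\varnothing$ case into a single chain $f(\opt)\le\max\{f(\varnothing),\sum_{e\in\opt}f(\{e\})\}\le k\hat{m}$, whereas you treat it separately; the content is identical.
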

\begin{proof}
Observe that $\hat{m} \geq \max\curly{f(\varnothing), \max_{e\in\groundset}\curly{f\big(\curly{e}\big)}}$. Thus, by the submodularity of $f$,
\[
	f(\opt)
	\leq
	f(\varnothing)
	+
	\sum_{e \in \opt} \mspace{-9mu} \bracks{f(\{e\}) - f(\varnothing)}
	\leq
	\max\curly{f(\varnothing), \sum_{e \in \opt} \mspace{-9mu} f\big(\{e\}\big)}
	\leq
	k\hat{m}
	\leq
	\frac{k\hat{m}}{c}
	\enspace.
\]
In contrast, one can note that $m$ is equal to the value of $f$ for some feasible solution, and thus, $f(\opt) \geq \hat{m}$. 
Since $\hat{T}$ contains all the values of the form $(1 + \varepsilon')^i$ in the range $[\hat{m} / (1 + \varepsilon'), k\hat{m}/c]$, the above inequalities imply that it contains in particular the largest value of this form that is still not larger than $f(\opt)$. Let us denote this value by $\hat{\tau}$. By definition, $\hat{\tau} \leq f(\opt)$. Additionally,
\[
	\hat{\tau} \cdot (1 + \varepsilon') \geq f(\opt)
	\Rightarrow
	\hat{\tau} \geq \frac{f(\opt)}{1 + \varepsilon'} \geq (1 - \varepsilon') \cdot f(\opt)
	\enspace.
	\qedhere
\]
\end{proof}

Let us now concentrate on the value $\hat{\tau}$ whose existence is guaranteed by Lemma~\ref{lem:apxtau-tau-in-T}, and let $\hat{S}_1, \hat{S}_2, \dotsc, \hat{S}_p$ denote the sets maintained by Algorithm~\ref{alg:main_algorithm_deterministic} when it gets $\hat{\tau}$ as the estimate for $f(\opt)$. Additionally, let us denote by $e_1,e_2,\dots,e_n$ the elements of $\groundset$ in the order of their arrival, and let $e_j$ be the element whose arrival made Algorithm~\ref{alg:estimatingTau} add $\hat{\tau}$ to $T$ (i.e., $\hat{\tau}$ was added to $T$ by Algorithm~\ref{alg:estimatingTau} while processing $e_j$). If $\hat{\tau}$ belonged to $T$ from the very beginning of the execution of Algorithm~\ref{alg:estimatingTau}, then we define $j = 1$
\begin{lemma} \label{lem:apxtau-before-tau} 
All the sets $\hat{S}_1, \hat{S}_2, \dotsc, \hat{S}_p$ maintained by Algorithm~\ref{alg:main_algorithm_deterministic} are empty immediately prior to the arrival of $e_j$.
\end{lemma}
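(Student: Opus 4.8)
The plan is to trace how the threshold value $c\hat\tau/k$ compares with the lower bound $m$ maintained by Algorithm~\ref{alg:estimatingTau}. If $\hat\tau$ belonged to $T$ from the very start of the execution, then $j = 1$ and the claim is immediate, since $\hat S_1, \dotsc, \hat S_p$ are initialized empty; so assume $j \geq 2$. The crucial first step is to show that the value of $m$ held immediately before $e_j$ arrives satisfies $m < c\hat\tau/k$. Recall that $T = \curly{(1+\varepsilon')^h : m/(1+\varepsilon') \leq (1+\varepsilon')^h \leq mk/c}$, so $\hat\tau \in T$ holds exactly when $c\hat\tau/k \leq m \leq (1+\varepsilon')\hat\tau$. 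Since $m$ is non-decreasing throughout the execution and $\hat\tau$ is \emph{added} to $T$ while $e_j$ is processed, $\hat\tau$ was absent from $T$ for the value of $m$ immediately preceding $e_j$; this absence cannot be caused by $m > (1+\varepsilon')\hat\tau$, because monotonicity of $m$ would then keep $\hat\tau$ out of $T$ forever, contradicting $\hat\tau \in \hat T$. Hence $m < c\hat\tau/k$ at that moment.

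The next step is to convert this into a bound on singleton values: for every $\ell < j$, while Algorithm~\ref{alg:estimatingTau} processes $e_\ell$ it sets $m' \geq f(\curly{e_\ell})$ and then $m \geq m'$, so the value of $m$ right after processing $e_\ell$ is at least $f(\curly{e_\ell})$. Since $m$ only grows and its value just before $e_j$ is still below $c\hat\tau/k$, we obtain $f(\curly{e_\ell}) < c\hat\tau/k$ for every $\ell < j$.

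Finally, I would run an induction on $\ell = 1, 2, \dotsc, j-1$ over the hypothetical execution of Algorithm~\ref{alg:main_algorithm_deterministic} with estimate $\hat\tau$, with the hypothesis that $\hat S_1, \dotsc, \hat S_p$ are all empty immediately before $e_\ell$. In the inductive step each $\hat S_i$ is empty, so $f(e_\ell \mid \hat S_i) = f(\curly{e_\ell}) - f(\varnothing) \leq f(\curly{e_\ell}) < c\hat\tau/k$ by the non-negativity of $f$, and therefore Algorithm~\ref{alg:main_algorithm_deterministic} discards $e_\ell$, leaving every $\hat S_i$ empty. Carrying the induction through $\ell = j-1$ yields that all sets are empty immediately prior to $e_j$, as claimed. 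The only delicate point is the first step — ruling out that $\hat\tau$ is excluded from $T$ before $e_j$ because $m$ is too large rather than too small — and it is handled precisely by the monotonicity of $m$ together with the fact that $\hat\tau$ survives in $\hat T$.
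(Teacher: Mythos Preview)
Your proof is correct and follows essentially the same approach as the paper's: both handle $j=1$ trivially, use the monotonicity of $m$ together with $\hat\tau \in \hat T$ to conclude that $m < c\hat\tau/k$ just before $e_j$, deduce $f(\{e_\ell\}) < c\hat\tau/k$ for all $\ell < j$, and infer that no such $e_\ell$ is added to any $\hat S_i$. The only cosmetic difference is that you carry out the last step by induction on $\ell$ (keeping the sets empty and using non-negativity of $f$), whereas the paper bounds $f(e_\ell \mid \hat S_i \cap \{e_1,\dots,e_{\ell-1}\}) \leq f(\{e_\ell\} \mid \varnothing)$ directly via submodularity, which avoids the explicit induction; both arguments are equally valid.
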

\begin{proof}
If $j = 1$, then lemma is trivial. Thus, we assume $j > 1$ in the rest of this proof. Prior to the arrival of $e_j$, $\hat{\tau}$ was not part of $T$. Nevertheless, since $\hat{\tau} \in \hat{T}$, we must have at all times
\[
	\hat{\tau}
	\geq
	\frac{\hat{m}}{1 + \varepsilon'}
	\geq
	\frac{m}{1 + \varepsilon'}
	\enspace,
\]
where the second inequality holds since the value $m$ only increases over time. Therefore, the reason that $\hat{\tau}$ did not belong to $T$ prior to the arrival of $e_j$ must have been that $m$ was smaller than $c\hat{\tau}/k$. Since $m$ is at least as large as the value of any singleton containing an element already viewed by the algorithm we get, for every two integers $1 \leq t \leq j - 1$ and $1 \leq i \leq p$,
	\[
		\frac{c\hat{\tau}}{k}
		>
		f(\{e_t\})
		\geq
		f(\{e_t\} \mid \varnothing)
		\geq
		f\big(e_t \mid \hat{S}_i \wedge \onevect{\curly{e_1,e_2,\dotsc,e_{t-1}}})
		\enspace,
	\]
	where the second inequality follows from the non-negativity of $f$ and the last from its submodularity. Thus, $e_t$ is not added by Algorithm~\ref{alg:main_algorithm_deterministic} to any one of the sets $\hat{S}_1, \hat{S}_2, \dotsc, \hat{S}_p$, which implies that all these sets are empty at the moment $e_j$ arrives.
\end{proof}

According to the above discussion, from the moment $\hat{\tau}$ gets into $T$, Algorithm~\ref{alg:estimatingTau} updates the sets $S^{\hat{\tau}}_1, S^{\hat{\tau}}_2, \dotsc, S^{\hat{\tau}}_p$ in the same way that Algorithm~\ref{alg:main_algorithm_deterministic} updates $\hat{S}_1, \hat{S}_2, \dotsc, \hat{S}_p$ (note that, once $\hat{\tau}$ gets into $T$, it remains there for good since $\hat{\tau} \in \hat{T}$). Together with the previous lemma which shows that $\hat{S}_1, \hat{S}_2, \dotsc, \hat{S}_p$ are empty just like $S^{\hat{\tau}}_1, S^{\hat{\tau}}_2, \dotsc, S^{\hat{\tau}}_p$ at the moment $e_j$ arrives, this implies that the final contents of $S^{\hat{\tau}}_1, S^{\hat{\tau}}_2, \dotsc, S^{\hat{\tau}}_p$ are equal to the final contents of $\hat{S}_1, \hat{S}_2, \dotsc, \hat{S}_p$, respectively. Since the set $\bar{S}_{\hat{\tau}}$ is computed based on the final contents of $S^{\hat{\tau}}_1, S^{\hat{\tau}}_2, \dotsc, S^{\hat{\tau}}_p$ in the same way that the output of Algorithm~\ref{alg:main_algorithm_deterministic} is computed based on the final contents of $\hat{S}_1, \hat{S}_2, \dotsc, \hat{S}_p$, we get the following corollary.

\begin{corollary} \label{cor:reduction1-2}
If it is guaranteed that the approximation ratio of Algorithm~\ref{alg:main_algorithm_deterministic} is at least $\beta$ when $(1 - \varepsilon') \cdot f(\opt) \leq \tau \leq f(\opt)$ for some choice of the parameters $\alpha$ and $p$, then the approximation ratio of Algorithm~\ref{alg:estimatingTau} is at least $\beta$ as well for this choice of $\alpha$ and $p$.
\end{corollary}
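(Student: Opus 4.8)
The plan is to make precise the lock-step correspondence between the two algorithms that is sketched in the paragraphs preceding the statement. First I would dispose of the degenerate case: if $\hat{T} = \varnothing$, then by Observation~\ref{obs:T-not-empty} we have $\hat{m} = 0$, so every singleton has value $0$, and hence by submodularity and non-negativity every set has value $0$; thus $f(\opt) = 0$ and the empty set returned by Algorithm~\ref{alg:estimatingTau} is (trivially) a $\beta$-approximation. So assume $\hat{T} \neq \varnothing$, fix the value $\hat{\tau} \in \hat{T}$ guaranteed by Lemma~\ref{lem:apxtau-tau-in-T} (so that $(1-\varepsilon') \cdot f(\opt) \leq \hat{\tau} \leq f(\opt)$), and let $\hat{S}_1, \dotsc, \hat{S}_p$ denote the sets that Algorithm~\ref{alg:main_algorithm_deterministic} would maintain on the same stream when given $\hat{\tau}$ as its estimate for $f(\opt)$.

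The heart of the argument is to show that the final contents of $S^{\hat{\tau}}_1, \dotsc, S^{\hat{\tau}}_p$ kept by Algorithm~\ref{alg:estimatingTau} coincide with those of $\hat{S}_1, \dotsc, \hat{S}_p$. I would prove this by induction over the suffix of the stream starting at the element $e_j$ whose arrival inserts $\hat{\tau}$ into $T$ (with $j = 1$ if $\hat{\tau}$ is in $T$ from the start), using three ingredients: (i) once inserted, $\hat{\tau}$ is never removed from $T$, because the lower-bound half of the membership test $\hat\tau \geq m/(1+\varepsilon')$ holds at all times (as $\hat\tau \geq \hat m/(1+\varepsilon') \geq m/(1+\varepsilon')$ since $m$ only increases toward $\hat m$) while the upper-bound half $\hat\tau \leq mk/c$ only becomes easier to satisfy as $m$ grows, so from $e_j$ onward each $S^{\hat{\tau}}_i$ is updated by exactly the eligibility-and-threshold rule ``$|S^{\hat{\tau}}_i| < k$ and $f(e \mid S^{\hat{\tau}}_i) \geq c\hat\tau/k$'' used by Algorithm~\ref{alg:main_algorithm_deterministic} with estimate $\hat\tau$; (ii) by Lemma~\ref{lem:apxtau-before-tau} the sets $\hat{S}_1, \dotsc, \hat{S}_p$ are still empty immediately before $e_j$ arrives, matching the empty initialization of $S^{\hat{\tau}}_1, \dotsc, S^{\hat{\tau}}_p$ at that moment; and (iii) both algorithms resolve the choice of index $i$ among the eligible ones in the same fixed (arbitrary) manner. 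The induction step is then immediate — if the two $p$-tuples of sets agree before an element $e_t$ with $t \geq j$, the same index is eligible in both, the same update is applied, and the updates of $m$ and $T$ triggered by the other active estimates never touch the $\hat\tau$-indexed sets.

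From the equality of the final sets it follows that $\bar{S}_{\hat{\tau}}$, obtained by running $\offlinealg$ on $\bigcup_i S^{\hat{\tau}}_i$ to get $S^{\hat{\tau}}_o$ and then taking the best among $S^{\hat{\tau}}_o, S^{\hat{\tau}}_1, \dotsc, S^{\hat{\tau}}_p$, has the same distribution as the output of Algorithm~\ref{alg:main_algorithm_deterministic} on estimate $\hat\tau$ (the only randomness is internal to $\offlinealg$, which we couple). Since $\hat\tau$ lies in the admissible window $[(1-\varepsilon') f(\opt), f(\opt)]$, the hypothesis gives $\Exp{f(\bar{S}_{\hat{\tau}})} \geq \beta \cdot f(\opt)$. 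Finally, Algorithm~\ref{alg:estimatingTau} returns the best solution among $\{\bar{S}_\tau\}_{\tau \in \hat{T}}$, a collection that contains $\bar{S}_{\hat{\tau}}$, so its output value is pointwise at least $f(\bar{S}_{\hat{\tau}})$; taking expectations yields an approximation ratio of at least $\beta$.

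I expect the only delicate point to be the synchronization argument of the second paragraph — certifying that $\hat\tau$ truly never leaves $T$ after entering, that the threshold test used on the $\hat\tau$-track is literally identical to the one in Algorithm~\ref{alg:main_algorithm_deterministic}, and that the running updates of $m$ (which depend on all the currently active estimates) cannot disturb that track. Both the degenerate case and the final ``best-of'' domination step are routine once this correspondence is in place.
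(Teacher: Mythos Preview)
Your proposal is correct and follows essentially the same approach as the paper: the paper's ``proof'' is the paragraph immediately preceding the corollary, which invokes Lemma~\ref{lem:apxtau-before-tau} for the empty initial state, notes that $\hat\tau$ never leaves $T$ once it enters (since $\hat\tau \in \hat T$), and concludes that the $\hat\tau$-track of Algorithm~\ref{alg:estimatingTau} reproduces the run of Algorithm~\ref{alg:main_algorithm_deterministic}. You have simply made each of these steps explicit---the degenerate case $\hat T = \varnothing$, the monotonicity argument for why $\hat\tau$ stays in $T$, the induction over the suffix, and the tie-breaking assumption---all of which the paper either treats earlier or leaves implicit.
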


We are now ready to prove Theorem~\ref{trm:main}.
\begin{reptheorem}{trm:main}
\TrmMain
\end{reptheorem}
\begin{proof}
The proof of Theorem~\ref{trm:non-polynomial-knowing-tauMain} shows that Algorithm~\ref{alg:main_algorithm_deterministic} achieves an approximation guarantee of $\frac{\alpha}{1 + \alpha} - \varepsilon$ when it has access to a value $\tau$ obeying $(1-\varepsilon/2) \cdot f(\opt)\leq\tau\leq{f(\opt)}$ and its parameter $p$ is set to $\lceil 4 / \varepsilon \rceil$. According to Corollary~\ref{cor:reduction1-2}, this implies that by setting the parameter $p$ of Algorithm~\ref{alg:estimatingTau} in the same way and setting $\varepsilon'$ to $\varepsilon / 2$, we get an algorithm whose approximation ratio is at least $\frac{\alpha}{1 + \alpha} - \varepsilon$ and does not assume access to an estimate $\tau$ of $f(\opt)$.

It remains to bound the space requirement and update time of the algorithm obtained in this way. Plugging the equalities $p = \lceil 4 / \varepsilon \rceil$ and $\varepsilon' = \varepsilon / 2$ into the guarantee of Lemma~\ref{lem:space_tau}, we get that the algorithm we have obtained stores $O(k\varepsilon^{-2} \log \alpha^{-1})$ elements and makes $O(\varepsilon^{-2} \log(k/\alpha))$ marginal value calculations while processing each arriving element.
\end{proof}

\newpage{}

\bibliographystyle{plainurl}
\bibliography{streaming-nmc-arxiv}

\newpage{}

\appendix
\section{Details about the Error in a Previous Work} \label{app:error}

As mentioned above, Chekuri et al.~\cite{chekuri2015streaming} described a semi-streaming algorithm for the problem of maximizing a non-negative (not necessarily monotone) submodular function subject to a cardinality constraint, and claimed an approximation ratio of roguhly $0.212$ for this algorithm. However, an error was later found in the proof of this result~\cite{communication:Chekuri18} (the error does not affect the other results of~\cite{chekuri2015streaming}). For completeness, we briefly describe in this appendix the error found.

In the proof presented by Chekuri et al.~\cite{chekuri2015streaming}, the output set of their algorithm is denoted by $\tilde{S}$. As is standard in the analysis of algorithms based on single-threshold Greedy, the analysis distinguishes between two cases: one case in which $\tilde{S} = k$, and a second case in which $\tilde{S} < k$. To argue about the second case, the analysis then implicitly uses the inequality
\begin{equation} \label{eq:needed_inequality}
	\Exp{f(\tilde{S} \cup \opt) \mid |S| < k}
	\geq
	(1 - \max_{e \in \groundset} \Pr[e \in \tilde{S}]) \cdot f(\opt)
	\enspace.
\end{equation}
It is claimed by~\cite{chekuri2015streaming} that this inequality follows from Lemma~\ref{lem:atMostP} (originally due to~\cite{DBLP:conf/soda/BuchbinderFNS14}). However, this lemma can yield only the inequalities
\[
	\Exp{f(\tilde{S} \cup \opt)}
	\geq
	(1 - \max_{e \in \groundset} \Pr[e \in \tilde{S}]) \cdot f(\opt)
\]
and
\[
	\Exp{f(\tilde{S} \cup \opt) \mid |S| < k}
	\geq
	(1 - \max_{e \in \groundset} \Pr[e \in \tilde{S} \mid |S| < k]) \cdot f(\opt)
	\enspace,
\]
which are similar to~\eqref{eq:needed_inequality}, but do not imply it.

\section{Inapproximability} \label{app:inapproximability}
In this appendix, we prove an inapproximability result for the problem of maximizing a non-negative submodular function subject to cardinality constraint in the data stream model. This result is given by the next theorem. The proof of the theorem is an adaptation of a proof given by Buchbinder et al.~\cite{BuchbinderFS19} for a similar result applying to an online variant of the same problem.

\begin{theorem}
For every constant $\varepsilon > 0$, no data stream algorithm for maximizing a non-negative submodular function subject to cardinality constraint is $(1/2 + \varepsilon)$-competitive, unless it uses $\Omega(\power{\groundset})$ memory.
\end{theorem}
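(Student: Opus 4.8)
The plan is to prove this via a reduction from the one-way communication complexity of \textsc{Index}, which converts the combinatorial hard instance of Buchbinder et al.~\cite{BuchbinderFS19} (there analyzed against an online adversary) into a bounded-memory lower bound. The reduction rests on the standard observation that a data stream algorithm using $s$ bits of memory yields an $s$-bit one-way (Alice$\,\to\,$Bob) protocol whenever the stream is split into a prefix handed to Alice and a suffix handed to Bob: Alice runs the algorithm on the prefix, sends Bob the resulting memory state, and Bob resumes the execution on the suffix and reports the output. For this to be well defined even in the value-oracle model it is enough that $f$ be a single \emph{fixed} non-negative submodular function on a large fixed universe $U$, an instance being specified only by the subset $\groundset \subseteq U$ chosen as the ground set; Alice (resp.\ Bob) can then evaluate $f$ on any subset of the universe elements of which she (resp.\ he) is aware.

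Following~\cite{BuchbinderFS19}, I would take a universe consisting of $N = \Theta(\power{\groundset})$ ``encoding'' elements $e_1, \dots, e_N$ together with a ``query gadget'' $Q_i \subseteq U$ for every $i \in [N]$, and a cardinality bound $k$ prescribed by the gadget (a small function of $N$). Given $x \in \{0,1\}^N$ and $i^\ast \in [N]$, the stream first presents $\{e_j : x_j = 1\}$ (Alice's part) and then the elements of $Q_{i^\ast}$ (Bob's part), so the ground set is $\groundset_{x, i^\ast} = \{e_j : x_j = 1\} \cup Q_{i^\ast}$. The purpose of the construction of~\cite{BuchbinderFS19} is to guarantee that $f$ is non-negative and submodular and that, for a fixed value $V^\ast$ (depending only on $N$ and $k$): $\opt(\groundset_{x, i^\ast}) \ge V^\ast$ whenever $x_{i^\ast} = 1$; whereas \emph{every} feasible subset of $\groundset_{x, i^\ast}$ has value at most $(\tfrac12 + o(1)) \cdot V^\ast$ whenever $x_{i^\ast} = 0$, with the $o(1)$ tending to $0$ as $N \to \infty$.

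Granting such a family, the proof concludes as follows. Fix any $(\tfrac12 + \varepsilon)$-competitive data stream algorithm; by Yao's principle we may take it deterministic, after placing the uniform distribution on $(x, i^\ast)$. On an instance with $x_{i^\ast} = 1$ the algorithm outputs a set of value at least $(\tfrac12 + \varepsilon) \cdot \opt \ge (\tfrac12 + \varepsilon) V^\ast$; on an instance with $x_{i^\ast} = 0$ every feasible set, hence also the algorithm's output, has value at most $(\tfrac12 + o(1)) V^\ast$, which is strictly below $(\tfrac12 + \varepsilon) V^\ast$ once $N$ is large enough that the $o(1)$ term drops below $\varepsilon$. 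Therefore Bob, having run the algorithm on his suffix, recovers $x_{i^\ast}$ by testing whether the value of the output exceeds the threshold $(\tfrac12 + \varepsilon) V^\ast$. This $s$-bit one-way protocol thus solves \textsc{Index} on $N$ bits, so the classical $\Omega(N)$ one-way communication lower bound for \textsc{Index} forces $s = \Omega(N) = \Omega(\power{\groundset})$.

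The crux --- and the reason the result is an \emph{adaptation} of~\cite{BuchbinderFS19} rather than an immediate corollary --- is the existence of the function postulated in the second paragraph. One must manufacture a gap of \emph{exactly} a factor of two (up to lower-order terms) between $\opt$ and the value attainable once $e_{i^\ast}$ is absent, while keeping $f$ submodular and ensuring that the query gadget $Q_{i^\ast}$ is worthless without $e_{i^\ast}$ (so that no $o(N)$-bit summary of $x$ would already beat $\tfrac12$). Note that the naive design, in which $f$ simply rewards a solution for containing $e_{i^\ast}$ together with $Q_{i^\ast}$, is a \emph{complementarity} effect and is therefore supermodular; the factor-two gap must instead be created through the \emph{non-monotonicity} of $f$, so that any solution lacking $e_{i^\ast}$ is forced either to discard value-carrying gadget elements or to retain value-destroying ones. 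The remaining task is to check that the instance of~\cite{BuchbinderFS19}, originally studied against an adaptive adversary, retains these properties under the oblivious prefix/suffix split used above.
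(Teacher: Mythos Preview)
Your proposal has a substantial gap: the function you postulate in the second paragraph is never constructed, and you yourself flag this as ``the crux.'' You assume that~\cite{BuchbinderFS19} supplies a submodular $f$ on a universe with encoding elements $e_1,\dots,e_N$ and per-index query gadgets $Q_i$, such that the presence or absence of the \emph{single} element $e_{i^\ast}$ creates a factor-two gap in $\opt$. That is not what the Buchbinder et al.\ instance does, and it is not what the paper adapts. The hard instance actually used is far simpler: $\groundset=\{u_1,\dots,u_{k-1}\}\cup\{v_1,\dots,v_h\}\cup\{w\}$, with $f(S)=|S|$ if $w\notin S$ and $f(S)=k+|S\cap\{u_i\}|$ if $w\in S$. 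There is no per-index gadget, and no prefix element is ``the'' distinguished one; the $u$'s and $v$'s are indistinguishable until $w$ arrives last, and the hardness is that the algorithm cannot guess which $k-1$ of the $k-1+h$ prefix elements are the $u$'s. Manufacturing a submodular function in which removing one specific element halves $\opt$---which your \textsc{Index} reduction needs---is precisely the supermodular-looking step you warn about, and you provide no construction that achieves it. Without it, the reduction is a strategy outline rather than a proof.

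The paper's argument, by contrast, avoids communication complexity entirely and proceeds by a direct symmetry argument. Under a uniformly random arrival order for $\groundset\setminus\{w\}$, every prefix element lands in the memory snapshot $M$ (taken just before $w$ arrives) with the same probability $\Exp{|M|}/(|\groundset|-1)$; hence some fixed arrival order satisfies $\Exp{|M\cap\{u_i\}|}\le k\cdot\Exp{|M|}/(|\groundset|-1)$. Any output containing $w$ has value at most $k+|M\cap\{u_i\}|$, and any output avoiding $w$ has value at most $k$, so the expected output value is at most $k + k\cdot\Exp{|M|}/(|\groundset|-1)$ against $\opt=2k-1$. Choosing $k$ large drives the $1/k$ term below $\varepsilon/2$, and choosing $h$ (hence $|\groundset|$) large drives $\Exp{|M|}/(|\groundset|-1)$ below $\varepsilon/2$ unless the algorithm stores $\Omega(|\groundset|)$ elements.
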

\begin{proof}
Let $k \geq 1$ and $h \geq 1$ be two integers to be chosen later, and consider the non-negative submodular function $\Func{f}{2^\groundset}{\realnum^+}$, where $\groundset=\curly{u_i}_{i=1}^{k-1}\cup\curly{v_i}_{i=1}^{h}\cup\curly{w}$,
	defined as follows.
	\[
		f(S)=
		\begin{cases}
			\power{S} &\text{if }w\notin S \enspace,\\
			k+\power{S\cap\curly{u_i}_{i=1}^{k-1}} &\text{if }w\in S\enspace.
		\end{cases}
	\]
It is clear that $f$ is non-negative. One can also verify that the marginal value of each element in $\groundset$ is non-increasing, and hence, $f$ is submodular. 

Let $ALG$ be an arbitrary data stream algorithm for the problem of maximizing a non-negative submodular function subject to a cardinality constraint, and let us consider what happens when we give this algorithm the above function $f$ as input, the last element of $\groundset$ to arrive is the element $w$ and we ask the algorithm to pick a set of size at most $k$. One can observe that, before the arrival of $w$, $ALG$ has no way to distinguish between the other elements of $\groundset$. Thus, if we denote by $M$ the set of elements stored by $ALG$ immediately before the arrival of $w$ and assume that the elements of $\groundset \setminus \{w\}$ arrive at a random order, then every element of $\groundset \setminus \{w\}$ belongs to $M$ with the same probability of $\Exp{|M|} / |\groundset \setminus \{w\}|$. Hence, there must exist some arrival order for the elements of $\groundset \setminus \{w\}$ guaranteeing that
\[
	\Exp{|M \cap \curly{u_i}_{i=1}^{k-1}|}
	=
	\sum_{i = 1}^{k - 1} \Pr[u_i \in M]
	\leq
	\frac{k \cdot \Exp{|M|}}{|\groundset \setminus \{w\}|}
	\enspace.
\]

Note now that the above implies that the expected value of the output set produced by $ALG$ given the above arrival order is at most
\[
	k + \frac{k \cdot \Exp{|M|}}{|\groundset \setminus \{w\}|}
	\enspace.
\]
In contrast, the optimal solution is the set $\curly{u_i}_{i=1}^{k-1}\cup\curly{w}$, whose value is $2k-1$. Therefore, the competitive ratio of $ALG$ is at most
\[
	\frac{k + k \cdot \Exp{|M|} / |\groundset \setminus \{w\}|}{2k - 1}
	=
	\frac{1 + \Exp{|M|} / |\groundset \setminus \{w\}|}{2 - 1/k}
	\leq
	\frac{1}{2} + \frac{1}{k} + \frac{\Exp{|M|}}{|\groundset \setminus \{w\}|}
	\enspace.
\]
To prove the theorem we need to show that, when the memory used by $ALG$ is $o(|\groundset|)$, we can choose large enough values for $k$ and $h$ that will guarantee that the rightmost side of the last inequality is at most $1/2 + \varepsilon$. We do so by showing that the two terms $1/k$ and $\Exp{|M|} / |\groundset \setminus \{w\}|$ can both be upper bounded by $\varepsilon / 2$ when the integers $k$ and $h$ are large enough, respectively. For the term $1/k$ this is clearly the case when $k$ is larger than $2 / \varepsilon$. For the term $\Exp{|M|} / |\groundset \setminus \{w\}|$ this is true because increasing $h$ can make $\groundset$ as large as want, and thus, can make the ratio $\Exp{|M|} / |\groundset \setminus \{w\}|$ as small as necessary due to our assumption that the memory used by $ALG$ (which includes $M$) is $o(|\groundset|)$.
\end{proof}

\section{Multilinear extension based algorithm}
\label{app:continuous}

The properties of the multlinear extension based variant of our algorithm are summerized by the following theorem.
\newcommand{\TrmContinuous}[1][]{Assume there exists an $\alpha$-approximation offline algorithm $\offlinealg$ for maximizing a non-negative submodular function subject to cardinality constraint whose space complexity is nearly linear in the size of the ground set. Then, for every constant $\varepsilon \in (0,1]$, there exists an $(\frac{\alpha}{1 + \alpha} - \varepsilon)$-approximation semi-streaming algorithm for maximizing a non-negative submodular function subject to a cardinality constraint. The algorithm stores at most $O(k\varepsilon^{-2}\ifthenelse{\isempty{#1}}{\log \alpha^{-1}}{})$ elements.\ifthenelse{\isempty{#1}}{}{\footnote{Formally, the number of elements stored by the algorithm also depends on $\log \alpha^{-1}$. Since $\alpha$ is typically a positive constant, or at least lower bounded by a positive constant, we omit this dependence from the statement of the theorem.}}}
\begin{theorem} \label{trm:continuous}
\TrmContinuous[]
\end{theorem}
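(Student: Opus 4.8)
The plan is to analyze the multilinear variant \streamcont, obtained from Algorithm~\ref{alg:main_algorithm_deterministic} by replacing the $p$ disjoint integral solutions with a single fractional solution that admits each accepted element only to a fixed fraction $\delta$ (a suitably small multiple of $\varepsilon$), so that $\delta$ plays the role of $p^{-1}$. Concretely, the algorithm maintains $x\in[0,1]^\groundset$, initialized to $\zero$; when an element $e$ arrives and both $\onenorm{x}+\delta\le k$ and $\partial_e F(x)\ge c\tau/k$ hold (with $c=\alpha/(1+\alpha)$) it sets $x_e\gets\delta$, and otherwise it discards $e$. After the stream it lets $\hat x$ be the final vector and $\hat S=\supp(\hat x)$, runs $\offlinealg$ on the ground set $\hat S$ to obtain a set $S_o$, applies pipage/swap rounding to $\hat x$ to obtain an integral set $S_f\subseteq\hat S$ with $|S_f|\le k$ and $f(S_f)\ge F(\hat x)$, and outputs the better of $S_o$ and $S_f$. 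As in Section~\ref{sec:simplified}, I would first prove the approximation guarantee assuming access to a $\tau$ with $(1-\varepsilon/2)\cdot f(\opt)\le\tau\le f(\opt)$, and then remove this assumption exactly as in Section~\ref{sec:full}.

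For the analysis with a known $\tau$, I would split into two cases mirroring Section~\ref{sec:simplified}. In the ``full budget'' case $\onenorm{\hat x}\ge k-\delta$, multilinearity of $F$ makes the telescoping of marginals exact: each accepted element raises $F$ by exactly $\delta\,\partial_e F(x)\ge\delta c\tau/k$ (the derivative is independent of $x_e$, and $x_e=0$ before acceptance), so $F(\hat x)\ge \onenorm{\hat x}\cdot c\tau/k\ge(1-\delta)c\tau$ and hence $f(S_f)\ge(1-\delta)\cdot\frac{\alpha\tau}{1+\alpha}$, which is the analogue of Lemma~\ref{lem:Fx-exactlyKMain} up to the harmless $(1-\delta)$ factor. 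In the complementary case $\onenorm{\hat x}<k-\delta$ the budget was never binding, so every element $e$ of $O:=\opt\setminus\hat S$ was discarded because its marginal with respect to the then-current vector $x^{(e)}\le\hat x$ fell below the threshold; since $\partial_e F$ is non-increasing under coordinatewise growth of its argument (submodularity of $F$), this gives $\partial_e F(\hat x\vee\onevect{A})<c\tau/k$ for every $e\in O$ and every set $A$ disjoint from $O$.

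The crux is the multilinear analogue of Lemma~\ref{lem:average}: for such a set $A$, subadditivity of the marginals of $F$ gives $F\big((\hat x\vee\onevect{A})\vee\onevect{O}\big)\le F(\hat x\vee\onevect{A})+\sum_{e\in O}\partial_e F(\hat x\vee\onevect{A})< F(\hat x\vee\onevect{A})+|O|\,c\tau/k$ (here each marginal of an $e\in O$ is an exact identity, since the coordinate $e$ of $\hat x\vee\onevect{A}$ is $0$, so the non-monotonicity of $f$ — which allows $\partial_e F$ to be negative — causes no difficulty), while the equality $(\hat x\vee\onevect{A})\vee\onevect{O}=\onevect{O\cup A}\vee\big(\delta\onevect{\hat S\setminus(O\cup A)}\big)$ lets us write $F\big((\hat x\vee\onevect{A})\vee\onevect{O}\big)=\Exp{g\big(\RFunc(\delta\onevect{\hat S\setminus(O\cup A)})\big)}$ for the non-negative submodular function $g(T)=f(T\cup O\cup A)$, so Lemma~\ref{lem:atMostP} (with probability bound $\delta$) yields $F\big((\hat x\vee\onevect{A})\vee\onevect{O}\big)\ge(1-\delta)f(O\cup A)$. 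Combining these two bounds gives exactly Lemma~\ref{lem:average} with $p^{-1}$ replaced by $\delta$ and $p^{-1}\sum_i f(\hat S_i\cup A)$ replaced by $F(\hat x\vee\onevect{A})$, where $b=|O|/k$. From here the argument is a line-by-line translation of the rest of Section~\ref{sec:simplified}: for the analogue of Lemma~\ref{lem:belowKMain} feed the fractional point $(b\hat x)\vee\onevect{\opt\setminus O}$ to $\offlinealg$ — it is supported on $\hat S$ (because $\opt\setminus O=\opt\cap\hat S$) and has $\ell_1$-norm at most $b\onenorm{\hat x}+|\opt\setminus O|\le|\opt|\le k$, so rounding it shows $\Exp{f(S_o)}\ge\alpha\cdot F\big((b\hat x)\vee\onevect{\opt\setminus O}\big)$; then use concavity of $t\mapsto F\big((t\hat x)\vee\onevect{\opt\setminus O}\big)$ on $[0,1]$ (again submodularity, since $\hat x\ge\zero$) together with the Lemma~\ref{lem:average} analogue for $A=\opt\setminus O$ — and for the ``average solution'' use $f(S_f)\ge F(\hat x)\ge(1-\delta)f(O)-\alpha b\tau/(1+\alpha)$ (the $A=\varnothing$ case). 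Taking the same convex combination of these two bounds as in the proof of Corollary~\ref{cor:belowK_conclusionMain} and using $f(O)+f(\opt\setminus O)\ge f(\opt)\ge\tau$, the identical algebra yields an output value of at least $\big(\tfrac{\alpha}{1+\alpha}-2\delta\big)\tau$; with the full-budget case and $\delta$ a sufficiently small multiple of $\varepsilon$ this gives $\big(\tfrac{\alpha}{1+\alpha}-\varepsilon\big)f(\opt)$. The assumption that $\tau$ is known is then removed exactly as in Section~\ref{sec:full} by running the algorithm in parallel over the geometric grid of guesses determined by a running lower bound $m$ on $f(\opt)$; here $m$ is maintained to dominate the maximum singleton value and the multilinear values $F(x^\tau)$ of the current fractional solutions, and the fact that $F(x)\le f(\opt)$ whenever $\onenorm{x}\le k$ (itself a consequence of pipage/swap rounding) keeps $m$ a valid lower bound on $f(\opt)$, while the multilinear counterpart of Observation~\ref{obs:supp_size} shows that large guesses store few elements, so the total number of stored elements is $O(k\varepsilon^{-2}\log\alpha^{-1})$ — and the near-linear space of $\offlinealg$ does not affect this.

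The step I expect to be the main obstacle is making the case $\onenorm{\hat x}<k-\delta$ fully rigorous, since it rests on a bundle of standard but easily-misapplied facts about the multilinear extension: exact telescoping of marginals from multilinearity, monotonicity of partial derivatives and subadditivity of marginals from submodularity, concavity of $F$ along non-negative directions, and the ``$f(\text{rounding})\ge F$'' guarantee. The arguments above are deliberately arranged so that every place where a marginal of $F$ is used is an exact identity $F(z\vee\onevect{e})-F(z)=\partial_e F(z)$ — valid precisely because $z_e=0$ there — rather than an inequality whose direction would depend on the (possibly negative, since $f$ is non-monotone) sign of $\partial_e F$; getting this bookkeeping right, together with faithfully transcribing the algebra of Corollary~\ref{cor:belowK_conclusionMain} with $p^{-1}$ replaced by $\delta$, is where the real work lies. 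The unknown-$\tau$ reduction and the space accounting are, by contrast, routine once one observes that $F(x)\le f(\opt)$ whenever $\onenorm{x}\le k$.
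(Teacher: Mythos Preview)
Your proposal is correct, and the overall architecture matches the paper's Appendix~\ref{app:continuous}: maintain a single fractional vector with every accepted coordinate equal to a small constant, case-split on whether the $\ell_1$ budget fills up, round the vector to get one candidate, run $\offlinealg$ on the support to get another, and take a convex combination of the two lower bounds. The unknown-$\tau$ reduction and space accounting you describe are also essentially what the paper does.

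Where you and the paper diverge is in the details of the ``budget not full'' analysis. You port the Section~\ref{sec:simplified} analysis wholesale: keep $c=\alpha/(1+\alpha)$, prove a multilinear analogue of Lemma~\ref{lem:average} for arbitrary $A$ via Lemma~\ref{lem:atMostP} applied to $g(T)=f(T\cup O\cup A)$, and then reuse verbatim the convex combination and algebra of Corollary~\ref{cor:belowK_conclusionMain} with the substitution $p^{-1}\leftrightarrow\delta$. The paper instead chooses the $\delta$-dependent threshold $c=\alpha(1-p)/(1+\alpha)$, bounds $F(\hat x+\onevect{O})$ from below via the Lov\'asz extension (Lemma~\ref{lem:belowK_lower}) rather than Lemma~\ref{lem:atMostP}, obtains an intermediate bound of a different shape (Corollary~\ref{cor:belowK}), and then in Corollary~\ref{cor:belowK_conclusion} optimizes a rational function in $d=1-b$ by a derivative computation to reach $\frac{(1-p)\alpha\tau}{1+\alpha}$. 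Your route is arguably cleaner: it makes transparent that the multilinear algorithm is the exact fractional shadow of the $p$-copy deterministic algorithm, avoids the Lov\'asz extension entirely, and inherits the simpler algebra of Section~\ref{sec:simplified}; the paper's route is more self-contained and yields the marginally tighter loss $\frac{\alpha}{1+\alpha}\cdot p$ in place of your $2\delta$, which is immaterial for the theorem. One small wording fix: the rounding guarantee is $\Exp{f(S_f)}\ge F(\hat x)$, not $f(S_f)\ge F(\hat x)$ pointwise, but since $\Exp{\max\{f(S_f),f(S_o)\}}\ge\max\{\Exp{f(S_f)},\Exp{f(S_o)}\}$ this does not affect your argument.
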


For ease of the reading, we present below only a simplified version of the multlinear extension based variant of our algorithm. This simplified version (given as Algorithm~\ref{alg:main_algorithm}) captures our main new ideas, but makes two simplifying assumptions that can be avoided using standard techniques.
\begin{itemize}
	\item The first assumption is that Algorithm~\ref{alg:main_algorithm} has access to an estimate $\tau$ of $f(\opt)$ obeying $(1 - \varepsilon/8) \cdot f(\opt) \leq \tau \leq f(\opt)$. Such an estimate can be produced using well-known techniques, such as a technique of~\cite{DBLP:conf/icml/0001MZLK19} used in Section~\ref{sec:full}, at the cost of increasing the space complexity of the algorithm only by a factor of $O(\varepsilon^{-1} \log \alpha^{-1})$.
	\item The second assumption is that Algorithm~\ref{alg:main_algorithm} has value oracle access to the multilinear extension $F$. If the time complexity of Algorithm~\ref{alg:main_algorithm} is not important, then this assumption is of no consequence since a value oracle query to $F$ can be emulated using an exponential number of value oracle queries to $f$. However, the assumption becomes problematic when we would like to keep the time complexity of the algorithm polynomial and we only have value oracle access to $f$, in which case this assumption can be dropped using standard sampling techniques (such as the one used in~\cite{DBLP:journals/siamcomp/CalinescuCPV11}). Interestingly, the rounding step of the algorithm and the sampling technique are the only parts of the extension based algorithm that employ randomness. Since the rounding can be made deterministic given either exponential time or value oracle access to $F$, we get the following observation.
	\begin{observation}
	If $\offlinealg$ is deterministic, then our multinear extension based algorithm is also \emph{deterministic} when it is allowed either exponential computation time or value oracle access to $F$.
	\end{observation}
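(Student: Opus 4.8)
The plan is to enumerate every source of randomness in the multilinear extension based algorithm (Algorithm~\ref{alg:main_algorithm} together with the post-processing invocation of $\offlinealg$) and verify that each one disappears under the stated hypotheses. The streaming phase of the algorithm is a threshold-greedy procedure: for each arriving element it deterministically decides, by comparing a partial derivative of $F$ against a fixed threshold, whether and by how much to raise the fractional solution it maintains; hence this phase is deterministic as soon as $F$ can be queried deterministically. Consequently the only potentially randomized ingredients are (i) the evaluation of $F$ when one is given a value oracle for $f$ but not for $F$, (ii) the final step rounding the fractional solution into a feasible integral set of size at most $k$, and (iii) the call to $\offlinealg$.

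First I would dispose of (iii): by hypothesis $\offlinealg$ is deterministic, so it contributes no randomness. For (i), observe that both stated regimes let us evaluate $F$ deterministically: with a value oracle for $F$ there is nothing to do, and with exponential computation time every query $F(x)$ can be computed exactly via the finite sum $\sum_{A \subseteq \groundset} f(A) \prod_{e \in A} x_e \prod_{e \notin A} (1 - x_e)$, so no sampling estimator is ever invoked. Given deterministic access to $F$, ingredient (ii) is handled by the deterministic version of pipage rounding~\cite{DBLP:journals/siamcomp/CalinescuCPV11}: it repeatedly transfers mass between two fractional coordinates in a direction that does not decrease $F$ while remaining inside the cardinality polytope, and it only ever evaluates $F$, so it runs deterministically. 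Assembling the three points, under either hypothesis every step of the algorithm is deterministic, which is precisely the claim.

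The argument is essentially bookkeeping; the text preceding the statement already singles out rounding and sampling as the only randomized components, so there is no substantive obstacle. The only points that require a moment of care are checking that the deterministic rounding routine needs nothing beyond oracle access to $F$ (and preserves the cardinality constraint with no loss in $F$-value), and noting that exponential computation time subsumes having a value oracle for $F$ as far as evaluating $F$ is concerned.
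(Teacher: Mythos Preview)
Your proposal is correct and matches the paper's own justification, which appears as the discussion immediately preceding the observation rather than as a separate proof: the paper identifies rounding and sampling as the only sources of randomness (beyond $\offlinealg$) and notes that both disappear under the stated hypotheses. You have simply made the argument more explicit by spelling out why exponential time suffices to evaluate $F$ exactly and by naming deterministic pipage rounding as the concrete rounding routine.
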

\end{itemize}
A more detailed discussion of the techniques for removing the above assumptions can be found in an earlier version of this paper (available in~\cite{oldArxivVersion}) that had this variant of our algorithm as one of its main results.

Algorithm~\ref{alg:main_algorithm} has two constant parameters $p \in (0, 1)$ and $c > 0$ and maintains a fractional solution $x \in [0, 1]^\groundset$. This fractional solution starts empty, and the algorithm adds to it fractions of elements as they arrive. Specifically, when an element $e$ arrives, the algorithm considers its marginal contribution with respect to the current fractional solution $x$. If this marginal contribution exceeds the threshold of $c\tau / k$, then the algorithm tries to add to $x$ a $p$-fraction of $e$, but might end up adding a smaller fraction of $e$ if adding a full $p$-fraction of $e$ to $x$ will make $x$ an infeasible solution, i.e., make $\|x\|_1 > k$ (note that $\|x\|_1$ is the sum of the coordinates of $x$).

After viewing all of the elements, Algorithm~\ref{alg:main_algorithm} uses the fractional solution $x$ to generate two sets $S_1$ and $S_2$ that are feasible (integral) solutions. The set $S_1$ is generated by rounding the fractional solution $x$. As mentioned in Section~\ref{sec:simplified}, two rounding procedures, named Pipage Rounding and Swap Rounding, were suggested for this task in the literature~\cite{DBLP:journals/siamcomp/CalinescuCPV11,DBLP:conf/focs/ChekuriVZ10}. Both procedures run in polynomial time and guarantee that the output set $S_1$ of the rounding is always feasible, and that its expected value with respect to $f$ is at least the value $F(x)$ of the fractional solution $x$. The set $S_2$ is generated by applying $\offlinealg$ to the support of the vector $x$, which produces a feasible solution that (approximately) maximizes $f$ among all subsets of the support whose size is at most $k$. After computing the two feasible solutions $S_1$ and $S_2$, Algorithm~\ref{alg:main_algorithm} simply returns the better one of them.

\begin{algorithm2e}[ht]
\caption{\streamcont (simplified) $(p,c)$} \label{alg:main_algorithm}
	\DontPrintSemicolon
	Let $x\gets\onevect{\varnothing}$.\\
	\For {each arriving element $e$} {
		\lIf {$\partial_e F(x)\geq\frac{c\tau}{k}$} {
			$x\gets{x}+\min\curly{p,k-\onenorm{x}}\cdot\onevect{e}$.
		}
	}
	Round the vector $x$ to yield a feasible solution $S_1$ such that $\Exp{f(S_1)} \geq {F(x)}$.\\
	Find another feasible solution $S_2\subseteq\supp(x)$ by running $\offlinealg$ with $\supp(x)$ as the ground set.\\
	\Return the better solution among $S_1$ and $S_2$.
\end{algorithm2e}

Let us denote by $\hat{x}$ the final value of the fractional solution $x$ (i.e., its value when the stream ends). We begin the analysis of Algorithm~\ref{alg:main_algorithm} with the following useful observation. In the statement of observation, and in the rest of the section, we denote by $\supp(x)$ the support of vector $x$, i.e., the set $\curly{e\in\groundset\mid x_e>0}$.
\newcommand{\ObsXType}{If $\|\hat{x}\|_1 < k$, then $\hat{x}_e = p$ for every $e \in \supp(\hat{x})$. Otherwise, $\|\hat{x}\|_1 = k$, and the first part of the observation is still true for every element $e \in \supp(\hat{x})$ except for maybe a single element.}

\begin{observation} \label{obs:x_type}
\ObsXType
\end{observation}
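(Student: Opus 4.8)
The key structural fact is that Algorithm~\ref{alg:main_algorithm} only ever \emph{adds} mass to $x$, and each time it processes an element $e$ it adds exactly $\min\{p, k - \|x\|_1\} \cdot \onevect{e}$, but only once (since the stream presents each element once). So I would argue by tracking how much gets added to each coordinate over the course of the stream, separating the ``budget not yet full'' phase from the moment (if any) the budget becomes full.

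\textbf{Step 1: the generic element gets either $0$ or $p$.} Fix $e \in \groundset$ and let $x^{(e)}$ denote the fractional solution just before $e$ is processed. The algorithm adds to coordinate $e$ the amount $\delta_e := \min\{p, k - \|x^{(e)}\|_1\} \ge 0$ if the threshold test $\partial_e F(x^{(e)}) \ge c\tau/k$ passes, and $0$ otherwise. Since $e$ is presented exactly once, $\hat{x}_e = \delta_e$ (if the test passed) or $0$. So $\hat{x}_e \in \{0\} \cup \{\min\{p, k - \|x^{(e)}\|_1\}\}$, and in particular $e \in \supp(\hat{x})$ implies $\hat{x}_e = \min\{p, k-\|x^{(e)}\|_1\} > 0$, which in turn forces $\|x^{(e)}\|_1 < k$.

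\textbf{Step 2: the case $\|\hat{x}\|_1 < k$.} Because $\|x\|_1$ only increases, if the final value is below $k$ then $\|x^{(e)}\|_1 < k$ at \emph{every} step, so $k - \|x^{(e)}\|_1 > 0$ throughout; but I want the stronger claim $\hat{x}_e = p$ for every $e \in \supp(\hat{x})$. This needs a small extra argument: if at some step we had $k - \|x^{(e)}\|_1 < p$, then the added amount would be $k - \|x^{(e)}\|_1 < p$, and after adding it we would have $\|x\|_1 = k$, contradicting $\|\hat{x}\|_1 < k$ and monotonicity of $\|x\|_1$. Hence whenever an element is added, $k - \|x^{(e)}\|_1 \ge p$, so the amount added is exactly $p$. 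Thus every $e \in \supp(\hat{x})$ has $\hat{x}_e = p$.

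\textbf{Step 3: the case $\|\hat{x}\|_1 = k$.} Let $e^\star$ be the element during whose processing $\|x\|_1$ first reaches $k$ (this is well-defined since $\|x\|_1$ starts at $0$, increases, and ends at $k$; the increase happens in discrete additions, and the first addition that brings it to $k$ identifies $e^\star$). For any element $e$ processed \emph{strictly before} $e^\star$: at the time $e$ was processed, $\|x\|_1$ was still $< k$, and moreover $k - \|x^{(e)}\|_1 \ge p$ — otherwise the addition of $\min\{p, k-\|x^{(e)}\|_1\} = k - \|x^{(e)}\|_1$ would already bring $\|x\|_1$ to $k$ at that earlier step, contradicting the choice of $e^\star$. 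So every such $e$ in the support has $\hat{x}_e = p$. For any element processed \emph{strictly after} $e^\star$: at that point $\|x\|_1 = k$ already, so $k - \|x\|_1 = 0$ and nothing is added; such elements are not in the support. The only possible exception is $e^\star$ itself, whose added amount is $\min\{p, k - \|x^{(e^\star)}\|_1\}$, which may be anything in $(0, p]$. This gives ``all support elements have value $p$ except possibly one,'' as claimed.

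\textbf{Main obstacle.} There is no real obstacle here — the statement is essentially immediate from the update rule together with monotonicity of $\|x\|_1$. The only subtlety worth stating carefully is the identification of the unique ``exceptional'' element in the full-budget case and the argument (used in both Steps 2 and 3) that an element added \emph{before} the budget is exhausted must receive a \emph{full} $p$, which relies on the fact that a partial addition $k - \|x\|_1 < p$ instantly saturates the budget. I would keep the writeup short, phrasing it as a single paragraph that (i) notes $\|x\|_1$ is nondecreasing and each coordinate is touched at most once, (ii) handles the sub-$k$ case, (iii) defines $e^\star$ and handles the $=k$ case.
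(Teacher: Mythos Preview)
Your proposal is correct and follows essentially the same reasoning as the paper's proof: both arguments rest on the observation that the update $x \gets x + \min\{p, k - \|x\|_1\}\cdot\onevect{e}$ assigns the full amount $p$ unless doing so would overshoot the budget, and that the first time a partial amount is assigned the budget is saturated and no further changes occur. The paper compresses this into two sentences, whereas you spell out the case analysis and the identification of the exceptional element $e^\star$ explicitly; the underlying logic is identical.
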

\begin{proof}
For every element $e$ added to the support of $x$ by Algorithm~\ref{alg:main_algorithm}, the algorithm sets $x_e$ to $p$ unless this will make $\|x\|_1$ exceed $k$, in which case the algorithm set $x_e$ to be the value that will make $\|x\|_1$ equal to $k$. Thus, after a single coordinate of $x$ is set to a value other than $p$ (or the initial $0$), $\|x\|_1$ becomes $k$ and Algorithm~\ref{alg:main_algorithm} stops changing $x$.
\end{proof}

Using the last observation we can now bound the space complexity of Algorithm~\ref{alg:main_algorithm}, and show (in particular) that it is a semi-streaming algorithm for a constant $p$ when the space complexity of $\offlinealg$ is nearly linear.

\begin{observation} \label{obs:space_simple}
Algorithm~\ref{alg:main_algorithm} can be implemented so that it stores at most $O(k/p)$ elements.
\end{observation}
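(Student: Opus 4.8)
The plan is to observe that the only data structure Algorithm~\ref{alg:main_algorithm} needs to maintain while processing the stream is the current fractional vector $x$, and that this vector is extremely sparse. First I would argue the sparsity at \emph{every} point during the execution (not just at the end): the same reasoning used in the proof of Observation~\ref{obs:x_type} shows that $x$ always has at most one coordinate whose value is neither $0$ nor $p$, since once the algorithm sets some coordinate to a value different from $p$ it must be that $\onenorm{x} = k$, after which $x$ is never modified again. Consequently, if $x$ has $s$ nonzero coordinates then $\onenorm{x} \geq (s-1)p$, and since the algorithm never allows $\onenorm{x}$ to exceed $k$, we get $s \leq k/p + 1 = O(k/p)$.

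Next I would note that the support of $x$, together with the (at most two distinct) values appearing on it, completely determines $x$, so storing $\supp(x)$ — that is, $O(k/p)$ elements — suffices both for evaluating $\partial_e F(x)$ for each arriving element $e$ (given value oracle access to $F$) and for running $\offlinealg$ on $\supp(x)$ at the end of the stream. Finally, the two feasible solutions $S_1$ and $S_2$ produced in the post-processing step each contain at most $k$ elements, and since $p \in (0,1)$ this additional $O(k)$ storage is dominated by $O(k/p)$. Combining these observations yields the claimed bound.

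I do not expect any real obstacle here: the statement is essentially an immediate corollary of Observation~\ref{obs:x_type}. The only mild subtlety is to note that the bound $\onenorm{x} \leq k$ and the ``at most one exceptional coordinate'' property hold throughout the stream, not merely for the final vector $\hat{x}$ — but both follow verbatim from the argument already given for Observation~\ref{obs:x_type}.
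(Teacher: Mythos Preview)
Your proposal is correct and follows essentially the same approach as the paper: both arguments reduce the space bound to bounding $|\supp(x)|$ via Observation~\ref{obs:x_type} and the constraint $\onenorm{x}\leq k$. The paper's proof is terser --- it only bounds $|\supp(\hat{x})|$ at the end of the stream and implicitly uses that the support is monotonically growing --- whereas you make this ``throughout the stream'' point explicit and also account for the $O(k)$ storage of $S_1,S_2$; these are harmless elaborations rather than a different route.
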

\begin{proof}
To calculate the sets $S_1$ and $S_2$, Algorithm~\ref{alg:main_algorithm} needs access only to the elements of $\groundset$ that appear in the support of $x$. Thus, the number of elements it needs to store is $O(|\supp(\hat{x})|) = O(k/p)$, where the equality follows from Observation~\ref{obs:x_type}.
\end{proof}

We now divert our attention to analyzing the approximation ratio of Algorithm~\ref{alg:main_algorithm}. The first step in this analysis is lower bounding the value of $F(\hat{x})$, which we do by considering two cases, one when $\onenorm{\hat{x}}=k$, and the other when $\onenorm{\hat{x}}<k$. The following lemma bounds the value of $F(\hat{x})$ in the first of these cases. Intuitively, this lemma holds since $\supp(\hat{x})$ contains many elements, and each one of these elements must have increased the value of $F(x)$ significantly when added (otherwise, Algorithm~\ref{alg:main_algorithm} would not have added this element to the support of $x$).
\newcommand{\LemFxExactlyK}{If $\onenorm{\hat{x}}=k$, then $F(\hat{x})\geq c\tau$.}
\begin{lemma}
\label{lem:Fx-exactlyK}
	\LemFxExactlyK
\end{lemma}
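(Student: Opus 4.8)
The plan is to track how $F(x)$ grows as the stream is processed and to charge the total growth against $\onenorm{\hat{x}} = k$. The crucial ingredient is that $F$ is \emph{multilinear}, hence affine in the single coordinate $x_e$ once all other coordinates are fixed. Concretely, for any $x \in [0,1]^\groundset$ and any $\delta \geq 0$ with $x + \delta\onevect{e} \in [0,1]^\groundset$, the formula $\partial_e F(x) = F(x \vee \onevect{e}) - F(x \wedge \onevect{\groundset \setminus \{e\}})$ from the preliminaries, together with the identity $F(x) = x_e \cdot F(x \vee \onevect{e}) + (1 - x_e) \cdot F(x \wedge \onevect{\groundset \setminus \{e\}})$ (which is just multilinearity applied to coordinate $e$), yields $F(x + \delta\onevect{e}) - F(x) = \delta \cdot \partial_e F(x)$ --- here one uses that $x$ and $x + \delta\onevect{e}$ share the same value of $x \vee \onevect{e}$ and of $x \wedge \onevect{\groundset \setminus \{e\}}$. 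So every time Algorithm~\ref{alg:main_algorithm} increases some coordinate $x_e$ by an amount $\delta_e$, the value of $F$ increases by \emph{exactly} $\delta_e \cdot \partial_e F(x)$, evaluated at the fractional solution just prior to that update.

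Second, I would invoke the rule governing the updates: a positive fraction of an element $e$ is added only when the condition $\partial_e F(x) \geq c\tau / k$ holds at that moment. Therefore the update associated with $e$ raises $F$ by at least $\delta_e \cdot c\tau / k$. Summing the telescoping increments over the whole stream (the updates are performed one coordinate at a time, so the increments sum to $F(\hat{x}) - F(\onevect{\varnothing})$), and using that each element arrives exactly once, so the total fraction ever added to coordinate $e$ equals precisely $\hat{x}_e$, we obtain $F(\hat{x}) - F(\onevect{\varnothing}) \geq \sum_{e \in \supp(\hat{x})} \hat{x}_e \cdot c\tau/k = (c\tau/k) \cdot \onenorm{\hat{x}}$.

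Finally, since the fractional solution is initialized to $\onevect{\varnothing} = \zero$ and $F(\zero) = f(\varnothing) \geq 0$ by the non-negativity of $f$, the hypothesis $\onenorm{\hat{x}} = k$ gives $F(\hat{x}) \geq (c\tau/k) \cdot k = c\tau$, as claimed. The argument is essentially routine; the only point deserving care is the exact identity $F(x + \delta\onevect{e}) - F(x) = \delta \cdot \partial_e F(x)$, for which multilinearity is genuinely needed --- for a general concave extension one would only get $F(x + \delta\onevect{e}) - F(x) \leq \delta \cdot \partial_e F(x)$, which points in the wrong direction and would break the charging scheme.
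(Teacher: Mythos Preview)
Your proof is correct and follows essentially the same route as the paper's: telescope $F(\hat{x}) - F(\onevect{\varnothing})$ over the updates, use multilinearity to express each increment as $\hat{x}_{e_i}\cdot\partial_{e_i}F(\cdot)$ evaluated at the vector just before the update, lower-bound each derivative by $c\tau/k$ via the algorithm's acceptance rule, and finish with $F(\onevect{\varnothing})\geq 0$ and $\onenorm{\hat{x}}=k$. The only difference is presentational---the paper indexes the telescoping sum by the support elements $e_1,\dots,e_\ell$ in arrival order and writes the partial solution as $\hat{x}\wedge\onevect{\{e_1,\dots,e_{i-1}\}}$, whereas you phrase it as ``the fractional solution just prior to that update''; these are the same vector.
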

\begin{proof}
	Denote by $e_1,e_2,\dots,e_\ell$ the elements in the support of $\hat{x}$, in the order of their arrival. Using this notation, the value of $F(\hat{x})$ can be written as follows.
	\begin{align*}
		F(\hat{x})&=F(\onevect{\varnothing})+\sum_{i=1}^{\ell}{\Big(F\big(\hat{x}\wedge\onevect{\curly{e_1,e_2,\dotsc,e_i}}\big)-F\big(\hat{x}\wedge\onevect{\curly{e_1,e_2,\dotsc,e_{i-1}}}\big)\Big)}\\
		&=F(\onevect{\varnothing})+\sum_{i=1}^{\ell}{\Big(\hat{x}_{e_i}\cdot\partial_{e_i}F\big(\hat{x}\wedge\onevect{\curly{e_1,e_2,\dotsc,e_{i-1}}}\big)\Big)}\\
		&\geq{F(\onevect{\varnothing})}+\frac{c\tau}{k}\cdot\sum_{i=1}^{\ell}{\hat{x}_{e_i}}=F(\onevect{\varnothing})+\frac{c\tau}{k}\cdot\onenorm{\hat{x}}\geq c\tau\enspace,
	\end{align*}
where the second equality follows from the multilinearity of $F$, and the first inequality holds since Algorithm~\ref{alg:main_algorithm} selects an element $e_i$ only when $\partial_{e_i}F\big(\hat{x}\wedge\onevect{\curly{e_1,e_2,\dotsc,e_{i-1}}}\big)\geq\frac{c\tau}{k}$. The last inequality holds since $f$ (and thus, also $F$) is non-negative and $\|\hat{x}\|_1 = k$ by the assumption of the lemma.
\end{proof}

Consider now the case in which $\onenorm{\hat{x}}<k$. Recall that our objective is to lower bound $F(\hat{x})$ in this case as well. To do that, we need to a tool for upper bounding the possible increase in the value of $F(x)$ when some of the indices of $x$ are zeroed. The next lemma provides such an upper bound.

\newcommand{\CorDiscardElements}{Let $\Func{f}{2^{\groundset}}{\nonnegative}$ be a non-negative submodular function, let $p$ be a number in the range $\range{0}{1}$ and let $x,y$ be two vectors in $\cube{0}{1}{\groundset}$ such that
\begin{itemize}
	\item
	$\supp(x) \cap \supp(y) = \varnothing$,
	\item
	and	$y_e\leq p$ for every $e\in\groundset$.
\end{itemize}
Then, the multilinear extension $F$ of $f$ obeys $F(x+y)\geq(1-p)\cdot F(x)$.}
\begin{lemma}
\label{lem:discardElements}
	\CorDiscardElements
\end{lemma}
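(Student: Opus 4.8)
The plan is to prove the inequality $F(x+y) \geq (1-p) \cdot F(x)$ by reducing it, via the standard probabilistic interpretation of the multilinear extension, to Lemma~\ref{lem:atMostP} (Lemma~2.2 of~\cite{DBLP:conf/soda/BuchbinderFNS14}). The key observation is that since $\supp(x)$ and $\supp(y)$ are disjoint, the random sets $\RFunc(x)$ and $\RFunc(y)$ are over disjoint ground sets, and $\RFunc(x+y)$ has the same distribution as $\RFunc(x) \cup \RFunc(y)$ where the two random sets are sampled independently. Thus $F(x+y) = \Exp{f(\RFunc(x) \cup \RFunc(y))}$.

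First I would condition on the outcome of $\RFunc(x)$: writing $A = \RFunc(x)$, we have $F(x+y) = \Exp{\Exp{f(A \cup \RFunc(y)) \mid A}}$. For each fixed realization $A$ of $\RFunc(x)$, define the non-negative submodular function $g_A(S) = f(A \cup S)$ on the ground set $\supp(y)$ (non-negativity and submodularity of $g_A$ are inherited from $f$). Then the inner expectation is $\Exp{g_A(\RFunc(y))}$, where $\RFunc(y)$ is a random subset of $\supp(y)$ in which every element $e$ appears with probability $y_e \leq p$, independently. Applying Lemma~\ref{lem:atMostP} to $g_A$ with this random set (whose elements each appear with probability at most $p$), we get $\Exp{g_A(\RFunc(y))} \geq (1-p) \cdot g_A(\varnothing) = (1-p) \cdot f(A)$.

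Finally I would take the expectation over $A = \RFunc(x)$ on both sides: by linearity of expectation and monotonicity of expectation, $F(x+y) = \Exp{\Exp{g_A(\RFunc(y)) \mid A}} \geq \Exp{(1-p) \cdot f(\RFunc(x))} = (1-p) \cdot F(x)$, which is exactly the claimed bound.

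I do not anticipate a serious obstacle here; the only points requiring a small amount of care are (i) justifying that $\RFunc(x+y)$ decomposes as an independent union $\RFunc(x) \cup \RFunc(y)$ because of the disjoint-support hypothesis (so that no coordinate is double-counted and the $y_e \leq p$ bound is the relevant one), and (ii) noting that Lemma~\ref{lem:atMostP} is applied with the independent random set $\RFunc(y)$, which trivially satisfies its ``appears with probability at most $p$'' hypothesis. The rest is just conditioning and taking expectations.
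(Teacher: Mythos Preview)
Your proof is correct and takes essentially the same approach as the paper. The only cosmetic difference is that the paper defines a single averaged function $G_x(S) = \Exp{f(\RFunc(x) \cup S)}$ and applies Lemma~\ref{lem:atMostP} to it directly, whereas you condition on each realization $A = \RFunc(x)$, apply Lemma~\ref{lem:atMostP} to $g_A$, and then average; these are equivalent by linearity of expectation.
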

\begin{proof}
	Let us define the function $G_x(S)=\Exp{f(\RFunc(x)\cup S)}$. It is not difficult to verify that $G_x$ is non-negative and submodular, and that $G_x(\varnothing)=F(x)$. Additionally, since $\supp(x) \cap \supp(y) = \varnothing$, $\RFunc(x+y)$ has the same distribution as $\RFunc(x)\cup\RFunc(y)$, and therefore,
	\begin{align*}
		F(x+y)={}&\Exp{f\big(\RFunc(x + y))} = \Exp{f\big(\RFunc(x)\cup\RFunc(y)\big)}\\={}&\Exp{G_x(\RFunc(y))}\geq(1-p)\cdot{G_x(\varnothing)}=(1-p)\cdot{F(x)}\enspace,
	\end{align*}
	where the inequality follows from Lemma~\ref{lem:atMostP}.
\end{proof}

Using the last lemma, we now prove two lemmata proving upper and lower bounds the expression $F(\hat{x}+\onevect{\opt\setminus\supp(\hat{x})})$.

\begin{lemma}
\label{lem:belowK_lower}
	If $\onenorm{\hat{x}}<k$, then $F\big(\hat{x}+\onevect{\opt\setminus\supp(\hat{x})}\big)\geq
	(1-p)\cdot\big[p\cdot f(\opt)+(1-p)\cdot f\big(\opt\setminus\supp(\hat{x})\big)\big]$.
\end{lemma}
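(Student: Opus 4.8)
The plan is to lower bound $F\big(\hat{x}+\onevect{\opt\setminus\supp(\hat{x})}\big)$ by relating it, via multilinearity and Lemma~\ref{lem:discardElements}, to two simpler quantities: $f(\opt)$ and $f\big(\opt\setminus\supp(\hat{x})\big)$. The natural starting point is to observe that on the coordinates of $\opt\setminus\supp(\hat{x})$ the vector $\hat{x}+\onevect{\opt\setminus\supp(\hat{x})}$ has value exactly $1$, and that $\supp(\hat{x})$ and $\opt\setminus\supp(\hat{x})$ are disjoint by construction. Since $F$ is multilinear, I would first split off the contribution of the ``new'' integral coordinates, writing $F$ evaluated at the full vector as a combination of $F$ evaluated when those coordinates are set to $0$ versus when they are set to $1$; concretely, I expect to use multilinearity to interpolate between $\hat{x}$ (those coordinates at $0$) and $\hat{x}\vee\onevect{\opt\setminus\supp(\hat{x})}$ (those at $1$), but in a way that also lets a $p$-scaling enter, matching the two occurrences of $p$ in the target bound.

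More precisely, here is the route I would take. Consider the vector $\hat{x} + p\cdot\onevect{\opt \setminus \supp(\hat{x})}$, which equals $\hat{x}$ plus a vector $y$ supported on $\opt\setminus\supp(\hat{x})$ with every coordinate equal to $p\le p$ and with $\supp(\hat{x})\cap\supp(y)=\varnothing$. Wait — that is the wrong direction; $\onevect{\opt\setminus\supp(\hat{x})}$ has coordinates $1$, not $p$. So instead I would apply Lemma~\ref{lem:discardElements} in the opposite sense: think of $\hat{x}+\onevect{\opt\setminus\supp(\hat{x})}$ as obtained by \emph{adding back} the support of $\hat x$ to the vector $\onevect{\opt\setminus\supp(\hat{x})}$. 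Set $x' = \onevect{\opt\setminus\supp(\hat{x})}$ (integral, so $F(x') = f(\opt\setminus\supp(\hat{x}))$) and $y' = \hat{x}$, which is supported on $\supp(\hat{x})$, disjoint from $\supp(x')$, and — crucially, by Observation~\ref{obs:x_type} and the hypothesis $\onenorm{\hat x}<k$ — has every coordinate equal to $p$, hence $\le p$. Then Lemma~\ref{lem:discardElements} gives $F(\hat{x}+\onevect{\opt\setminus\supp(\hat{x})}) \geq (1-p)\cdot f(\opt\setminus\supp(\hat{x}))$. That recovers the $(1-p)^2 f(\opt\setminus\supp(\hat x))$ term but not the $(1-p)p\,f(\opt)$ term, so I would need a second, sharper application.

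The key idea for the $f(\opt)$ term is to exploit multilinearity in the coordinates of $\opt\setminus\supp(\hat x)$ rather than the coordinates of $\supp(\hat x)$. Write $F(\hat{x}+\onevect{\opt\setminus\supp(\hat{x})})$ and interpolate those $\opt\setminus\supp(\hat x)$-coordinates between $0$ and $1$: multilinearity gives $F(\hat x + \onevect{\opt\setminus\supp(\hat x)}) \ge p\cdot F(\hat x \vee \onevect{\opt\setminus\supp(\hat x)}) + (1-p)\cdot\text{(something)}$ — actually since $F$ is multilinear (affine in each coordinate) and the function is non-negative, for any coordinate raised from $0$ to $1$ the value at $1$ is at least the value at $p$ plus $(1-p)$ times $\partial$; more cleanly I would use the identity that for a multilinear function, $F(z\vee\onevect{A}) \ge F(z + p\onevect{A})$ is false in general, so the cleanest path is: by multilinearity in block form, $F(\hat x + \onevect{\opt\setminus\supp(\hat x)})$ as a function of scaling the block $\opt\setminus\supp(\hat x)$ by $t\in[0,1]$ is affine, equal to $F(\hat x)$ at $t=0$ and to $F(\hat x\vee\onevect{\opt\setminus\supp(\hat x)})$ at $t=1$; hence it equals $(1-t)F(\hat x) + t\,F(\hat x\vee\onevect{\opt\setminus\supp(\hat x)})$ only when the block is a single coordinate, but for a block it is merely $\ge (1-t)$-weighted… Here the honest move is to bound via the two extreme single-coordinate-at-a-time steps, or to apply Lemma~\ref{lem:discardElements} twice with roles swapped. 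I would use: (i) $F(\hat x\vee\onevect{\opt\setminus\supp(\hat x)}) = F(\hat x + \onevect{\opt\setminus\supp(\hat x)})$ trivially since the supports are disjoint; (ii) apply Lemma~\ref{lem:discardElements} with $x=\onevect{\opt}$ restricted appropriately to get $F(\hat x + \onevect{\opt\setminus\supp(\hat x)}) \ge p\cdot F(\onevect{\opt}) + \ldots$ — the cleanest formalization is to note $\hat x + \onevect{\opt\setminus\supp(\hat x)} \ge p\,\onevect{\opt}$ pointwise (on $\opt\cap\supp(\hat x)$ the left side is $p$, on $\opt\setminus\supp(\hat x)$ it is $1\ge p$), so monotonicity-of-$F$-along-this… no, $f$ is non-monotone. \textbf{The main obstacle} is exactly this: pushing the $f(\opt)$ term through without monotonicity. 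The right tool is Lemma~\ref{lem:discardElements} applied with $x = p\cdot\onevect{\opt}$ — but that is not integral. I would instead combine: apply Lemma~\ref{lem:discardElements} once to extract the $(1-p)$ factor in front, yielding $F(\hat x + \onevect{\opt\setminus\supp(\hat x)}) \ge (1-p)\cdot F(\text{vector supported on }\opt)$, and then inside, a second multilinearity/Lemma~\ref{lem:atMostP} argument on a vector that is $p$ on $\opt\cap\supp(\hat x)$ and $1$ on $\opt\setminus\supp(\hat x)$ to split it into $p\cdot f(\opt) + (1-p)\cdot f(\opt\setminus\supp(\hat x))$. I expect the decomposition $F(\text{$p$ on $\opt\cap\supp(\hat x)$, $1$ on $\opt\setminus\supp(\hat x)$}) \ge p f(\opt) + (1-p) f(\opt\setminus\supp(\hat x))$ to follow from multilinearity in the $\opt\cap\supp(\hat x)$ block together with the observation that at block-value $1$ we get $f(\opt)$, at block-value $0$ we get $f(\opt\setminus\supp(\hat x))$, and convexity is not needed because for a multilinear function an all-$p$ setting of a block gives exactly the $p$-weighted average of the $\{0,1\}^{\text{block}}$ settings, each of which is $\ge \min$ — here I would just use that the $p$-weighted average over the block's integral settings is $\ge p\cdot(\text{all-ones setting}) + (1-p)\cdot(\text{all-zeros setting})$ by submodularity (the all-ones and all-zeros corners dominate the average of corners from below in the relevant direction). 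Chaining these two steps gives exactly $(1-p)\cdot[p\cdot f(\opt) + (1-p)\cdot f(\opt\setminus\supp(\hat x))]$, as desired.
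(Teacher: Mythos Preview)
Your final plan is correct and coincides with the paper's proof: first use Lemma~\ref{lem:discardElements} with $x = p\cdot\onevect{\opt\cap\supp(\hat x)} + \onevect{\opt\setminus\supp(\hat x)}$ and $y = p\cdot\onevect{\supp(\hat x)\setminus\opt}$ (this is where Observation~\ref{obs:x_type} and the hypothesis $\onenorm{\hat x}<k$ enter, to guarantee $y_e\le p$), obtaining
\[
F\big(\hat{x}+\onevect{\opt\setminus\supp(\hat{x})}\big)\ \ge\ (1-p)\cdot F\big(p\cdot\onevect{\opt\cap\supp(\hat x)} + \onevect{\opt\setminus\supp(\hat x)}\big),
\]
and then lower bound the right-hand side by $(1-p)\big[p\,f(\opt)+(1-p)\,f(\opt\setminus\supp(\hat x))\big]$.

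The only substantive difference is in how that second inequality is justified. The paper does it in one line via the Lov\'asz extension: $F\ge\hat f$, and for the two-valued vector (values $p$ on $\opt\cap\supp(\hat x)$, $1$ on $\opt\setminus\supp(\hat x)$, $0$ elsewhere) the definition of $\hat f$ gives exactly $p\,f(\opt)+(1-p)\,f(\opt\setminus\supp(\hat x))$. Your justification (``the $p$-weighted average over the block's integral settings is $\ge p\cdot(\text{all-ones})+(1-p)\cdot(\text{all-zeros})$ by submodularity'') is the right intuition but stated imprecisely; the clean statement is that $t\mapsto F\big(t\cdot\onevect{\opt\cap\supp(\hat x)} + \onevect{\opt\setminus\supp(\hat x)}\big)$ is concave on $[0,1]$ (a standard consequence of submodularity, since $F$ is concave along non-negative directions), so its value at $t=p$ is at least the chord $p\,f(\opt)+(1-p)\,f(\opt\setminus\supp(\hat x))$. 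This is equivalent to the paper's Lov\'asz-extension step in this special case, but the Lov\'asz route is shorter and avoids the detours in your write-up.
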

\begin{proof}
	Since $\onenorm{\hat{x}}<k$, Observation~\ref{obs:x_type} guarantees that $\hat{x}_e = p$ for every $e\in\supp(\hat{x})$. Thus $\hat{x}=p\cdot\onevect{\opt\cap\supp(\hat{x})}+p\cdot\onevect{\supp(\hat{x})\setminus \opt}$, and therefore,
	\begin{align*}
		F\big(\hat{x}+\onevect{\opt\setminus\supp(\hat{x})}\big)&=
		F\big(p\cdot\onevect{\opt\cap\supp(\hat{x})}
		+p\cdot\onevect{\supp(\hat{x})\setminus \opt}
		+\onevect{\opt\setminus\supp(\hat{x})}\big)\\
		&\geq (1-p)\cdot F\big(p\cdot\onevect{\opt\cap\supp(\hat{x})}+\onevect{\opt\setminus\supp(\hat{x})}\big)\\
		&\geq (1-p)\cdot \hat{f}\big(p\cdot\onevect{\opt\cap\supp(\hat{x})}+\onevect{\opt\setminus\supp(\hat{x})}\big)\\
		&=(1-p)\cdot\Big[p\cdot f(\opt)+(1-p)\cdot f\big(\opt\setminus\supp(\hat{x})\big)\Big]\enspace,
	\end{align*}
	where the first inequality follows from Lemma~\ref{lem:discardElements}, the second inequality holds
	since the Lov\'{a}sz extension lower bounds the multilinear extension, and the last equality follows from the definition of the Lov\'{a}sz extension.
\end{proof}

In the following lemma, and the rest of the section, we use the notation $b = k^{-1} \cdot |\opt \setminus \supp(\hat{x})|$. Intuitively, the lemma holds since the fact that the elements of $\opt \setminus \supp(\hat{x})$ were not added to the support of $x$ by Algorithm~\ref{alg:main_algorithm} implies that their marginal contribution is small.
\newcommand{\LemBelowKUpper}{If $\onenorm{\hat{x}}<k$, then $F\big(\hat{x}+\onevect{\opt\setminus\supp(\hat{x})}\big)\leq F(\hat{x})+bc\tau$.}
\begin{lemma}\label{lem:belowK_upper}
	\LemBelowKUpper
\end{lemma}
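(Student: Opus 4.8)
The statement to prove is Lemma~\ref{lem:belowK_upper}: if $\|\hat{x}\|_1 < k$, then $F(\hat{x} + \onevect{\opt \setminus \supp(\hat{x})}) \leq F(\hat{x}) + bc\tau$, where $b = k^{-1} \cdot |\opt \setminus \supp(\hat{x})|$. The plan is to write the left-hand side as a telescoping sum of marginal gains obtained by adding the elements of $\opt \setminus \supp(\hat{x})$ to $\hat{x}$ one at a time (in some fixed order), and then to bound each individual marginal gain by $c\tau / k$. Summing over the $|\opt \setminus \supp(\hat{x})| = bk$ such elements then yields the bound $bk \cdot (c\tau/k) = bc\tau$.

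More concretely, first I would enumerate the elements of $\opt \setminus \supp(\hat{x})$ as $o_1, o_2, \dotsc, o_{bk}$ and write
\[
F\big(\hat{x} + \onevect{\opt \setminus \supp(\hat{x})}\big) - F(\hat{x})
= \sum_{j = 1}^{bk} \Big( F\big(\hat{x} + \onevect{\{o_1, \dotsc, o_j\}}\big) - F\big(\hat{x} + \onevect{\{o_1, \dotsc, o_{j-1}\}}\big) \Big)\enspace,
\]
and since $F$ is multilinear and each $o_j \notin \supp(\hat{x})$, the $j$-th summand equals $\partial_{o_j} F\big(\hat{x} + \onevect{\{o_1, \dotsc, o_{j-1}\}} \wedge \onevect{\groundset \setminus \{o_j\}}\big) = \partial_{o_j} F\big(\hat{x} + \onevect{\{o_1, \dotsc, o_{j-1}\}}\big)$ (the coordinate $o_j$ of that point is $0$). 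The key step is then to argue that each such partial derivative is at most $c\tau/k$. This should follow from two facts: (i) when $o_j$ arrived in the stream, Algorithm~\ref{alg:main_algorithm} did \emph{not} add it to the support (since $o_j \notin \supp(\hat{x})$), which by the algorithm's threshold rule means $\partial_{o_j} F(x^{(j)}) < c\tau/k$, where $x^{(j)}$ is the fractional solution at the time $o_j$ arrived; and (ii) submodularity of $f$ implies that $\partial_e F$ is non-increasing along coordinate-wise increases of the argument, i.e., $\partial_{o_j} F(y) \leq \partial_{o_j} F(z)$ whenever $z \leq y$ coordinate-wise. Since $x^{(j)} \leq \hat{x} \leq \hat{x} + \onevect{\{o_1, \dotsc, o_{j-1}\}}$ (the algorithm only ever increases coordinates, and we never touch $o_j$'s coordinate), we get $\partial_{o_j} F\big(\hat{x} + \onevect{\{o_1, \dotsc, o_{j-1}\}}\big) \leq \partial_{o_j} F(x^{(j)}) < c\tau/k$. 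Summing the $bk$ terms gives the claim.

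The main obstacle, or at least the point requiring the most care, is the monotonicity property (ii): I need that for a submodular $f$ with multilinear extension $F$, the partial derivative $\partial_e F(y)$ is monotone non-increasing in $y$. This is a standard fact — it is equivalent to the statement that $F$ is "DR-submodular" (or concave along non-negative directions in the appropriate sense) — and it follows from writing $\partial_e F(y) = \Exp{f(e \mid \RFunc(y))}$ together with submodularity of $f$, but I would need to either cite it or include a one-line justification. A secondary subtlety is making sure the fractional solution $x^{(j)}$ seen at the time $o_j$ arrives is genuinely dominated by $\hat{x}$; this is immediate because the algorithm only adds non-negative multiples of indicator vectors over the course of the stream, so $x$ is coordinate-wise non-decreasing in time, and $o_j \notin \supp(\hat{x})$ guarantees $o_j$'s coordinate stays $0$ throughout, so adding $\onevect{\{o_1, \dotsc, o_{j-1}\}}$ only raises coordinates other than $o_j$. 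With these two points handled, the rest is a routine telescoping argument.
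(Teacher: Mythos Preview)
Your proposal is correct and follows essentially the same approach as the paper's own proof: both use that each $e \in \opt \setminus \supp(\hat{x})$ was rejected by the threshold test (which is the only possible reason since $\|\hat{x}\|_1 < k$ throughout), and then invoke the coordinate-wise monotonicity of $\partial_e F$ implied by submodularity to carry that bound forward. The only cosmetic difference is that the paper first establishes $\partial_e F(\hat{x}) < c\tau/k$ for each such $e$ and then applies the submodular ``sum of marginals'' inequality $F(\hat{x} + \onevect{O}) \leq F(\hat{x}) + \sum_{e \in O} \partial_e F(\hat{x})$ in one shot, whereas you telescope element by element at the intermediate points $\hat{x} + \onevect{\{o_1,\dotsc,o_{j-1}\}}$; the underlying argument is the same.
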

\begin{proof}
	The elements in $\opt\setminus\supp(\hat{x})$ were rejected by Algorithm~\ref{alg:main_algorithm}, which means that their marginal contribution with respect to the fractional solution $x$ at the time of their arrival was smaller than $c\tau/k$. Since the fractional solution $x$ only increases during the execution of the algorithm, the submodularity of $f$ guarantees that the same is true also with respect to $\hat{x}$. More formally, we get
\[
	\partial_e F(\hat{x})<\frac{c\tau}{k}
	\quad \forall\; e\in  \opt \setminus \supp(\hat{x})
	\enspace.
\]
Using the submodularity of $f$ again, this implies
	\[
		F\big(\hat{x}+\onevect{\opt\setminus\supp(\hat{x})}\big)
			\leq F(\hat{x})+\mspace{-40mu}\sum_{e\in \opt\setminus\supp(\hat{x})}
			{\mspace{-40mu}\partial_e F(\hat{x})}
			\leq F(\hat{x})+|\opt \setminus \supp(\hat{x})| \cdot \frac{c\tau}{k}
			= F(\hat{x})+bc\tau\enspace.		\qedhere
	\]
\end{proof}

Combining the last two lemmata immediately yields the promised lower bound on $F(\hat{x})$. To understand the second inequality in the following corollary, recall that $\tau \leq f(\opt)$.
\begin{corollary}
\label{cor:belowK}
	If $\onenorm{\hat{x}}<k$, then $F(\hat{x})\geq
	(1-p)\cdot\Big[p\cdot f(\opt)+(1-p)\cdot f\big(\opt\setminus\supp(\hat{x})\big)\Big]-bc\tau
	\geq
	[p(1 - p) - bc]\tau + (1 - p)^2 \cdot f\big(\opt\setminus\supp(\hat{x})\big)$.
\end{corollary}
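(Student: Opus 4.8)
\textbf{Proof proposal for Corollary~\ref{cor:belowK}.}
The plan is to simply chain the bounds established in Lemma~\ref{lem:belowK_lower} and Lemma~\ref{lem:belowK_upper}, both of which apply under the hypothesis $\onenorm{\hat{x}}<k$. First I would rearrange Lemma~\ref{lem:belowK_upper} to read $F(\hat{x}) \geq F\big(\hat{x}+\onevect{\opt\setminus\supp(\hat{x})}\big) - bc\tau$, and then substitute the lower bound on $F\big(\hat{x}+\onevect{\opt\setminus\supp(\hat{x})}\big)$ supplied by Lemma~\ref{lem:belowK_lower}. This immediately yields the first claimed inequality
\[
	F(\hat{x}) \geq (1-p)\cdot\big[p\cdot f(\opt)+(1-p)\cdot f\big(\opt\setminus\supp(\hat{x})\big)\big] - bc\tau \enspace.
\]

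For the second inequality I would expand the right-hand side as $p(1-p)\cdot f(\opt) + (1-p)^2\cdot f\big(\opt\setminus\supp(\hat{x})\big) - bc\tau$ and then invoke the two standing facts that $f(\opt)\geq\tau$ (by the assumption on the estimate) and $p\in(0,1)$, so the coefficient $p(1-p)$ is nonnegative. Replacing $f(\opt)$ by the smaller quantity $\tau$ in the term $p(1-p)\cdot f(\opt)$ gives
\[
	F(\hat{x}) \geq p(1-p)\tau + (1-p)^2\cdot f\big(\opt\setminus\supp(\hat{x})\big) - bc\tau = \big[p(1-p)-bc\big]\tau + (1-p)^2\cdot f\big(\opt\setminus\supp(\hat{x})\big) \enspace,
\]
which is exactly the second claimed bound.

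There is essentially no obstacle here: the corollary is a purely mechanical consequence of the two preceding lemmata together with the elementary inequality $f(\opt)\geq\tau$, and the only point that needs a word of justification is that the substitution $f(\opt)\mapsto\tau$ is in the right direction, which follows since its multiplier $p(1-p)$ is nonnegative for $p\in(0,1)$.
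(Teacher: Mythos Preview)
Your proposal is correct and matches the paper's own justification exactly: the paper simply states that the first inequality comes from combining Lemmata~\ref{lem:belowK_lower} and~\ref{lem:belowK_upper}, and that the second inequality follows since $\tau\leq f(\opt)$. You have just written out these two steps in full.
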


Our next step is to get a lower bound on the expected value of $f(S_2)$. One easy way to get such a lower bound is to observe that $\opt \cap \supp(\hat{x})$ is a subset of the support of $\hat{x}$ of size at most $k$, and thus, is a feasible solution for $\offlinealg$ to return; which implies $\Exp{f(S_2)} \geq \alpha \cdot f(\opt \cap \supp(\hat{x}))$ since the algorithm $\offlinealg$ used to find $S_2$ is an $\alpha$-approximation algorithm. The following lemma proves a more involved lower bound by considering the vector $(b \hat{x}) \vee \onevect{\opt \cap \supp(\hat{x})}$ as a fractional feasible solution (using the rounding methods discussed above it, it can be converted into an integral feasible solution of at least the same value). The proof of the lemma lower bounds the value of the vector $(b \hat{x}) \vee \onevect{\opt \cap \supp(\hat{x})}$ using the concavity of the function $F((t \cdot \hat{x}) \vee \onevect{\opt \cap \supp(\hat{x})})$ as well as ideas used in the proofs of the previous claims.
\begin{lemma} \label{lem:belowK}
If $\onenorm{\hat{x}}<k$, then $\Exp{f(S_2)} \geq \alpha b(1 - p - cb)\tau + \alpha(1 - b) \cdot f(\opt \cap \supp(\hat{x}))$.
\end{lemma}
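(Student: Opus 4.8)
The plan is to follow the blueprint of Lemma~\ref{lem:belowKMain}, using the vector $(b\hat{x}) \vee \onevect{\opt \cap \supp(\hat{x})}$ as a feasible fractional solution for $\offlinealg$. Write $P = \opt \cap \supp(\hat{x})$, $Q = \opt \setminus \supp(\hat{x})$ and $R = \supp(\hat{x}) \setminus \opt$, so $b = |Q|/k$. First I would check feasibility: since $\onenorm{\hat{x}} < k$, Observation~\ref{obs:x_type} gives $\hat{x}_e = p$ for every $e \in \supp(\hat{x})$, and hence $\onenorm{(b\hat{x}) \vee \onevect{P}} \leq |P| + b\onenorm{\hat{x}} \leq |P| + |Q| = |\opt| \leq k$ (the middle inequality uses $\onenorm{\hat{x}} < k$ and $bk = |Q|$). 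Therefore Pipage or Swap Rounding applied to this vector produces a set $A \subseteq \supp((b\hat{x}) \vee \onevect{P}) \subseteq \supp(\hat{x})$ of size at most $k$ with $f(A) \geq F((b\hat{x}) \vee \onevect{P})$, and since $\offlinealg$ is $\alpha$-approximate on the ground set $\supp(\hat{x})$ and $A$ is a feasible solution for it, $\Exp{f(S_2)} \geq \alpha f(A) \geq \alpha F((b\hat{x}) \vee \onevect{P})$.

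Next I would use concavity, again mirroring Lemma~\ref{lem:belowKMain}. Because $\hat{x}_e = p < 1$ on $\supp(\hat{x})$, we have $(t\hat{x}) \vee \onevect{P} = \onevect{P} + tp\onevect{R}$, which is affine in $t$ with a non-negative direction, so $t \mapsto F((t\hat{x}) \vee \onevect{P})$ is concave on $[0,1]$ by submodularity of $f$; evaluating at $t=b\in[0,1]$ gives
\[
	F\big((b\hat{x}) \vee \onevect{P}\big) \geq (1-b) \cdot F(\onevect{P}) + b \cdot F(\hat{x} \vee \onevect{P}) = (1-b) \cdot f(\opt \cap \supp(\hat{x})) + b \cdot F(\hat{x} \vee \onevect{P}) \enspace.
\]
Combined with $\Exp{f(S_2)} \geq \alpha F((b\hat x)\vee\onevect{P})$, it then suffices to prove the single inequality $F(\hat{x} \vee \onevect{P}) \geq (1 - p - cb)\tau$, which I expect to be the crux of the argument.

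To establish that bound I would add the rejected optimum elements back in: since $P \cup Q = \opt$ is disjoint from $R$, we have $(\hat{x} \vee \onevect{P}) + \onevect{Q} = \onevect{\opt} + p\onevect{R}$. On one side, Lemma~\ref{lem:discardElements} applied with $x = \onevect{\opt}$ and $y = p\onevect{R}$ gives $F(\onevect{\opt} + p\onevect{R}) \geq (1-p) F(\onevect{\opt}) = (1-p) f(\opt) \geq (1-p)\tau$, using $\tau \le f(\opt)$. On the other side, each $e \in Q$ was rejected by Algorithm~\ref{alg:main_algorithm} and $x$ only grows, so $\partial_e F(\hat{x}) < c\tau/k$; since $\hat{x} \vee \onevect{P} \geq \hat{x}$ coordinatewise (and the two agree on the coordinates of $Q$, which are $0$), submodularity gives $\partial_e F(\hat{x} \vee \onevect{P}) \leq \partial_e F(\hat{x}) < c\tau/k$, whence, telescoping over $Q$ exactly as in the proof of Lemma~\ref{lem:belowK_upper},
\[
	F\big((\hat{x} \vee \onevect{P}) + \onevect{Q}\big) \leq F(\hat{x} \vee \onevect{P}) + |Q| \cdot \tfrac{c\tau}{k} = F(\hat{x} \vee \onevect{P}) + bc\tau \enspace.
\]
Rearranging and combining the two estimates yields $F(\hat{x} \vee \onevect{P}) \geq (1-p)\tau - bc\tau = (1 - p - cb)\tau$, and plugging this into the concavity bound above gives $\Exp{f(S_2)} \geq \alpha(1-b) \cdot f(\opt \cap \supp(\hat{x})) + \alpha b(1 - p - cb)\tau$, as claimed. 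The only delicate point is the monotonicity-of-marginals step — that raising the $P$-coordinates of $\hat{x}$ to $1$ cannot increase the marginals of the $Q$-elements — but this is immediate since those marginals are partial derivatives of the multilinear extension and $\hat{x} \vee \onevect{P} \geq \hat{x}$.
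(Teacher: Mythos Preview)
Your proof is correct and follows essentially the same approach as the paper: both exhibit the fractional vector $(b\hat{x}) \vee \onevect{\opt \cap \supp(\hat{x})}$, verify feasibility, invoke rounding to get $\Exp{f(S_2)} \geq \alpha F((b\hat{x}) \vee \onevect{\opt \cap \supp(\hat{x})})$, use concavity in $t$ to reduce to bounding $F(\hat{x} \vee \onevect{P})$, and then establish $F(\hat{x} \vee \onevect{P}) \geq (1-p-cb)\tau$ by combining Lemma~\ref{lem:discardElements} with the small-marginal bound for rejected elements of $\opt$. Your identity $(\hat{x}\vee\onevect{P}) + \onevect{Q} = \onevect{\opt} + p\onevect{R}$ is exactly the paper's $\hat{x} \vee \onevect{\opt}$, so the two arguments coincide up to notation.
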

\begin{proof}
Consider the vector $(b \hat{x}) \vee \onevect{\opt \cap \supp(\hat{x})}$. Clearly,
\begin{align*}
	\onenorm{(b \hat{x}) \vee \onevect{\opt \cap \supp(\hat{x})}}
	\leq{} &
	b \cdot \onenorm{\hat{x}} + \onenorm{\onevect{\opt \cap \supp(\hat{x})}}\\
	\leq{} &
	|\opt \setminus \supp(\hat{x})| + |\opt \cap \supp(\hat{x})|
	=
	|\opt|
	\leq
	k
	\enspace,
\end{align*}
where the second inequality holds by the definition of $b$ since $\onenorm{\hat{x}} < k$.
Thus, due to the existence of the rounding methods discussed in the beginning of the section, there must exist a set $S$ of size at most $k$ obeying $f(S) \geq F((b \hat{x}) \vee \onevect{\opt \cap \supp(\hat{x})})$. Since $S_2$ is produced by $\offlinealg$, whose approximation ratio is $\alpha$, this implies $\Exp{f(S_2)} \geq \alpha \cdot F((b \hat{x}) \vee \onevect{\opt \cap \supp(\hat{x})})$. Thus, to prove the lemma it suffices to show that $F((b \hat{x}) \vee \onevect{\opt \cap \supp(\hat{x})})$ is always at least $b(1 - p - cb)\tau + (1 - b) \cdot f(\opt \cap \supp(\hat{x}))$.

The first step towards proving the last inequality is getting a lower bound on $F(\hat{x} \vee \onevect{\opt \cap \supp(\hat{x})})$. Recall that we already showed in the proof of Lemma~\ref{lem:belowK_upper} that
\[
	\partial_e F(\hat{x})<\frac{c\tau}{k}
	\quad \forall\; e\in  \opt \setminus \supp(\hat{x})
	\enspace.
\]
Thus, the submodularity of $f$ implies
\begin{align*}
	F(\hat{x} \vee \onevect{\opt}&)
	\leq
	F(\hat{x} \vee \onevect{\opt \cap \supp(\hat{x})}) + \sum_{e \in \opt \setminus \supp(\hat{x})} \mspace{-27mu} \partial_e F(\hat{x})\\
	\leq{} &
	F(\hat{x} \vee \onevect{\opt \cap \supp(\hat{x})}) + \frac{c\tau \cdot |\opt \setminus \supp(\hat{x})|}{k}
	=
	F(\hat{x} \vee \onevect{\opt \cap \supp(\hat{x})}) + cb\tau
	\enspace.
\end{align*}
Rearranging this inequality yields
\[
	F(\hat{x} \vee \onevect{\opt \cap \supp(\hat{x})})
	\geq
	F(\hat{x} \vee \onevect{\opt}) - cb\tau
	\geq
	(1 - p) \cdot f(\opt) - cb\tau
	\geq
	(1 - p - cb)\tau
	\enspace,
\]
where the second inequality holds by Lemma~\ref{lem:discardElements} since Observation~\ref{obs:x_type} guarantees that every coordinate of $\hat{x}$ is either $0$ or $p$. This gives us the promised lower bound on $F(\hat{x} \vee \onevect{\opt \cap \supp(\hat{x})})$.

We now note that the submodularity of $f$ implies that $F((t \cdot \hat{x}) \vee \onevect{\opt \cap \supp(\hat{x})})$ is a concave function of $t$ within the range $[0, 1]$. Since $b$ is inside this range,
\begin{align*}
	F((b \hat{x}) \vee \onevect{\opt \cap \supp(\hat{x})})
	\geq{} &
	b \cdot F(\hat{x} \vee \onevect{\opt \cap \supp(\hat{x})})
	+
	(1 - b) \cdot f(\opt \cap \supp(\hat{x}))\\
	\geq{} &
	b(1 - p - cb)\tau + (1 - b) \cdot f(\opt \cap \supp(\hat{x}))
	\enspace,
\end{align*}
which completes the proof of the lemma.
\end{proof}

Using the last two claims we can now obtain a lower bound on the value of the solution of Algorithm~\ref{alg:main_algorithm} in the case of $\onenorm{\hat{x}} < k$ which is a function of $\alpha$, $\tau$ and $p$ alone. We note that both the guarantees of Corollary~\ref{cor:belowK} and Lemma~\ref{lem:belowK} are lower bounds on the expected value of the output of the algorithm in this case since $\Exp{f(S_1)} \geq F(\hat{x})$. Thus, any convex combination of these guarantees is also such a lower bound, and the proof of the following corollary basically proves a lower bound for one such convex combination---for the specific value of $c$ stated in the corollary.
\newcommand{\CorBelowKConclusion}{If $\onenorm{\hat{x}}<k$ and $c$ is set to $\frac{\alpha(1 - p)}{\alpha + 1}$, then $\Exp{\max\{f(S_1), f(S_2)\}}\geq \frac{(1 - p)\alpha \tau}{\alpha + 1}$.}
\begin{corollary}
\label{cor:belowK_conclusion}
\CorBelowKConclusion
\end{corollary}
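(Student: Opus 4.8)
The plan is to bound $\Exp{\max\{f(S_1),f(S_2)\}}$ below by a convex combination of the guarantee of Corollary~\ref{cor:belowK} (which, via $\Exp{f(S_1)}\ge F(\hat x)$, controls the $S_1$ branch) and the guarantee of Lemma~\ref{lem:belowK} (which controls the $S_2$ branch), with the mixing weight chosen so that the two ``unknown'' quantities $f(\opt\setminus\supp(\hat x))$ and $f(\opt\cap\supp(\hat x))$ annihilate one another via $f(\opt\setminus\supp(\hat x))+f(\opt\cap\supp(\hat x))\ge f(\opt)\ge\tau$ (submodularity and non-negativity of $f$). Concretely, for any $\lambda\in[0,1]$ we have the pointwise inequality $\max\{f(S_1),f(S_2)\}\ge\lambda f(S_1)+(1-\lambda)f(S_2)$; taking expectations and using $\Exp{f(S_1)}\ge F(\hat x)$ gives $\Exp{\max\{f(S_1),f(S_2)\}}\ge\lambda F(\hat x)+(1-\lambda)\Exp{f(S_2)}$. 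Since $\hat x$, and hence $b=k^{-1}|\opt\setminus\supp(\hat x)|$, is determined deterministically by the input (the only randomness of the algorithm lies in the rounding producing $S_1$ and possibly in $\offlinealg$), we are free to let $\lambda$ depend on $b$.

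Concretely, I would take $\lambda=\frac{\alpha(1-b)}{(1-p)^2+\alpha(1-b)}$, which is well defined and lies in $[0,1]$ because $p<1$ forces $(1-p)^2>0$. With this choice the coefficient $\lambda(1-p)^2$ of $f(\opt\setminus\supp(\hat x))$ coming from Corollary~\ref{cor:belowK} equals the coefficient $(1-\lambda)\alpha(1-b)$ of $f(\opt\cap\supp(\hat x))$ coming from Lemma~\ref{lem:belowK}; call this common value $\mu$. Bounding $\mu\,[f(\opt\setminus\supp(\hat x))+f(\opt\cap\supp(\hat x))]\ge\mu\tau$, folding $\mu$ into the $\tau$-coefficient of Corollary~\ref{cor:belowK} (using $p(1-p)+(1-p)^2=1-p$), and collecting terms, a short computation yields
\[
	\Exp{\max\{f(S_1),f(S_2)\}}\ \ge\ \frac{\alpha\,[(1-b)+b(1-p)^2]\,[(1-p)-cb]}{(1-p)^2+\alpha(1-b)}\cdot\tau.
\]
The only mildly tedious point here is checking that the numerator factors this way; it rests on the identity $(1-b)(1-p)+b(1-p)^3=(1-p)[(1-b)+b(1-p)^2]$ together with pulling $-cb$ out of the two terms that carry a factor of $c$.

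It then remains to substitute the prescribed value $c=\frac{\alpha(1-p)}{\alpha+1}$, which gives $(1-p)-cb=(1-p)\cdot\frac{1+\alpha(1-b)}{\alpha+1}$, and to verify that the resulting $\tau$-coefficient is at least $\frac{(1-p)\alpha}{\alpha+1}$. Cancelling the common positive factor $\frac{(1-p)\alpha}{\alpha+1}$ (legitimate since $p<1$), this is equivalent to $[(1-b)+b(1-p)^2]\,(1+\alpha(1-b))\ge(1-p)^2+\alpha(1-b)$; writing $q=(1-p)^2\in(0,1)$ and rearranging, it becomes $1-q\ge b(1-q)\,(1+\alpha(1-b))$, i.e.\ $b\,(1+\alpha(1-b))\le1$ after dividing by $1-q>0$. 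This last inequality holds because $\alpha\le1$ and $b\in[0,1]$ give $b(1+\alpha(1-b))\le b(2-b)=1-(1-b)^2\le1$. I expect the only real work to be bookkeeping: pinning down the weight $\lambda$ that makes the two $f(\opt\cdot)$ terms cancel exactly, and then carrying the resulting rational expression through the simplification without error — there is no conceptual obstacle once $\lambda$ is identified.
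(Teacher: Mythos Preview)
Your proof is correct and follows essentially the same approach as the paper: the identical convex-combination weight $\lambda=\frac{\alpha(1-b)}{(1-p)^2+\alpha(1-b)}$, the same use of $f(\opt\setminus\supp(\hat x))+f(\opt\cap\supp(\hat x))\ge f(\opt)\ge\tau$, and the same substitution of $c=\frac{\alpha(1-p)}{\alpha+1}$. The only difference is in the final bookkeeping: the paper minimizes the resulting rational expression over $b$ via a derivative computation, whereas you factor the numerator as $[(1-b)+b(1-p)^2][(1-p)-cb]$ and reduce to the elementary inequality $b(1+\alpha(1-b))\le 1$, which is arguably cleaner.
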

\begin{proof}
The corollary follows immediately from the non-negativity of $f$ when $p = 1$. Thus, we may assume $p < 1$ in the rest of the proof.

By the definition of $S_1$, $\Exp{f(S_1)} \geq F(\hat{x})$. Thus, by Corollary~\ref{cor:belowK} and Lemma~\ref{lem:belowK},
\begin{align*}
		\Exp{\max\curly{f(S_1),f(S_2)}}\mspace{-81mu}&\mspace{81mu}
		\geq
		\max\curly{\Exp{f(S_1)},\Exp{f(S_2)}}\\
		\geq{}&
		\max\{[p(1 - p) - bc]\tau + (1 - p)^2 \cdot f\big(\opt\setminus\supp(\hat{x})\big), \\&\alpha b(1 - p - cb)\tau + \alpha (1 - b) \cdot f(\opt \cap \supp(\hat{x}))\}\\
		\geq{} &
		\frac{\alpha(1 - b)}{\alpha(1 - b)+(1-p)^2}\cdot
		\bracks{[p(1 - p) - bc]\tau + (1 - p)^2 \cdot f\big(\opt\setminus\supp(\hat{x})\big)}\\
		&+\frac{(1-p)^2}{\alpha(1 - b)+(1-p)^2}\cdot \bracks{\alpha b(1 - p - cb)\tau + \alpha (1 - b) \cdot f(\opt \cap \supp(\hat{x}))}
		\enspace.
	\end{align*}
To keep the following calculations short, it will be useful to define $q = 1 - p$ and $d = 1 - b$. Using this notation and the fact that the submodularity and non-negativity of $f$ guarantee together $f\big(\opt\setminus\supp(\hat{x})\big) + f\big(\opt \cap \supp(\hat{x})\big) \geq f(\opt) \geq \tau$, the previous inequality implies
	\begin{align} \label{eq:lower_bound_with_b}
		\frac{\Exp{\max\curly{f(S_1),f(S_2)}}}{\alpha\tau}
		\geq\mspace{-120mu}{} &\mspace{120mu}
		\frac{(1 - b)[p(1 - p) - bc] + b(1 - p)^2(1 - p - bc) + (1 - b)(1 - p)^2}{\alpha(1 - b)+(1-p)^2} \nonumber\\
		={} &
		\frac{d[q(1 - q) - (1 - d)c] + q^2(1 - d)[q - (1 - d)c] + dq^2}{\alpha d+q^2} \nonumber\\
		={} &
		\frac{d[q - (1 - d)c] + q^2(1 - d)[q - (1 - d)c]}{\alpha d+q^2}
		=
		\frac{[d + q^2(1 - d)][q - (1 - d)c]}{\alpha d+q^2}\nonumber\\
		={} &
		\frac{q[d + q^2(1 - d)](d\alpha + 1)}{(\alpha + 1)(d\alpha+q^2)}
		=
		\frac{d^2\alpha + d\alpha q^2 - d^2\alpha q^2 + d + q^2 - dq^2}{d\alpha+q^2} \cdot \frac{q}{\alpha + 1}
		\enspace,
	\end{align}
	where the fourth equality holds by plugging in the value we assume for $c$. 
	
	The second fraction in the last expression is independent of the value of $d$, and the derivative of the first fraction in this expression as a function of $d$ is
	\begin{align*}
		\mspace{300mu}&\mspace{-300mu}
		\frac{(2d\alpha + \alpha q^2 - 2d\alpha q^2 + 1 - q^2)[d\alpha+q^2] - \alpha(d^2\alpha + d\alpha q^2 - d^2\alpha q^2 + d + q^2 - dq^2)}{[d\alpha+q^2]^2}\\
		={}&
		\frac{1 - q^2}{[d\alpha+q^2]^2} \cdot [q^2(1 - \alpha) + d\alpha(d\alpha + 2q^2)]
		\enspace,
	\end{align*}
	which is always non-negative since both $q$ and $\alpha$ are numbers between $0$ and $1$. Thus, we get that the minimal value of the expression~\eqref{eq:lower_bound_with_b} is obtained for $d = 0$ for any choice of $q$ and $\alpha$. Plugging this value into $d$ yields
	\[
		\Exp{\max\curly{f(S_1),f(S_2)}}
		\geq
		\frac{q\alpha \tau}{\alpha + 1}
		=
		\frac{(1 - p)\alpha \tau}{\alpha + 1}
		\enspace.
		\qedhere
	\]
\end{proof}

Note that Lemma~\ref{lem:Fx-exactlyK} and Corollary~\ref{cor:belowK_conclusion} both prove the same lower bound on the expectation $\Exp{\max\{f(S_1), f(S_2)\}}$ when $c$ is set to the value it is set to in Corollary~\ref{cor:belowK_conclusion} (because $\mathbb{E}[\max\{f(S_1),\allowbreak f(S_2)\}] \geq \Exp{f(S_1)} \geq F(\hat{x})$). Thus, we can summarize the results we have proved so far using the following proposition.
\begin{proposition}
\label{prp:results}
Algorithm~\ref{alg:main_algorithm} is a semi-streaming algorithm storing $O\left(k/p\right)$ elements. Moreover, for the value of the parameter $c$ given in Corollary~\ref{cor:belowK_conclusion}, the output set produced by this algorithm has an expected value of at least $\frac{\alpha\tau(1 - p)}{\alpha + 1}$.
\end{proposition}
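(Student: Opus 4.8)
The plan is to combine the ingredients already in hand, with essentially no new work. The bound on the number of stored elements is exactly the content of Observation~\ref{obs:space_simple}, so only the claim about the expected value of the output set requires attention. Throughout, I would fix $c = \frac{\alpha(1-p)}{\alpha+1}$, the value prescribed by Corollary~\ref{cor:belowK_conclusion}, and recall that Algorithm~\ref{alg:main_algorithm} returns whichever of $S_1$ and $S_2$ has the larger $f$-value, where $S_1$ is obtained by rounding $\hat{x}$ so that $\Exp{f(S_1)} \geq F(\hat{x})$.

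Next I would split into the two cases treated separately in the preceding analysis, according to whether $\onenorm{\hat{x}}$ equals $k$ or is strictly smaller. If $\onenorm{\hat{x}} = k$, Lemma~\ref{lem:Fx-exactlyK} gives $F(\hat{x}) \geq c\tau$; since $c\tau = \frac{\alpha(1-p)\tau}{\alpha+1}$ and $\Exp{f(S_1)} \geq F(\hat{x})$, the expected value of the returned set is at least $\frac{\alpha(1-p)\tau}{\alpha+1}$. If instead $\onenorm{\hat{x}} < k$, then Corollary~\ref{cor:belowK_conclusion} applies verbatim (it is stated precisely for this value of $c$) and yields $\Exp{\max\{f(S_1), f(S_2)\}} \geq \frac{(1-p)\alpha\tau}{\alpha+1}$. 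In either case the expected value of the output is at least $\frac{\alpha\tau(1-p)}{\alpha+1}$, which is exactly the claim of the proposition.

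There is no real obstacle at this stage: all the analytical content was already absorbed into Lemma~\ref{lem:Fx-exactlyK} (a telescoping sum over $\supp(\hat{x})$, multilinearity of $F$, and the $c\tau/k$ threshold rule) and into Corollary~\ref{cor:belowK_conclusion} (which balanced the two lower bounds of Corollary~\ref{cor:belowK} and Lemma~\ref{lem:belowK} through a convex combination and, crucially, the specific choice of $c$). The one point worth emphasizing in the write-up is that this choice $c = \frac{\alpha(1-p)}{\alpha+1}$ is precisely what makes the $\onenorm{\hat{x}} = k$ bound $c\tau$ coincide with the $\onenorm{\hat{x}} < k$ bound, so that the case analysis collapses to a single clean guarantee rather than a minimum of two different expressions.
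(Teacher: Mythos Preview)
Your proposal is correct and matches the paper's approach exactly: the paper also treats the proposition as a direct summary, invoking Observation~\ref{obs:space_simple} for the space bound and then observing that, with $c=\frac{\alpha(1-p)}{\alpha+1}$, Lemma~\ref{lem:Fx-exactlyK} and Corollary~\ref{cor:belowK_conclusion} yield the same lower bound on $\Exp{\max\{f(S_1),f(S_2)\}}$ in the two respective cases (using $\Exp{\max\{f(S_1),f(S_2)\}} \geq \Exp{f(S_1)} \geq F(\hat{x})$ for the $\onenorm{\hat{x}}=k$ case). Your remark that the specific choice of $c$ is what makes the two bounds coincide is precisely the point the paper is making.
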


Using the last proposition, we can now prove the following theorem. As discussed at the beginning of the section, the assumption that $\tau$ is known can be dropped at the cost of a slight increase in the number of of elements stored by the algorithm, which yields Theorem~\ref{trm:continuous}.
\begin{theorem}
\label{trm:non-polynomial-knowing-tau}
	For every constant $\varepsilon \in (0, 1]$, there exists a semi-streaming algorithm that assumes access to an estimate $\tau$ of $f(\opt)$ obeying $(1-\varepsilon/8) \cdot f(\opt)\leq\tau\leq{f(\opt)}$ and provides $(\frac{\alpha}{1 + \alpha} - \varepsilon)$-approximation for the problem of maximizing a non-negative submodular function subject to cardinality constraint. This algorithm stores at most $O(k\varepsilon^{-1})$ elements.
\end{theorem}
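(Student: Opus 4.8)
The plan is to instantiate Algorithm~\ref{alg:main_algorithm} with a suitable constant choice of its parameter $p$ together with the value of $c$ dictated by Corollary~\ref{cor:belowK_conclusion}, and then invoke Proposition~\ref{prp:results} essentially as a black box. Concretely, I would set $p = 3\varepsilon/4$ (any value in $(0, 7\varepsilon/8]$ works equally well) and $c = \frac{\alpha(1 - p)}{\alpha + 1}$. Since $\varepsilon \in (0,1]$ we have $p \in (0,1)$, so this is a legal setting of the parameters and Proposition~\ref{prp:results} applies. The space bound is then immediate: the algorithm stores $O(k/p) = O(k/\varepsilon)$ elements, as claimed.

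For the approximation guarantee, write $c_0 = \frac{\alpha}{1+\alpha}$ and note that $c_0 \in (0, 1/2]$ because $\alpha \in (0,1]$. If $c_0 - \varepsilon \le 0$ the claimed bound holds trivially from the non-negativity of $f$ (the value of the output is always at least $0$), so assume $c_0 > \varepsilon$. By Proposition~\ref{prp:results}, the expected value of the output is at least $\frac{\alpha\tau(1-p)}{\alpha+1} = c_0(1-p)\tau$, and using the hypothesis $\tau \ge (1 - \varepsilon/8) f(\opt)$ this is at least $c_0(1-p)(1-\varepsilon/8) \cdot f(\opt)$. It therefore suffices to verify that $c_0(1-p)(1-\varepsilon/8) \ge c_0 - \varepsilon$, equivalently $c_0 \cdot [\,1 - (1-p)(1-\varepsilon/8)\,] \le \varepsilon$. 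Expanding, $1 - (1-p)(1-\varepsilon/8) = p + \varepsilon/8 - p\varepsilon/8 \le p + \varepsilon/8 = 7\varepsilon/8 < \varepsilon$, and since $c_0 \le 1$ this yields the desired inequality, completing the argument.

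I do not expect a genuine obstacle here: once Proposition~\ref{prp:results} is in hand, the theorem is a one-line parameter choice followed by the short inequality chain above. The only point that requires any attention — and it is minor — is budgeting the two sources of loss, namely the slack $\varepsilon/8$ in the estimate $\tau$ and the $\approx p$ loss incurred by adding only $p$-fractions of elements, so that their sum still fits inside the target error $\varepsilon$ after being scaled by $c_0 \le 1$; the computation shows roughly a $7/8$ factor of headroom, so the choice $p = 3\varepsilon/4$ comfortably suffices (and makes clear why the estimate's accuracy was taken to be $1 - \varepsilon/8$ rather than, say, $1 - \varepsilon/2$).
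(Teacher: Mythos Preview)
Your proposal is correct and follows essentially the same approach as the paper: instantiate Algorithm~\ref{alg:main_algorithm} with $p$ equal to a small constant multiple of $\varepsilon$ and the value of $c$ from Corollary~\ref{cor:belowK_conclusion}, then invoke Proposition~\ref{prp:results} and finish with a short inequality chain combining the $(1-p)$ loss with the $(1-\varepsilon/8)$ slack in $\tau$. The only difference is cosmetic---the paper takes $p=\varepsilon/2$ rather than your $p=3\varepsilon/4$, and compresses the final estimate into the single line $\frac{\alpha\tau(1-p)}{\alpha+1}\geq\frac{\alpha(1-\varepsilon/8)(1-\varepsilon/2)}{\alpha+1}f(\opt)\geq\frac{\alpha(1-\varepsilon)}{\alpha+1}f(\opt)\geq(\frac{\alpha}{\alpha+1}-\varepsilon)f(\opt)$.
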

\begin{proof}
Consider the algorithm obtained from Algorithm~\ref{alg:main_algorithm} by setting $p = \varepsilon/2$ and $c$ as is set in Corollary~\ref{cor:belowK_conclusion}. By Proposition~\ref{prp:results}, this algorithm stores only $O(k/p) = O(k\varepsilon^{-1})$ elements, and the expected value of its output set is at least
\[
	\frac{\alpha\tau(1 - p)}{\alpha + 1}
	\geq
	\frac{\alpha(1 - \varepsilon/8)(1 - \varepsilon/2)}{\alpha + 1} \cdot f(\opt)
	\geq
	\frac{\alpha(1 - \varepsilon)}{\alpha + 1} \cdot f(\opt)
	\geq
	\left(\frac{\alpha}{\alpha + 1} - \varepsilon\right)\cdot f(\opt)
	\enspace,
\]
where the first inequality holds since $\tau$ obeys, by assumption, $\tau \geq (1 - \varepsilon/8) \cdot f(\opt)$.
\end{proof}

\section{Algorithm with true randomization} \label{app:randomized}

The variant of our algorithm that involves true randomization is shown in Algorithm~\ref{alg:randomized}. For simplicity, we
describe the algorithm assuming the knowledge of an estimate of the
value of the optimal solution, $f(\opt)$. To remove this assumption,
we use the standard technique introduced by \cite{badanidiyuru2014}.
The basic idea is to use the maximum singleton value $v=\max_{e}f(\{e\})$
as a $k$-approximation of $f(\opt)$. Given this approximation, one
can guess a $1+\varepsilon$ approximation of $f(\opt)$ from a set of
$O(\log(k/\alpha)/\varepsilon)$ values ranging from $v$ to $kv/\alpha$
(recall that $\alpha$ is the approximation guarantee of the offline algorithm
$\offlinealg$ that we use in the post-processing step). The final
streaming algorithm is simply $O(\log(k/\alpha)/\varepsilon)$ copies
of the basic algorithm running in parallel with different guesses.
As new elements appear in the stream, the value $v=\max_{e}f(\{e\})$
also increases over time and thus, existing copies of the basic algorithm
with small guesses are dropped and new copies with higher guesses
are added. An important observation is that when we introduce a new
copy with a large guess, starting it from mid-stream has exactly the
same outcome as if we started it from the beginning of the stream:
all previous elements have marginal gain much smaller than the guess
and smaller than the threshold so they would have been rejected anyway.
We refer to \cite{badanidiyuru2014} for the full details.
\begin{theorem}
There is a streaming algorithm $\stream$ for non-negative, non-monotone submodular
maximization with the following properties ($\varepsilon>0$ is any desired
accuracy and it is given as input to the algorithm):
\begin{itemize}
\item The algorithm makes a single pass over the stream.
\item The algorithm uses $O\left(\frac{k\log(k/\alpha)\log(1/\varepsilon)}{\varepsilon^{3}}\right)$
space.
\item The update time per item is $O\left(\frac{\log(k/\alpha)\log(1/\varepsilon)}{\varepsilon^{2}}\right)$
marginal gain computations.
\end{itemize}
At the end of the stream, we post-process the output of $\stream$
using any offline algorithm $\offlinealg$ for submodular maximization.
The resulting solution is a $\frac{\alpha}{1+\alpha}-\varepsilon$ approximation,
where $\alpha$ is the approximation of $\offlinealg$.
\end{theorem}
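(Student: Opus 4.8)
The plan is to mirror the three-stage structure of Section~\ref{sec:simplified}: first analyze a single run of the basic primitive assuming a good estimate $\tau$ of $f(\opt)$; then bound the value obtained when the union of the stored solutions is post-processed with $\offlinealg$; and finally remove the assumption that $\tau$ is known. The only genuinely new ingredients relative to Section~\ref{sec:simplified} are the random partition of the stream and the careful handling of the event that a single run of the greedy primitive exhausts its budget. Concretely, fix an estimate $\tau$ with $(1-\Theta(\varepsilon))\cdot f(\opt)\le\tau\le f(\opt)$ and consider a single-threshold greedy with threshold $\frac{c\tau}{k}$, where $c=\frac{\alpha}{1+\alpha}$ up to an $O(\varepsilon)$ correction, run separately on each of $m=\Theta(1/\varepsilon)$ groups into which the stream has been partitioned uniformly at random. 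For a given group there are two cases, exactly as for Lemmata~\ref{lem:Fx-exactlyKMain} and~\ref{lem:average}. If the group's solution reaches size $k$, then, as in Lemma~\ref{lem:Fx-exactlyKMain}, its value is at least $c\tau\ge(\frac{\alpha}{1+\alpha}-O(\varepsilon))\cdot f(\opt)$, which is already a good event. Otherwise the group's solution $\hat{S}_i$ has size $<k$, so every element of $\opt$ routed to that group was rejected and therefore has marginal at most $\frac{c\tau}{k}$ with respect to $\hat{S}_i$, and by submodularity also with respect to $\hat{S}_i\cup A$ for any fixed $A$. Since each element of $\opt$ lands in a uniformly random group, Lemma~\ref{lem:atMostP} applied with inclusion probability at most $1/m$ gives, on the event that no group's budget fills, a bound $\Exp{f(\hat{S}_i\cup A)}\ge(1-1/m)\cdot f(O\cup A)-O(\varepsilon)\tau$ for the average group, just as in Lemma~\ref{lem:average}, where $O=\opt\setminus\bigcup_i\hat{S}_i$.

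The delicate point --- and the source of the error in~\cite{chekuri2015streaming} --- is the conditioning on ``no budget fills''. I would handle it by a dichotomy on the probability $\pi$ that some group's budget fills during a single partition run. If $\pi\ge\varepsilon$, then running $\Theta(\varepsilon^{-1}\log(1/\varepsilon))$ independent partition runs in parallel boosts to probability $1-\varepsilon$ the event that some group somewhere fills its budget, and on that event we already have a solution of value $\ge c\tau$, which proves the theorem. If instead $\pi<\varepsilon$, then the complementary event has probability $\ge1-\varepsilon$, so conditioning on it rescales every probability by a factor in $[1,(1-\varepsilon)^{-1}]$, and the average-group bound above (together with the marginal-rejection inequality it rests on) survives with only an $O(\varepsilon)$ loss; crucially, Lemma~\ref{lem:atMostP} is never invoked on a conditional distribution, which is precisely the gap in~\cite{chekuri2015streaming}. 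Running the $\Theta(\varepsilon^{-1}\log(1/\varepsilon))$ repetitions is harmless in either case, so we may always include them.

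Next I would reproduce, at the level of the algebra, the post-processing analysis of Section~\ref{sec:simplified}: $\offlinealg$ is run on $\bigcup_i\hat{S}_i$, and considering the feasible fractional points $(b\cdot\onevect{\hat{S}_i})\vee\onevect{\opt\setminus O}$ with $b=|O|/k$ and rounding them via Pipage or Swap Rounding yields, as in Lemma~\ref{lem:belowKMain}, $\Exp{f(S_o)}\ge\alpha b\tau(1-1/m-\alpha b/(1+\alpha))+\alpha(1-b)\cdot f(\opt\setminus O)$. Taking the best among the $m$ group solutions and $S_o$, and forming the same convex combination as in Corollary~\ref{cor:belowK_conclusionMain} --- with the term $2p^{-1}$ there replaced by $O(1/m)+O(\varepsilon)$ --- gives, after the identical simplification, an output of expected value at least $(\frac{\alpha}{1+\alpha}-O(\varepsilon))\cdot f(\opt)$; rescaling $\varepsilon$ by a constant factor yields the stated approximation ratio $\frac{\alpha}{1+\alpha}-\varepsilon$.

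Finally, the assumption that $\tau$ is known is dropped with the geometric-guessing technique of~\cite{badanidiyuru2014}: using $v=\max_{e}f(\{e\})$ as a $k$-approximation of $f(\opt)$, maintain $O(\varepsilon^{-1}\log(k/\alpha))$ guesses ranging over $[v,kv/\alpha]$, spawning a new guessed instance from mid-stream (which is equivalent to starting it at the beginning of the stream, since every earlier element has marginal below the threshold of the new, large guess). Counting copies --- $O(\varepsilon^{-1}\log(k/\alpha))$ guesses, times $O(\varepsilon^{-1}\log(1/\varepsilon))$ repetitions, times $O(1/\varepsilon)$ groups, each storing at most $k$ elements --- gives $O(k\varepsilon^{-3}\log(k/\alpha)\log(1/\varepsilon))$ space, while per arriving element each (guess, repetition) pair routes the element to a single group and performs one marginal computation there, for $O(\varepsilon^{-2}\log(k/\alpha)\log(1/\varepsilon))$ marginal-value computations. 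I expect the conditioning step of the second paragraph to be the main obstacle: making the dichotomy airtight, and verifying that the $O(\varepsilon)$ losses it introduces compose correctly with those from $\tau$-estimation and from $m=\Theta(1/\varepsilon)$, is where the real work lies; everything else is essentially a re-run of Section~\ref{sec:simplified}.
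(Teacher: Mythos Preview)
Your dichotomy on the probability that a group fills its budget, and your handling of the ``full'' case via $r=\Theta(\varepsilon^{-1}\log(1/\varepsilon))$ independent repetitions, match the paper exactly. The gap is in the ``no group fills'' case, where you try to reproduce Lemma~\ref{lem:average} verbatim. That lemma relies on the fact that every element of $O=\opt\setminus\bigcup_i\hat{S}_i$ was rejected by \emph{every} group, so $f(e\mid \hat{S}_i\cup A)<c\tau/k$ for all $i$, which is what lets you pass from $\frac{1}{m}\sum_i f(\hat{S}_i\cup O\cup A)$ (where Lemma~\ref{lem:atMostP} applies) to $\frac{1}{m}\sum_i f(\hat{S}_i\cup A)$. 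Under a random partition each element is routed to, and tested by, a \emph{single} group $j(e)$; you know $f(e\mid \hat{S}_{j(e)})<c\tau/k$ but nothing about $f(e\mid \hat{S}_i)$ for $i\neq j(e)$. The chain in Lemma~\ref{lem:average} therefore breaks after the application of Lemma~\ref{lem:atMostP}, not at it---so the error in~\cite{chekuri2015streaming} is avoided, but a different one takes its place. The analog of Lemma~\ref{lem:belowKMain} inherits the same problem, since its proof invokes Lemma~\ref{lem:average} with $A=\opt\setminus O$.

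The paper's Case~2 analysis is structurally different and does not attempt to average over groups. It fixes a single group $S_{1,1}$, defines for each $e\in\opt$ the probability $p_e=\Pr_{X\sim\mathcal{V}(1/m)}[e\in\stg(X\cup\{e\})]$, and splits $\opt$ into $O_1=\{e:p_e\ge\varepsilon\}$ and $O_2=\opt\setminus O_1$. The $r$ independent repetitions are then used in this case too (not only when $\pi\ge\varepsilon$): they guarantee that each element of $O_1$ lands in $U=\bigcup_{i,j}S_{i,j}$ with probability $\ge 1-\varepsilon$. For $O_2$, the paper defines $O'_2=\{e\in O_2:e\notin\stg(V_{1,1}\cup\{e\})\}$ and proves a consistency property of single-threshold greedy (in the style of Barbosa \emph{et al.}~\cite{Barbosa2016}) showing that $\stg(V_{1,1}\cup O'_2)=S_{1,1}$ when $|S_{1,1}|<k$; this is precisely the missing ingredient that certifies small marginals of $O'_2$ with respect to $S_{1,1}$ even though those elements were never routed there. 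A Lov\'asz-extension argument then shows $\Exp{f(S_{1,1}\cup(O_1\cap U)\cup O'_2)}\ge(1-2\varepsilon)f(\opt)$, and a convex combination of $f(S_{1,1})$ with $f(T)$ (again via the Lov\'asz extension, not via the multilinear rounding you propose) finishes the bound. The upshot is that the ``marginal-rejection'' inequality you need is obtained for one group via the virtual experiment ``add $e$ to $V_{1,1}$,'' not by averaging over groups; and the repetitions do real work in the second case, collecting $O_1$ into $U$.
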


\begin{algorithm2e}[ht]
\caption{Streaming algorithm for $\max_{|S|\leq k} f(S)$. $\offlinepp$ uses any offline algorithm $\offlinealg$ with approximation $\alpha$. Lines shown in \textcolor{blue}{blue} are comments. The algorithm does \textbf{not} store the sets $V_{i,j}$, they are defined for analysis purposes only.} \label{alg:randomized}
\textcolor{blue}{$f\colon2^{V}\to\mathbb{R}_{\geq 0}$: submodular and non-negative}\\
\textcolor{blue}{$k$: cardinality constraint}\\
\textcolor{blue}{$\varepsilon\in(0,1]$: accuracy parameter}\\
\textcolor{blue}{$\threshold$: threshold}\\

$\underline{\stream(f,k,\varepsilon,\threshold)}$\\
$r\gets\Theta(\ln(1/\varepsilon)/\varepsilon)$\\
$m\gets1/\varepsilon$\\
$S_{i,j}\gets\varnothing$ for all $i\in[r],j\in[m]$\\
\textcolor{blue}{$V_{i,j}\gets\varnothing$ for all $i\in[r],j\in[m]$ //}
\textbf{\textcolor{blue}{{}not stored}}\textcolor{blue}{, defined for analysis purposes only}\\
\For{each arriving element $e$}{
  \For{$i=1$ to $r$}{
    choose an index $j\in[m]$ uniformly and independently at random\\
    \textcolor{blue}{$V_{i,j}\gets V_{i,j}\cup\{e\}$ //}\textbf{\textcolor{blue}{{}
not stored}}\textcolor{blue}{, defined for analysis purposes only}\\
    \If{$f(S_{i,j}\cup\{e\})-f(S_{i,j})\geq\threshold$ and $\left|S_{i,j}\right|<k$} {
      $S_{i,j}\gets S_{i,j}\cup\{e$\}
    }
  }
}
\Return $\left\{ S_{i,j}\colon i\in[r],j\in[m]\right\}$

\medskip{}

$\underline{\offlinepp(f,k,\varepsilon)}$\\
\textcolor{blue}{Assumes an estimate for $f(\opt)$, see text on how to remove this assumption}\\
\textcolor{blue}{Uses any offline algorithm $\offlinealg$ with approximation $\alpha$}\\
$\threshold\gets\frac{\alpha}{1+\alpha}\cdot\frac{1}{k}\cdot f(\opt)$ \textcolor{blue}{// threshold}\\
$\left\{ S_{i,j}\right\} \gets\stream(f,k,\varepsilon,\threshold)$\\
\If{$\left|S_{i,j}\right|=k$ for some $i$ and $j$} {
  \Return $S_{i,j}$
} \Else {
  $U\gets\bigcup_{i,j}S_{i,j}$\\
  $T\gets\offlinealg(f,k,U)$\\
  \Return $\arg\max\left\{ f(S_{1,1}),f(T)\right\}$
}
\end{algorithm2e}

\begin{algorithm2e}
\caption{Single threshold Greedy algorithm. The algorithm processes the elements
in the order in which they arrive in the stream, and it uses the same
threshold $\threshold$ as $\protect\stream$.} \label{alg:stgreedy}

$\underline{\stg(f,N,k,\threshold)}$:\\
$S\gets\varnothing$\\
\For{each $e\in N$ in the stream order}{
  \If{$f(S\cup\{e\})-f(S)\geq\threshold$ and $|S|<k$}{
    $S\gets S\cup\{e\}$
  }
}
\Return $S$
\end{algorithm2e}

In the remainder of this section, we analyze Algorithm~\ref{alg:randomized} and show that it achieves
a $\frac{\alpha}{1+\alpha}-\varepsilon$ approximation, where $\alpha$
is the approximation guarantee of the offline algorithm $\offlinealg$. 

We divide the analysis into two cases, depending on the probability
of the event that a set $S_{i,1}$ (for some $i\in[r]$) constructed
by $\stream$ has size $k$. For every $i\in[r]$, let $\cf_{i}$
be the event that $\left|S_{i,1}\right|=k$. Since each of the $r$
repetitions (iterations of the for loop of $\stream$) use independent
randomness to partition $V$, the events $\cf_{1},\dots,\cf_{r}$
are independent. Additionally, the events $\cf_{1},\dots,\cf_{r}$
have the same probability. We divide the analysis into two cases,
depending on whether $\pr{\cf_{1}}\geq\varepsilon$ or $\pr{\cf_{1}}<\varepsilon$.
In the first case, since we are repeating $r=\Theta(\ln(1/\varepsilon)/\varepsilon)$
times, the probability that there is a set $S_{i,j}$ of size $k$
is at least $1-\varepsilon$, and we obtain the desired approximation
since $f(S_{i,j})\geq\threshold\left|S_{i,j}\right|=\threshold k=\frac{\alpha}{1+\alpha}f(\opt)$.
In the second case, we have $\pr{\,\overline{\cf_{1}}\,}\geq1-\varepsilon$
and we argue that $\bigcup_{i,j}S_{i,j}$ contains a good solution.
We now give the formal argument for each of the cases.

\subsection{The case \texorpdfstring{$\pr{\protect\cf_{1}}\protect\geq\varepsilon$}{Pr[F1] > epsilon}}
As noted earlier, the events $\cf_{1},\dots,\cf_{r}$ are independent and
have the same probability. Thus,
\[
\pr{\,\overline{\cf_{1}\cup\dots\cup\cf_{r}}\,}\leq(1-\varepsilon)^{r}\leq\exp(-\varepsilon r)\leq\varepsilon
\]
since $r=\Theta(\ln(1/\varepsilon)/\varepsilon)$. Thus $\pr{\cf_{1}\cup\dots\cup\cf_{r}}\geq1-\varepsilon$. 

Conditioned on the event $\cf_{1}\cup\dots\cup\cf_{r}$, we obtain
the desired approximation due to the following lemma. The lemma follows
 from the fact that the marginal gain of each selected element is at least $\threshold$.
\begin{lemma}
\label{lem:stg}We have $f\left(S_{i,j}\right)\geq\threshold\left|S_{i,j}\right|$
for all $i\in[r]$, $j\in[m]$.
\end{lemma}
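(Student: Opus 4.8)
The plan is to expand $f(S_{i,j})$ as a telescoping sum over the elements in their order of insertion, exactly as was done in the proofs of Lemma~\ref{lem:Fx-exactlyKMain} and Lemma~\ref{lem:Fx-exactlyK_no_assumption}. Fix $i \in [r]$ and $j \in [m]$, and let $e_1, e_2, \dotsc, e_\ell$ denote the elements of $S_{i,j}$ in the order in which $\stream$ added them, where $\ell = |S_{i,j}|$. Note these are exactly the elements on which the \textbf{if} condition in $\stream$ fired for this particular pair $(i,j)$, and each such element was added to the current $S_{i,j}$, which at that moment equals $\{e_1, \dotsc, e_{t-1}\}$ when $e_t$ is inserted.

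Then I would write
\[
	f(S_{i,j})
	=
	f(\varnothing) + \sum_{t=1}^{\ell} \bigl( f(\{e_1, \dotsc, e_t\}) - f(\{e_1, \dotsc, e_{t-1}\}) \bigr)
	=
	f(\varnothing) + \sum_{t=1}^{\ell} f\bigl(e_t \mid \{e_1, \dotsc, e_{t-1}\}\bigr)
	\enspace,
\]
and then bound each marginal from below. The key point is that $\stream$ adds $e_t$ to $S_{i,j}$ only when $f(S_{i,j} \cup \{e_t\}) - f(S_{i,j}) \geq \threshold$ holds with $S_{i,j} = \{e_1, \dotsc, e_{t-1}\}$, i.e., $f(e_t \mid \{e_1, \dotsc, e_{t-1}\}) \geq \threshold$. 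Combining this with $f(\varnothing) \geq 0$ (non-negativity of $f$), we get
\[
	f(S_{i,j}) \geq \sum_{t=1}^{\ell} \threshold = \threshold \ell = \threshold |S_{i,j}|
	\enspace,
\]
which is the claim.

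I do not expect any genuine obstacle here; this is a routine telescoping/accounting argument and the only thing to be careful about is making sure the ``current'' $S_{i,j}$ at the time of insertion of $e_t$ is indeed the prefix $\{e_1, \dotsc, e_{t-1}\}$ — which follows because $\stream$ only ever adds elements to $S_{i,j}$ (never removes), and the order $e_1, \dotsc, e_\ell$ was defined as the insertion order. One subtle notational point worth a sentence: the analysis sets $V_{i,j}$ are irrelevant here since $S_{i,j}$ is a subset of $V_{i,j}$ and the marginal condition is checked against $S_{i,j}$ only; submodularity is not even needed for this lemma, just non-negativity and the greedy selection rule.
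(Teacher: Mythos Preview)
Your proposal is correct and essentially identical to the paper's proof: both telescope $f(S_{i,j})$ over the elements in insertion order, lower bound each marginal by $\threshold$ via the selection rule, and use $f(\varnothing)\geq 0$.
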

\begin{proof}
To simplify notation, let $S=S_{i,j}$.
Let $e_{1},\dots,e_{|S|}$ be the elements of $S$ in the order in
which they were added to $S$. Let $S^{(t)}=\{e_{1},\dots,e_{t}\}$.
We have $f(S^{(t)})-f(S^{(t-1)})=f(S^{(t-1)}\cup\{e_{t}\})-f(S^{(t-1)})\geq\threshold$
and thus
\begin{align*}
f(S)-f(\emptyset) & =\sum_{t=1}^{|S|}\left(f(S^{(t)})-f(S^{(t-1)})\right)\geq\threshold|S|
\end{align*}
\end{proof}

We can combine the two facts and obtain the desired approximation
as follows. Let $\cs$ be the random variable equal to the solution
returned by $\offlinepp$. We have
\begin{align*}
\Exp{f(\cs)} &\geq\Exp{f(\cs)\vert\cf_{1}\cup\dots\cup\cf_{r}}\pr{\cf_{1}\cup\dots\cup\cf_{r}}
 \geq(1-\varepsilon)\threshold k
 =(1-\varepsilon)\frac{\alpha}{1+\alpha}f(\opt)
\enspace.
\end{align*}

\subsection{The case \texorpdfstring{$\pr{\protect\cf_{1}}<\varepsilon$}{Pr[F1] < epsilon}}
In this case, we show that the solution $\arg\max\left\{ f(T),f(S_{1,1})\right\} $,
which is returned on the last line of $\offlinepp$, has good value in expectation.
Our analysis borrows ideas and techniques from the work of Barbosa
\emph{et al.} \cite{Barbosa2016}: the probabilities $p_{e}$ defined
below are analogous to the probabilities used in that work; the division
of $\opt$ into two sets based on these probabilities is analogous
to the division employed in Section 7.3 in that work; Lemma \ref{lem:stg-consistent}
shows a consistency property for the single threshold greedy algorithm
that is analogous to the consistency property shown for the standard
greedy algorithm and other algorithms by Barbosa \emph{et al}. We
emphasize that Barbosa \emph{et al}. use these concepts in a different
context (specifically, \emph{monotone} maximization in the \emph{distributed}
setting). When applied to our context---\emph{non-monotone} maximization
in the \emph{streaming} setting---the framework of Barbosa \emph{et
al.} requires $\Omega(\sqrt{nk})$ memory if used with a single pass
(alternatively, they use $\Omega(\min\{k,1/\varepsilon\})$ passes) and
achieves worse approximation guarantees.

\textbf{Notation and definitions.} For analysis purposes only, we
make use of the Lovasz extension $\hat{f}$. We fix an optimal solution
$\opt\in\arg\max\{f(A)\colon A\subseteq V,|A|\leq k\}$. Let $\mathcal{V}(1/m)$
be the distribution of $1/m$-samples of $V$, where a $1/m$-sample
of $V$ includes each element of $V$ independently at random with
probability $1/m$. Note that $V_{i,j}\sim\mathcal{V}(1/m)$ for every
$i\in[r]$, $j\in[m]$ (see $\stream$). Additionally, for each $i\in[r]$,
$V_{i,1},\dots,V_{i,m}$ is a partition of $V$ into $1/m$-samples. 

For a subset $N\subseteq V$, we let $\stg(N)$ be the output of the
single threshold greedy algorithm when run as follows (see also Algorithm~\ref{alg:stgreedy} for a formal description of the algorithm): the algorithm processes
the elements of $N$ \emph{in the order in which they arrive in the
stream} and it uses the same threshold $\threshold$ as $\stream$; starting
with the empty solution and continuing until the size constraint of
$k$ is reached, the algorithm adds an element to the current solution
if its marginal gain is above the threshold. Note that $S_{i,j}=\stg(V_{i,j})$
for all $i\in[r],j\in[m]$. For analysis purposes only, we also consider
$\stg(N)$ for sets $N$ that do not correspond to any set $V_{i,j}$.

For each $e\in V$, we define
\[
p_{e}=\begin{cases}
\Pr_{X\sim\mathcal{V}(1/m)}\left[e\in\stg(X\cup\{e\})\right] & \text{if }e\in\opt\\
0 & \text{otherwise} \enspace.
\end{cases}
\]
We partition $\opt$ into two sets:
\begin{align*}
O_{1} & =\left\{ e\in\opt\colon p_{e}\geq\varepsilon\right\} \\
O_{2} & =\opt\setminus O_{1} \enspace.
\end{align*}
We also define the following subset of $O_{2}$:
\[
O'_{2}=\left\{ e\in O_{2}\colon e\notin\stg\left(V_{1,1}\cup\{e\}\right)\right\}. 
\]
Note that $(O_{1},O_{2})$ is a deterministic partition of $\opt$,
whereas $O'_{2}$ is a random subset of $O_{2}$. The role of the sets $O_{1},O_{2},O'_{2}$ will become clearer in
the analysis. The intuition is that, using the repetition, we can
ensure that each element of $O_{1}$ ends up in the collected set
$U=\bigcup_{i,j}S_{i,j}$ with good probability: each iteration $i\in[r]$
ensures that an element $e\in O_{1}$ is in $S_{i,1}\cup\dots\cup S_{i,m}$
with probability $p_{e}\geq\varepsilon$ and, since we repeat $r=\Theta(\ln(1/\varepsilon)/\varepsilon)$
times, we will ensure that $\Exp{\one_{O_{1}\cap U}}\geq(1-\varepsilon)\one_{O_{1}}$.
We also have that $\Exp{\one_{O'_{2}}}\geq(1-\varepsilon)\one_{O_{2}}$:
an element $e\in O_{2}\setminus O'_{2}$ ends up being picked by $\stg$
when run on input $V_{1,1}\cup\{e\}$, which is a low probability
event for the elements in $O_{2}$; more precisely, the probability
of this event is equal to $p_{e}$ (since $V_{1,1}\sim\mathcal{V}(1/m)$)
and $p_{e}\leq\varepsilon$ (since $e\in O_{2}$). Thus $\Exp{\one_{(O_{1}\cap U)\cup O'_{2}}}\geq(1-\varepsilon)\one_{\opt}$
, which implies that the expected value of $(O_{1}\cap U)\cup O'_{2}$
is at least $(1-\varepsilon)f(\opt)$. However, whereas $O_{1}\cap U$
is available in the post-processing phase, elements of $O'_{2}$ may
not be available and they may account for most of the value of $O_{2}$.
The key insight is to show that $S_{1,1}$ makes up for the lost value
from these elements.

We start the analysis with two helper lemmas, which
follow from standard arguments that have been used in previous works.
The first of these lemmas follows from an argument based on the
Lovasz extension and its properties.

\newcommand{\LemSampleUnionOpt}{Let $0\leq u\leq v\leq1$. Let $S\subseteq V\setminus\opt$
and $O\subseteq\opt$ be random sets such that $\Exp{\one_{S}}\leq u\one_{V\setminus\opt}$
and $\Exp{\one_{O}}\geq v\one_{\opt}$. Then $\Exp{f(S\cup O)}\geq(v-u)f(\opt)$.}
\begin{lemma}
\label{lem:sample-union-opt}
\LemSampleUnionOpt
\end{lemma}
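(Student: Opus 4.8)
The plan is to lift the quantity $\Exp{f(S\cup O)}$ to the Lov\'asz extension $\hat f$, use that $\hat f$ is convex to move the expectation inside (Jensen), and then read off the bound from the explicit integral formula defining $\hat f$ together with the non-negativity of $f$. First I would observe that, since $S\subseteq V\setminus\opt$ and $O\subseteq\opt$ are disjoint, $\one_{S\cup O}=\one_S+\one_O\in[0,1]^V$, and since $\hat f$ agrees with $f$ on integral points we have $f(S\cup O)=\hat f(\one_S+\one_O)$. Taking expectations over the (possibly correlated) pair $(S,O)$ and applying the convexity of $\hat f$ (property~(1)) in the form $\Exp{\hat f(X)}\geq\hat f(\Exp X)$ gives
\[
	\Exp{f(S\cup O)}\;=\;\Exp{\hat f(\one_S+\one_O)}\;\geq\;\hat f\big(\Exp{\one_S}+\Exp{\one_O}\big)\;=\;\hat f(z),
\]
where $z:=\Exp{\one_S}+\Exp{\one_O}$. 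Here the correlation structure of $(S,O)$ plays no role, which is the whole point of going through $\hat f$. Because $S\subseteq V\setminus\opt$ and $O\subseteq\opt$, the vector $z$ satisfies $z_e=(\Exp{\one_O})_e\geq v$ for $e\in\opt$ and $z_e=(\Exp{\one_S})_e\leq u$ for $e\in V\setminus\opt$; in particular $z\in[0,1]^V$.

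The second step is to lower bound $\hat f(z)$. Using the defining formula $\hat f(z)=\Ex_{\theta\sim[0,1]}\bracks{f(\{e\in V:z_e\geq\theta\})}$, the key observation is that for every threshold $\theta\in(u,v]$ the level set $\{e:z_e\geq\theta\}$ is exactly $\opt$: every $e\in\opt$ has $z_e\geq v\geq\theta$ and is kept, while every $e\notin\opt$ has $z_e\leq u<\theta$ and is dropped. Restricting the average over $\theta$ to this window of length $v-u$ and discarding the contribution of the complementary values of $\theta$ using $f\geq0$ yields $\hat f(z)\geq(v-u)\cdot f(\opt)$, which combined with the previous display proves the lemma.

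I do not expect a real obstacle: this is a routine Lov\'asz-extension argument, and the only points that need a little care are the direction of Jensen's inequality (convexity of $\hat f$ is exactly what makes $\Exp{\hat f(X)}\geq\hat f(\Exp X)$ go the right way) and the legitimacy of the threshold window $(u,v]$, which is nonempty precisely because $u\leq v$; in the degenerate case $u=v$ the window collapses and the statement reduces to the trivial bound $\Exp{f(S\cup O)}\geq0$ coming from non-negativity of $f$.
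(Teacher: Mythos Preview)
Your proposal is correct and follows essentially the same route as the paper: lift to the Lov\'asz extension, apply Jensen via convexity to pass the expectation inside, and then use the integral definition of $\hat f$ together with non-negativity of $f$ to isolate the window $(u,v]$ where the level set equals $\opt$. The paper's proof is identical in structure and detail.
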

\begin{proof}
Let $\hat{f}$ be the Lovasz
extension of $f$. Using the fact that $\hat{f}$ is an extension and
it is convex, we obtain
\begin{align*}
\Exp{f(S\cup O)} & =\Exp{\hat{f}\left(\one_{S\cup O}\right)}\geq \hat{f}\left(\Exp{\one_{S\cup O}}\right)=\hat{f}\left(\Exp{\one_{S}}+\Exp{\one_{O}}\right)
\end{align*}
Let $\x:=\Exp{\one_{S}}+\Exp{\one_{O}}$. Note
that
\[
x_{e}=\begin{cases}
\pr{e\in S}\le u & \text{if }e\in V\setminus\opt\\
\pr{e\in O}\geq v & \text{if }e\in\opt \enspace.
\end{cases}
\]
Thus, we have
\[
\hat{f}(\x)=\int_{0}^{1}f\left(\left\{ e\in V\colon x_{e}\geq\theta\right\} \right)d\theta\geq\int_{u}^{v}f\left(\left\{ e\in V\colon x_{e}\geq\theta\right\} \right)d\theta=(v-u)f(\opt) \enspace.
\]
The first equality is the definition of $\hat{f}$. The inequality is
by the non-negativity of $f$. The second equality is due to the fact
that, for $u<\theta\leq v$, we have $\left\{ e\in V\colon x_{e}\geq\theta\right\} =\opt$.
\end{proof}

The following lemma establishes a consistency property for the $\stg$
algorithm, analogous to the consistency property shown and used by
Barbosa \emph{et al. }for algorithms such as the standard Greedy algorithm.
The proof is also very similar to the proof shown by Barbosa \emph{et
al.}
\newcommand{\LemStgConsistent}{Conditioned on the event $\left|S_{1,1}\right|<k$,
$\stg\left(V_{1,1}\cup O'_{2}\right)=\stg\left(V_{1,1}\right)=S_{1,1}$.}
\begin{lemma}
\label{lem:stg-consistent}
\LemStgConsistent
\end{lemma}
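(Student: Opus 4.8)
The plan is to prove the consistency lemma by the standard "extra elements don't get picked" argument used for Greedy-type algorithms. The claim is that running $\stg$ on $V_{1,1} \cup O_2'$ produces exactly the same trajectory as running it on $V_{1,1}$ alone, conditioned on $|S_{1,1}| < k$. The key point is the definition of $O_2'$: each element $e \in O_2'$ satisfies $e \notin \stg(V_{1,1} \cup \{e\})$, meaning that if we inserted $e$ into the stream (in its stream position) alongside $V_{1,1}$, the single-threshold greedy run would reject it. I want to upgrade this one-element-at-a-time statement to the statement that inserting \emph{all} of $O_2'$ simultaneously still changes nothing.

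\textbf{Main steps.} First, fix the realization of the randomness (so $V_{1,1}$, $S_{1,1}$, and $O_2'$ are all determined), and assume $|S_{1,1}| < k$; this means $\stg(V_{1,1})$ never hit the cardinality cap, so every element of $V_{1,1}$ was rejected purely because its marginal gain fell below $\threshold$ at the time it was processed. Second, process the elements of $V_{1,1} \cup O_2'$ in stream order and argue by induction on the prefix processed that the current greedy solution equals the corresponding prefix-solution of the run on $V_{1,1}$ alone. For an element of $V_{1,1}$, the inductive hypothesis says the current solution is identical in both runs, so the same accept/reject decision is made (and since $|S_{1,1}|<k$ the cap is never the reason). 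For an element $e \in O_2'$, I need to show it gets rejected; this is where I invoke submodularity: by the inductive hypothesis the current solution $S$ (in the combined run) equals the solution that $\stg(V_{1,1})$ had just before $e$'s stream position, which is a subset of what $\stg(V_{1,1}\cup\{e\})$ had at that moment — wait, more carefully, $\stg(V_{1,1}\cup\{e\})$ and $\stg(V_{1,1})$ agree up to $e$'s position since $e$ is the only difference and it comes later, so the solution just before $e$ is the same set $S$ in both; the definition of $O_2'$ then gives $f(e \mid S) < \threshold$, so $e$ is rejected in the combined run as well. Third, since no element of $O_2'$ is ever added, the combined run's solution always coincides with $\stg(V_{1,1})$'s, so the final output is $\stg(V_{1,1}) = S_{1,1}$.

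\textbf{The subtle point} — and the step I expect to be the main obstacle — is making sure that the solution $\stg(V_{1,1}\cup\{e\})$ has just before processing $e$ really is the same set that $\stg(V_{1,1})$ has at that point, and that this in turn equals the solution $S$ in the $V_{1,1}\cup O_2'$ run at that moment. The first equality is because $e$ is the unique symmetric-difference element between the two input sets and it is encountered at its own stream position, so every earlier decision is unaffected; the second follows from the inductive hypothesis. One must also be careful that the conditioning "$|S_{1,1}| < k$" is what licenses ignoring the cardinality constraint throughout — without it, an element of $O_2'$ could in principle be rejected in the singleton run only because the budget was full, which would not carry over. Here, though, $|S_{1,1}| < k$ guarantees the run on $V_{1,1}$ never fills up, and since adding $O_2'$ elements changes nothing the combined run doesn't fill up either, so at every step a rejection is genuinely due to the marginal-gain test. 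Assembling these observations gives $\stg(V_{1,1} \cup O_2') = \stg(V_{1,1}) = S_{1,1}$, which is the lemma.
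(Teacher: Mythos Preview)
Your proposal is correct and follows essentially the same argument as the paper's proof. The paper phrases it as a minimal-counterexample/contradiction argument (find the first stream index where the two runs diverge and derive a contradiction with the definition of $O'_2$), whereas you phrase it as a forward induction over stream prefixes; these are logically equivalent formulations of the same idea. One small remark: the passing mention of submodularity is unnecessary---neither your argument nor the paper's uses it; the rejection of each $e\in O'_2$ follows directly from $|S|<k$ together with $e\notin\stg(V_{1,1}\cup\{e\})$, exactly as you spell out after your mid-sentence correction.
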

\begin{proof}
To simplify notation, we let
$V_{1}=V_{1,1}$ and $S_{1}=S_{1,1}$. Let $X=\stg(V_{1}\cup O'_{2})$.
Suppose for contradiction that $S_{1}\neq X$. Let $e_{1},e_{2},\dots,e_{|V_{1}\cup O'_{2}|}$
be the elements of $V_{1}\cup O'_{2}$ in the order in which they
arrived in the stream. Let $i$ be the smallest index such that $\stg(\{e_{1},\dots,e_{i}\})\neq\stg(\{e_{1},\dots,e_{i}\}\cap V_{1})$.
By the choice of $i$, we have 
\[\stg(\{e_{1},\dots,e_{i-1}\})=\stg(\{e_{1},\dots,e_{i-1}\}\cap V_{1}):=A \enspace.\]
Note that $|A|<k$, since $A\subseteq S_{1}$ and $|S_{1}|<k$ by
assumption. Since $\stg(\{e_{1},\dots,e_{i}\})\neq\stg(\{e_{1},\dots,e_{i}\}\cap V_{1})$,
we must have $e_{i}\notin V_{1}$ (and thus $e_{i}\in O'_{2}\setminus V_{1}$)
and $f(A\cup\{e_{i}\})-f(A)\geq\threshold$. The latter implies that $e_{i}\in\stg(V_{1}\cup\{e_{i}\})$:
after processing all of the elements of $V_{1}$ that arrived before
$e_{i}$, the partial greedy solution is $A$; when $e_{i}$ arrives,
it is added to the solution since $|A|<k$ and $f(A\cup\{e_{i}\})-f(A)\geq\threshold$.
But then $e_{i}\notin O'_{2}$, which is a contradiction.
\end{proof}

We now proceed with the main analysis. Recall that $\offlinepp$ runs
$\offlinealg$ on $U$ to obtain a solution $T$, and
returns the better of the two solutions $S_{1,1}$ and $T$. In the following
lemma, we show that the value of this solution is proportional to
$f(S_{1,1}\cup(O_{1}\cap U))$. Note that $S_{1,1}\cup(O_{1}\cap U)$
may not be feasible, since we could have $\left|S_{1,1}\right|>\left|O_{2}\right|$,
and hence the scaling based on $\frac{\left|O_{2}\right|}{k}$.
\begin{lemma}
\label{lem:T-val}
We have 
$\max\left\{ f(S_{1,1}),f(T)\right\} \geq\frac{\alpha}{1+\alpha\left(1-\frac{\left|O_{2}\right|}{k}\right)}f(S_{1,1}\cup(O_{1}\cap U))$.
\end{lemma}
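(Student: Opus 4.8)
The plan is to view $T$ as an $\alpha$-approximate solution to the offline instance whose ground set is $U$, and to feed that instance a carefully scaled fractional point. Set $b := |O_2|/k$. Since $(O_1,O_2)$ partitions $\opt$ and $|\opt|\le k$, we have $|O_1\cap U|\le |O_1|\le k-|O_2|$, and since the single-threshold greedy procedure never exceeds the budget we have $|S_{1,1}|\le k$. I would consider the vector $x := (b\cdot\one_{S_{1,1}})\vee\one_{O_1\cap U}$, whose support lies inside $S_{1,1}\cup(O_1\cap U)\subseteq U$ and which satisfies
\[
	\onenorm{x} \le |O_1\cap U| + b\cdot|S_{1,1}| \le (k-|O_2|) + bk = k \enspace,
\]
so $x$ is a feasible fractional solution supported on $U$. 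Exactly as in the proof of Lemma~\ref{lem:belowK}, Pipage/Swap Rounding converts $x$ into a set $A\subseteq S_{1,1}\cup(O_1\cap U)\subseteq U$ with $|A|\le k$ and $f(A)\ge F(x)$; hence $f(T)\ge\alpha\cdot f(A)\ge\alpha\cdot F(x)$.

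Next I would lower bound $F(x)$. By submodularity of $f$, the map $t\mapsto F\big((t\cdot\one_{S_{1,1}})\vee\one_{O_1\cap U}\big)$ is concave on $[0,1]$ (the same fact used in Lemma~\ref{lem:belowK}), so evaluating it at $t = b = b\cdot 1 + (1-b)\cdot 0$ gives
\[
	F(x) \ge b\cdot f\big(S_{1,1}\cup(O_1\cap U)\big) + (1-b)\cdot f(O_1\cap U) \enspace.
\]
Writing $W := f\big(S_{1,1}\cup(O_1\cap U)\big)$, the submodularity and non-negativity of $f$ give $f(O_1\cap U)\ge W - f(S_{1,1})$, and combining the last three estimates yields
\[
	f(T) \ge \alpha\cdot\big(W - (1-b)\cdot f(S_{1,1})\big) \enspace.
\]

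Finally I would finish with a short case split. Abbreviate $\mu := \frac{\alpha}{1+\alpha(1-b)}$ and note $1-b = 1 - |O_2|/k$. If $f(S_{1,1})\ge\mu W$ the claim is immediate. Otherwise $f(S_{1,1})<\mu W$, and substituting into the previous display gives $f(T) > \alpha W\big(1-(1-b)\mu\big) = \mu W$, since $1-(1-b)\mu = \tfrac{1}{1+\alpha(1-b)}$. Either way $\max\{f(S_{1,1}),f(T)\}\ge\mu W$, which is exactly the statement. The step I expect to be the crux is choosing the fractional point and its scaling factor: one must pick $x = (b\cdot\one_{S_{1,1}})\vee\one_{O_1\cap U}$ with precisely $b = |O_2|/k$ so that simultaneously $\onenorm{x}\le k$ (which forces the use of both $|O_1|\le k-|O_2|$ and $|S_{1,1}|\le k$) and the concavity-plus-submodularity chain produces exactly the constant $\tfrac{\alpha}{1+\alpha(1-|O_2|/k)}$; everything after that is bookkeeping with tools already developed in this section.
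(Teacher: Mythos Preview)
Your argument is correct. It mirrors the paper's strategy of exhibiting a feasible fractional point inside $U$ to lower bound $f(T)$ and then balancing against $f(S_{1,1})$, but the machinery differs. The paper works with the Lov\'asz extension: it samples a random $X\subseteq S_{1,1}$ of size at most $|O_2|$ with $\Ex[\one_X]=\tfrac{|O_2|}{k}\one_{S_{1,1}}$, uses convexity of $\hat f$ to pass to the fractional point $\tfrac{|O_2|}{k}\one_{S_{1,1}}+\one_{(O_1\cap U)\setminus S_{1,1}}$, and then takes an explicit convex combination $(1-\theta)f(S_{1,1})+\theta f(T)$ with $\theta=\tfrac{1}{1+\alpha(1-|O_2|/k)}$, finishing via convexity and restricted scale invariance of $\hat f$. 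You instead reuse the multilinear-extension toolkit from Section~\ref{sec:simplified}: the fractional point $(b\,\one_{S_{1,1}})\vee\one_{O_1\cap U}$ is rounded via Pipage/Swap Rounding, its value is controlled by concavity of $F$ along positive directions, and the balancing is done by a threshold case split rather than an explicit convex combination. Both routes yield exactly the constant $\tfrac{\alpha}{1+\alpha(1-|O_2|/k)}$; your approach keeps the appendix consistent with the proof style of Lemma~\ref{lem:belowKMain}, while the paper's Lov\'asz-based proof avoids invoking rounding and stays self-contained within the tools already used in Appendix~\ref{app:randomized}.
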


\begin{proof}
To simplify notation, we let $S_{1}=S_{1,1}$. Let $b=\left|O_{2}\right|$.
First, we analyze $f(T)$. Let $X\subseteq S_{1}$ be a random subset
of $S_{1}$ such that $|X|\leq b$ and $\Exp{\one_{X}}=\frac{b}{k}\one_{S_{1}}$.
We can select such a subset as follows: we first choose a permutation
of $S_{1}$ uniformly at random, and let $\tilde{X}$ be the first
$s:=\min\left\{ b,\left|S_{1}\right|\right\} $ elements in the permutation.
For each element of $\tilde{X}$, we add it to $X$ with probability
$p:=|S_{1}|b / (sk)$.

Since $X\cup((O_{1}\cap U)\setminus S_{1})$ is a feasible solution
contained in $U$ and $\offlinealg$ achieves an $\alpha$-approximation,
we have
\[ f(T)\geq\alpha f(X\cup((O_{1}\cap U)\setminus S_{1})) \enspace.\]
By taking expectation over $X$ only (more precisely, the random sampling
that we used to select $X$) and using that $\hat{f}$ is a convex extension, we obtain:
\begin{align*}
f(T)
&\geq \alpha \Ex_{X}\left[f(X\cup((O_{1}\cap U)\setminus S_{1}))\right]
= \alpha \Ex_{X}\left[\hat{f}\left(\one_{X\cup((O_{1}\cap U)\setminus S_{1})}\right)\right]\\
&\geq \alpha \hat{f}\left(\Ex_{X}\left[\one_{X\cup((O_{1}\cap U)\setminus S_{1})} \right] \right)
= \alpha \hat{f}\left(\frac{b}{k}\one_{S_{1}} + \one_{(O_{1}\cap U)\setminus S_{1}}\right) \enspace.
\end{align*}
Next, we lower bound $\max\left\{ f(S_{1}),f(T)\right\} $ using a
convex combination $(1-\theta) f(S_{1}) + \theta f(T)$ with coefficient
$\theta =1/(1+\alpha\left(1-\frac{b}{k}\right))$. Note that $1-\theta=\theta\alpha\left(1-\frac{b}{k}\right)$. 
By taking this convex combination, using the previous inequality lower
bounding $f(T)$, and the convexity and restricted scale invariance of $\hat{f}$, we obtain:
\begin{align*}
&\max\left\{ f(S_{1}),f(T)\right\}
\geq (1-\theta)f(S_{1}) + \theta f(T)
=\theta\alpha \left(1-\frac{b}{k}\right) f(S_{1}) + \theta f(T)\\
&\geq \theta\alpha\left(1-\frac{b}{k}\right)\hat{f}\left(\one_{S_{1}}\right)+\theta\alpha\hat{f}\left(\frac{b}{k}\one_{S_{1}}+\one_{(O_{1}\cap U)\setminus S_{1}}\right)\\
&= \theta\alpha \left(2-\frac{b}{k}\right)\left(\frac{1-\frac{b}{k}}{2-\frac{b}{k}} \hat{f}\left(\one_{S_{1}}\right)+\frac{1}{2-\frac{b}{k}}\hat{f}\left(\frac{b}{k}\one_{S_{1}}+\one_{(O_{1}\cap U)\setminus S_{1}}\right)\right)\\
&\geq \theta\alpha \left(2-\frac{b}{k}\right) \hat{f}\left(\frac{1-\frac{b}{k}}{2-\frac{b}{k}}\one_{S_1}+\frac{1}{2-\frac{b}{k}}\left(\frac{b}{k}\one_{S_1}+\one_{(O_1\cap U)\setminus S_1}\right)\right)\\
&= \theta\alpha \left(2-\frac{b}{k}\right) \hat{f}\left(\frac{1}{2-\frac{b}{k}}\one_{S_1\cup(O_1\cap U)}\right)
\geq \frac{\alpha}{1+\alpha\left(1-\frac{b}{k}\right)}f(S_{1}\cup(O_{1}\cap U)) \enspace.
\qedhere
\end{align*}
\end{proof}

Next, we analyze the expected value of $f(S_{1,1}\cup(O_{1}\cap U))$.
We do so in two steps: first we analyze the marginal gain of $O'_{2}$
on top of $S_{1,1}$ and show that it is suitably small, and then
we analyze $f(S_{1,1}\cup(O_{1}\cap U)\cup O'_{2})$ and show that
its expected value is proportional to $f(\opt)$. We use the notation
$f(A\mid B)$ to denote the marginal gain of set $A$ on top of set $B$,
i.e., $f(A\mid B)=f(A\cup B)-f(B)$.

\newcommand{\LemOPrimeTwoGain}{We have $\Exp{f\left(O'_{2}\mid S_{1,1}\right)}\leq\threshold b+\varepsilon f(\opt)$.}
\begin{lemma}
\label{lem:O'2-gain}
\LemOPrimeTwoGain
\end{lemma}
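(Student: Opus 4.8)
The plan is to split on whether the event $\cf_1$ (that $|S_{1,1}| = k$) occurs, using the consistency property of Lemma~\ref{lem:stg-consistent} when it does not, and a crude submodular bound when it does.

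\textbf{On the event $\overline{\cf_1}$.} When $|S_{1,1}| < k$, Lemma~\ref{lem:stg-consistent} gives $\stg(V_{1,1} \cup O'_2) = \stg(V_{1,1}) = S_{1,1}$. Tracing the proof of that lemma, the partial solution maintained by $\stg$ while processing the stream $V_{1,1} \cup O'_2$ equals, after every prefix, the partial solution obtained by running $\stg$ on the corresponding prefix of $V_{1,1}$ alone, and hence is always a subset of $S_{1,1}$ --- in particular of size strictly less than $k$. Therefore, each time an element $e \in O'_2$ is processed in this run it must be rejected solely because its marginal gain over the current partial solution $A_e \subseteq S_{1,1}$ falls below the threshold, i.e.\ $f(e \mid A_e) < \threshold$; submodularity then yields $f(e \mid S_{1,1}) \leq f(e \mid A_e) < \threshold$. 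Telescoping the marginals of the elements of $O'_2$ and using submodularity once more, $f(O'_2 \mid S_{1,1}) = f(S_{1,1} \cup O'_2) - f(S_{1,1}) \leq \sum_{e \in O'_2} f(e \mid S_{1,1}) \leq |O'_2| \cdot \threshold \leq |O_2| \cdot \threshold = \threshold b$, where the penultimate inequality uses $O'_2 \subseteq O_2$.

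\textbf{On the event $\cf_1$.} Here I would simply bound $f(O'_2 \mid S_{1,1}) \leq f(O'_2) - f(\varnothing) \leq f(O'_2) \leq f(\opt)$, using submodularity (discarding $S_{1,1}$ only increases the marginal), the non-negativity of $f$, and the fact that $|O'_2| \leq |O_2| \leq |\opt| \leq k$ makes $O'_2$ a feasible set, so that $f(O'_2) \leq f(\opt)$ by optimality of $\opt$.

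\textbf{Combining.} Writing $\Exp{f(O'_2 \mid S_{1,1})} = \Exp{\one[\overline{\cf_1}] \cdot f(O'_2 \mid S_{1,1})} + \Exp{\one[\cf_1] \cdot f(O'_2 \mid S_{1,1})}$, the first term is at most $\threshold b \cdot \pr{\overline{\cf_1}} \leq \threshold b$ since $\threshold b \geq 0$, and the second term is at most $f(\opt) \cdot \pr{\cf_1} \leq \varepsilon f(\opt)$ since $f(\opt) \geq 0$ and we are in the case $\pr{\cf_1} < \varepsilon$. Adding the two estimates yields the claim. The main obstacle is the first case: one has to make sure that during the run $\stg(V_{1,1} \cup O'_2)$ the running solution never fills to size $k$, so that the rejection of each $e \in O'_2$ is attributable to the threshold condition and not to the cardinality cap --- this is precisely what conditioning on $\overline{\cf_1}$ together with the consistency lemma provides. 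A minor subtlety, which forces the case combination to be phrased with indicator variables rather than by plain conditioning, is that $f(O'_2 \mid S_{1,1})$ may be negative; it is handled by noting that both upper bounds, $\threshold b$ and $f(\opt)$, are non-negative.
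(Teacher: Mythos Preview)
Your proof is correct and follows essentially the same approach as the paper: split on whether $|S_{1,1}| = k$, use the consistency lemma (Lemma~\ref{lem:stg-consistent}) together with the threshold rejection criterion and submodularity on the event $\overline{\cf_1}$, and use the crude bound $f(O'_2 \mid S_{1,1}) \leq f(O'_2) \leq f(\opt)$ on the event $\cf_1$. Your write-up is in fact slightly more careful than the paper's in spelling out why the cardinality cap is never binding during the combined run and in handling the possible negativity of $f(O'_2 \mid S_{1,1})$ via indicators rather than conditioning.
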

\begin{proof}
As before, to simplify notation, we let $S_{1}=S_{1,1}$ and $V_{1}=V_{1,1}$.
We break down the expectation using the law of total expectation as
follows:
\begin{align*}
&\Exp{f\left(O'_{2}\mid S_{1}\right)}\\
& =\Exp{f\left(O'_{2}\mid S_{1}\right)\vert\left|S_{1}\right|<k}\cdot\underbrace{\pr{\left|S_{1}\right|<k}}_{\leq1}+\underbrace{\Exp{f\left(O'_{2}\mid S_{1}\right)\mid\left|S_{1}\right|=k}}_{\leq f(\opt)}\cdot\underbrace{\pr{\left|S_{1}\right|=k}}_{\leq\varepsilon}\\
 & \leq\Exp{f\left(O'_{2}\mid S_{1}\right)\mid\left|S_{1}\right|<k}+\varepsilon f(\opt) \enspace.
\end{align*}
Above, we have used that $f(O'_{2}\mid S_{1})\leq f(O'_{2})\leq f(\opt)$,
where the first inequality follows by submodularity. We have also
used that $\pr{\left|S_{1}\right|=k}=\pr{\cf_{1}}\leq\varepsilon$.
Thus, it only remains to show that $\Exp{f\left(O'_{2}\mid S_{1}\right)\vert\left|S_{1}\right|<k}\leq\threshold b$.

We condition on the event $\left|S_{1}\right|<k$ for the remainder
of the proof. By Lemma \ref{lem:stg-consistent}, we have $\stg(V_{1}\cup O'_{2})=S_{1}$.
Since $\left|S_{1}\right|<k$, each element of $O'_{2}\setminus S_{1}$
was rejected because its marginal gain was below the threshold when
it arrived in the stream. This, together with submodularity, implies
that
\[
f\left(O'_{2}\mid S_{1}\right)\leq\threshold\left|O'_{2}\right|\leq\threshold b \enspace.
\qedhere
\]
\end{proof}

\newcommand{\LemWithOPrimeTwo}{We have $\Exp{f(S_{1,1}\cup(O_{1}\cap U)\cup O'_{2})}\geq(1-2\varepsilon)f(\opt)$.}
\begin{lemma}
\label{lem:with-O'2}
\LemWithOPrimeTwo
\end{lemma}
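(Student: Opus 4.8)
The plan is to reduce the statement to Lemma~\ref{lem:sample-union-opt} by splitting $S_{1,1}\cup(O_1\cap U)\cup O'_2$ into its part outside $\opt$ and its part inside $\opt$. Concretely, I would set $S:=S_{1,1}\setminus\opt$ and $O:=(S_{1,1}\cap\opt)\cup(O_1\cap U)\cup O'_2$, so that $S$ is a random subset of $V\setminus\opt$, $O$ is a random subset of $\opt$, and $S\cup O=S_{1,1}\cup(O_1\cap U)\cup O'_2$. Since Lemma~\ref{lem:sample-union-opt} imposes no independence requirement between $S$ and $O$, it is harmless that both depend on the randomness of repetition $1$; throwing the extra set $S_{1,1}\cap\opt$ into $O$ only increases $\Exp{\one_O}$.

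First I would bound $\Exp{\one_S}$ from above. An element $e$ can belong to $S_{1,1}=\stg(V_{1,1})$ only if $e$ was routed to bucket $1$ in repetition $1$, which happens with probability $1/m=\varepsilon$ (each element picks its index in $[m]$ uniformly), so $\pr{e\in S}\le\pr{e\in S_{1,1}}\le\varepsilon$ for every $e\in V\setminus\opt$, i.e.\ $\Exp{\one_S}\le\varepsilon\cdot\one_{V\setminus\opt}$. Next I would bound $\Exp{\one_O}$ from below by $(1-\varepsilon)\one_\opt$. Since $(O_1,O_2)$ is a deterministic partition of $\opt$, it suffices to show $\Exp{\one_{O_1\cap U}}\ge(1-\varepsilon)\one_{O_1}$ and $\Exp{\one_{O'_2}}\ge(1-\varepsilon)\one_{O_2}$, which are exactly the two facts anticipated in the discussion preceding the lemma. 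Both follow from one structural observation: for a fixed $e$, a bucket $V_{i,j}$ conditioned on containing $e$ is distributed as $X\cup\{e\}$ with $X$ a $1/m$-sample of $V\setminus\{e\}$; hence in every repetition $i$ the probability that $e\in\bigcup_jS_{i,j}$ equals $p_e$, and similarly $\pr{e\in\stg(V_{1,1}\cup\{e\})}=p_e$. For $e\in O_1$ we have $p_e\ge\varepsilon$, so over the $r=\Theta(\ln(1/\varepsilon)/\varepsilon)$ independent repetitions $\pr{e\notin U}=(1-p_e)^r\le(1-\varepsilon)^r\le e^{-\varepsilon r}\le\varepsilon$; for $e\in O_2$ we have $p_e<\varepsilon$, so $\pr{e\in O'_2}=1-p_e\ge1-\varepsilon$.

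Finally, assuming $\varepsilon<1/2$ (otherwise $(1-2\varepsilon)f(\opt)\le 0$ and the claim is trivial by non-negativity of $f$), I would invoke Lemma~\ref{lem:sample-union-opt} with $u=\varepsilon$ and $v=1-\varepsilon$ to conclude $\Exp{f(S_{1,1}\cup(O_1\cap U)\cup O'_2)}=\Exp{f(S\cup O)}\ge(v-u)f(\opt)=(1-2\varepsilon)f(\opt)$. The only genuinely delicate point is the justification of the two coordinate-wise lower bounds on $\Exp{\one_{O_1\cap U}}$ and $\Exp{\one_{O'_2}}$ — i.e.\ identifying the relevant inclusion probabilities with $p_e$ and then boosting over the $r$ repetitions in the $O_1$ case; the $S$/$O$ decomposition, the trivial $1/m$ bound on $\pr{e\in S_{1,1}}$, and the final one-line application of Lemma~\ref{lem:sample-union-opt} are all routine.
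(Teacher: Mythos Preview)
Your proposal is correct and follows essentially the same approach as the paper's proof: the same decomposition $S=S_{1,1}\setminus\opt$, $O=(S_{1,1}\cap\opt)\cup(O_1\cap U)\cup O'_2$, the same coordinate-wise bounds $\Exp{\one_S}\le\varepsilon\one_{V\setminus\opt}$ and $\Exp{\one_O}\ge(1-\varepsilon)\one_{\opt}$ via the two cases $e\in O_1$ and $e\in O_2$, and the same final appeal to Lemma~\ref{lem:sample-union-opt} with $u=\varepsilon$, $v=1-\varepsilon$. Your explicit handling of the $\varepsilon\ge 1/2$ edge case (where $u\le v$ could fail) is a nice touch that the paper omits.
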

\begin{proof}
We apply Lemma~\ref{lem:sample-union-opt} to the following sets:
\begin{align*}
S &= S_{1,1} \setminus \opt\\
O &= (S_{1,1}\cap \opt) \cup (O_{1}\cap U) \cup O'_{2} \enspace.
\end{align*}
We show below that $\Exp{\one_{S}} \leq \varepsilon \one_{V \setminus \opt}$ and $\Exp{\one_O} \geq (1-\varepsilon) \one_{\opt}$. Assuming these bounds, we can take $u=\varepsilon$ and $v=1-\varepsilon$ in Lemma~\ref{lem:sample-union-opt}, which gives the desired result.

Since $S\subseteq S_{1,1}\subseteq V_{1,1}$ and $V_{1,1}$ is a $(1/m)$-sample
of $V$, we have $\Exp{\one_{S}}\leq\frac{1}{m}\one_{V\setminus\opt}=\varepsilon\one_{V\setminus\opt}$.
Thus it only remains to show that, for each $e\in\opt$, we have $\pr{e\in O} \geq 1-\varepsilon$. Since $(O_{1}\cap U) \cup O'_{2} \subseteq O$, it suffices to show that $\pr{e \in (O_{1}\cap U) \cup O'_{2}} \geq 1-\varepsilon$, or equivalently that $\pr{e \in (O_1 \setminus U) \cup (O_2 \setminus O'_2)} \leq \varepsilon$.

Recall that $(O_{1},O_{2})$ is a deterministic partition of $\opt$.
Thus $e$ belongs to exactly one of $O_{1}$ and $O_{2}$ and we consider
each of these cases in turn.

Suppose that $e\in O_{1}$. A single iteration of the for loop of
$\stream$ ensures that $e$ is in $S_{i,1}\cup\dots\cup S_{i,m}$
with probability $p_{e}\geq\varepsilon$. Since we perform $r=\Theta(\ln(1/\varepsilon)/\varepsilon)$
independent iterations, we have $\pr{e\notin U}\leq(1-\varepsilon)^{r}\leq\exp(-\varepsilon r)\leq\varepsilon$.

Suppose that $e\in O_{2}$. We have
\begin{align*}
\pr{e\in O_{2}\setminus O'_{2}} & =\pr{e\in\stg\left(V_{1,1}\cup\{e\}\right)}=p_{e}\leq\varepsilon \enspace,
\end{align*}
where the first equality follows from the definition of $O'_{2}$,
the second equality follows from the definition of $p_{e}$ and the
fact that $V_{1,1}\sim\mathcal{V}(1/m)$, and the inequality follows
from the definition of $O_{2}$.
\end{proof}

Lemmas \ref{lem:O'2-gain} and \ref{lem:with-O'2} immediately imply
the following:

\newcommand{\LemSOneOOneVal}{We have $\Exp{f\left(S_{1,1}\cup(O_{1}\cap U)\right)}\geq(1-3\varepsilon)f(\opt)-\threshold b$.}
\begin{lemma}
\label{lem:S1-O1-val}
\LemSOneOOneVal
\end{lemma}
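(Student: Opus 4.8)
The plan is to obtain Lemma~\ref{lem:S1-O1-val} directly from Lemmas~\ref{lem:O'2-gain} and~\ref{lem:with-O'2} via a single application of submodularity. The set $S_{1,1}\cup(O_{1}\cap U)\cup O'_{2}$ differs from $S_{1,1}\cup(O_{1}\cap U)$ only by the elements of $O'_{2}$, so the whole argument amounts to controlling the marginal gain of $O'_{2}$ on top of $S_{1,1}\cup(O_{1}\cap U)$ and showing it is no larger than the marginal gain of $O'_{2}$ on top of $S_{1,1}$ alone, which has already been bounded.

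Concretely, for every realization of the randomness I would write
\[
	f\bigl(S_{1,1}\cup(O_{1}\cap U)\bigr)
	=
	f\bigl(S_{1,1}\cup(O_{1}\cap U)\cup O'_{2}\bigr)
	-
	f\bigl(O'_{2}\mid S_{1,1}\cup(O_{1}\cap U)\bigr)
	\enspace,
\]
and then use that the marginal gain of a fixed set is non-increasing in the conditioning set (a property of submodular functions that does not require monotonicity of $f$): since $S_{1,1}\subseteq S_{1,1}\cup(O_{1}\cap U)$, we have $f\bigl(O'_{2}\mid S_{1,1}\cup(O_{1}\cap U)\bigr)\leq f\bigl(O'_{2}\mid S_{1,1}\bigr)$. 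Combining the two displays gives the pointwise bound $f\bigl(S_{1,1}\cup(O_{1}\cap U)\bigr)\geq f\bigl(S_{1,1}\cup(O_{1}\cap U)\cup O'_{2}\bigr)-f\bigl(O'_{2}\mid S_{1,1}\bigr)$.

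Taking expectations of both sides and substituting the two previous lemmas finishes the proof: Lemma~\ref{lem:with-O'2} lower bounds $\Exp{f(S_{1,1}\cup(O_{1}\cap U)\cup O'_{2})}$ by $(1-2\varepsilon)f(\opt)$, and Lemma~\ref{lem:O'2-gain} upper bounds $\Exp{f(O'_{2}\mid S_{1,1})}$ by $\threshold b+\varepsilon f(\opt)$, so that
\[
	\Exp{f\bigl(S_{1,1}\cup(O_{1}\cap U)\bigr)}
	\geq
	(1-2\varepsilon)f(\opt) - \threshold b - \varepsilon f(\opt)
	=
	(1-3\varepsilon)f(\opt) - \threshold b
	\enspace.
\]

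I do not expect any real obstacle here, since the substantive work lies in Lemmas~\ref{lem:O'2-gain} and~\ref{lem:with-O'2}. The only point to be careful about is the direction of the submodularity step (enlarging the conditioning set only decreases the marginal), and the fact that the two bounds being combined live on the same probability space — which they do, as $O'_{2}$, $U$, and $S_{1,1}$ are all defined in terms of the same random partitions generated inside $\stream$ — so that adding the two expectation bounds is legitimate.
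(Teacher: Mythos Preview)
Your proposal is correct and follows essentially the same argument as the paper: decompose $f(S_{1,1}\cup(O_{1}\cap U))$ as $f(S_{1,1}\cup(O_{1}\cap U)\cup O'_{2})-f(O'_{2}\mid S_{1,1}\cup(O_{1}\cap U))$, apply submodularity to replace the marginal by $f(O'_{2}\mid S_{1,1})$, then take expectations and invoke Lemmas~\ref{lem:with-O'2} and~\ref{lem:O'2-gain}.
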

\begin{proof}
Recall that we use the notation $f(A\mid B)=f(A\cup B)-f(B)$. We
have
\begin{align*}
f\left(S_{1,1}\cup(O_{1}\cap U)\right)
&=f\left(S_{1,1}\cup(O_{1}\cap U)\cup O'_{2}\right)-f\left(O'_{2}\mid S_{1,1}\cup(O_{1}\cap U)\right)\\
&\geq f\left(S_{1,1}\cup(O_{1}\cap U)\cup O'_{2}\right)-f\left(O'_{2}\mid S_{1,1}\right) \enspace,
\end{align*}
where the inequality is by submodularity.

By taking expectation and using Lemmas \ref{lem:O'2-gain} and \ref{lem:with-O'2},
we obtain the desired result.
\end{proof}

Finally, Lemmas \ref{lem:T-val} and \ref{lem:S1-O1-val} give the
approximation guarantee:

\newcommand{\LemApprox}{We have $\Exp{\max\left\{ f(S_{1,1}),f(T)\right\} }\geq\left(\frac{\alpha}{1+\alpha}-3\varepsilon\right)f(\opt)$.}
\begin{lemma}
\label{lem:approx}
\LemApprox
\end{lemma}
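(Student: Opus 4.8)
The plan is to simply chain the two lemmas already established for this case: Lemma~\ref{lem:T-val} is a pointwise (per-realization) bound, so I can take expectations on both sides, and then substitute the bound of Lemma~\ref{lem:S1-O1-val}. The one thing to observe before doing so is that the coefficient $\frac{\alpha}{1+\alpha(1-\power{O_2}/k)}$ appearing in Lemma~\ref{lem:T-val} is a \emph{deterministic} quantity, since $(O_1,O_2)$ is a deterministic partition of $\opt$; hence it can be pulled out of the expectation. This gives
\[
	\Exp{\max\{f(S_{1,1}),f(T)\}}
	\;\geq\;
	\frac{\alpha}{1+\alpha\bigl(1-\power{O_2}/k\bigr)}\cdot\Exp{f(S_{1,1}\cup(O_1\cap U))}
	\;\geq\;
	\frac{\alpha}{1+\alpha\bigl(1-\power{O_2}/k\bigr)}\bigl[(1-3\varepsilon)f(\opt)-\threshold b\bigr],
\]
where $b=\power{O_2}$.

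Next I would carry out the elementary algebra. Write $\beta=b/k\in[0,1]$ and recall $\threshold=\frac{\alpha}{1+\alpha}\cdot\frac{1}{k}\cdot f(\opt)$, so $\threshold b=\frac{\alpha\beta}{1+\alpha}f(\opt)$. The right-hand side becomes
\[
	\frac{\alpha}{1+\alpha(1-\beta)}\left[(1-3\varepsilon)-\frac{\alpha\beta}{1+\alpha}\right]f(\opt).
\]
Since $\alpha\beta\le\alpha\le 1+\alpha(1-\beta)$, the factor $\frac{\alpha}{1+\alpha(1-\beta)}$ lies in $(0,1]$, which lets me peel off the $3\varepsilon$ term at the cost of only an additive $3\varepsilon f(\opt)$: the expression above is at least
\[
	\frac{\alpha}{1+\alpha(1-\beta)}\cdot\frac{1+\alpha-\alpha\beta}{1+\alpha}\cdot f(\opt)-3\varepsilon f(\opt).
\]
The key (and pleasantly exact) observation is that $1+\alpha(1-\beta)=1+\alpha-\alpha\beta$, so the product of the first two fractions collapses to precisely $\frac{\alpha}{1+\alpha}$, yielding the claimed bound $\Exp{\max\{f(S_{1,1}),f(T)\}}\geq\bigl(\frac{\alpha}{1+\alpha}-3\varepsilon\bigr)f(\opt)$.

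There is no real obstacle here; the lemma is essentially a bookkeeping corollary of Lemmata~\ref{lem:T-val} and~\ref{lem:S1-O1-val}. The only points that require a moment of care are (i) justifying that the scaling coefficient is deterministic so that linearity of expectation applies cleanly, (ii) using $\frac{\alpha}{1+\alpha(1-\beta)}\le 1$ (equivalently $\alpha\beta\le 1$, which holds since $\alpha\le 1$ and $\beta\le 1$) to separate out the $3\varepsilon$ loss, and (iii) spotting the cancellation $1+\alpha(1-\beta)=1+\alpha-\alpha\beta$ that makes the bound independent of $b$. Implicit throughout is that we are in the case $\pr{\cf_1}<\varepsilon$, which is exactly the hypothesis under which Lemma~\ref{lem:S1-O1-val} was proved.
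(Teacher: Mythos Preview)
Your proof is correct and follows essentially the same approach as the paper: take expectations in Lemma~\ref{lem:T-val} (using that $|O_2|$ is deterministic), plug in Lemma~\ref{lem:S1-O1-val}, and perform the cancellation $1+\alpha(1-\beta)=1+\alpha-\alpha\beta$ together with the bound $\frac{\alpha}{1+\alpha(1-\beta)}\le 1$ to absorb the $3\varepsilon$ loss. The paper's proof is terser and collapses your last two steps into a single displayed inequality, but the content is identical.
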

\begin{proof}
By Lemmas \ref{lem:T-val} and \ref{lem:S1-O1-val}, we have
\begin{align*}
\Exp{\max\left\{ f(S_{1,1}),f(T)\right\} } & \geq\frac{\alpha}{1+\alpha\left(1-\frac{b}{k}\right)}\Exp{f\left(S_{1,1}\cup(O_{1}\cap U)\right)}\\
 & \geq\frac{\alpha}{1+\alpha\left(1-\frac{b}{k}\right)}\left((1-3\varepsilon)f(\opt)-\threshold b\right)\\
 & =\frac{\alpha}{1+\alpha\left(1-\frac{b}{k}\right)}\left((1-3\varepsilon)f(\opt)-\frac{\alpha}{1+\alpha}\frac{b}{k}f(\opt)\right)\\
 & \geq\left(\frac{\alpha}{1+\alpha}-3\varepsilon\right)f(\opt) \enspace.
\qedhere
\end{align*}
\end{proof}

\end{document}